\newcommand{\eqdef}{\stackrel{{\scriptsize\rm def}}{=}}
\definecolor{MidnightBlack}{rgb}{0.1,0.1,0.28}
\definecolor{MidnightBlue}{rgb}{0.1,0.1,0.44}
\definecolor{Black}{rgb}{0,0, 0}
\definecolor{Blue}{rgb}{0, 0 ,1}
\definecolor{Red}{rgb}{1, 0 ,0}
\definecolor{White}{rgb}{1, 1, 1}
\definecolor{Grey}{rgb}{.6, .6, .6}
\definecolor{Mygreen}{rgb}{.0, .7, .0}
\definecolor{Yellow}{rgb}{.55,.55,0}
\definecolor{Mustard}{rgb}{1.0, 0.86, 0.35}
\definecolor{applegreen}{rgb}{0.55, 0.71, 0.0}
\definecolor{darkturquoise}{rgb}{0.0, 0.81, 0.82}
\definecolor{celestialblue}{rgb}{0.29, 0.59, 0.82}
\definecolor{green_yellow}{rgb}{0.68, 1.0, 0.18}
\definecolor{crimsonglory}{rgb}{0.75, 0.0, 0.2}
\definecolor{darkmagenta}{rgb}{0.30, 0.0, 0.30}
\definecolor{internationalorange}{rgb}{1.0, 0.31, 0.0}
\definecolor{darkorange}{rgb}{1.0, 0.55, 0.0}
\tikzset{
	position/.style args={#1:#2 from #3}{
		at=($(#3)+(#1:#2)$)
	}
}
\tikzset{
  v:main/.style = {draw, circle, scale=0.65,fill=black,inner sep=0.7mm},
   v:ghost/.style = {inner sep=0pt,scale=1},
   e:main/.style = {thick},
 }
\newcommand{\yes}{{\sf yes}}
\newcommand{\remove}[1]{}
\newcommand{\cupall}{\pmb{\pmb{\bigcup}}}
\newcounter{func}
\newcommand{\newfun}[1]{f_{\refstepcounter{func}\label{#1}\thefunc}}
\newcommand{\funref}[1]{\hyperref[#1]{f_{\ref*{#1}}}} 
\newcounter{con}
\newcommand{\conref}[1]{\hyperref[#1]{c_{\ref*{#1}}}} 
\newcommand{\tw}{{\sf tw}}
\tikzset{red node/.style={draw=red, circle, fill = red, minimum size = 4pt, inner sep = 0pt}}
\tikzset{yellow node/.style={draw=yellow, circle, fill = yellow, minimum size = 4pt, inner sep = 0pt}}
\tikzset{blue node/.style={draw=celestialblue, circle, fill =celestialblue, minimum size = 4pt, inner sep = 0pt}}
\tikzset{triangle/.style = { regular polygon, regular polygon sides=3, rotate=180}}
\tikzset{small red/.style={draw=red, triangle, fill = red, minimum size = 2pt, inner sep = 0pt}}
\tikzset{black node/.style={draw, circle, fill = black, minimum size = 3pt, inner sep = 0pt}}
\tikzset{small black node/.style={draw, circle, fill = black, minimum size = 3pt, inner sep = 0pt}}
\tikzset{model node/.style={draw=celestialblue, circle, fill = celestialblue, minimum size = 5pt, inner sep = 0pt}}
\tikzset{model node small/.style={draw=celestialblue, circle, fill = celestialblue, minimum size = 3pt, inner sep = 0pt}}
\tikzset{rep node/.style={draw=red, circle, fill = red, minimum size = 3pt, inner sep = 0pt}}
\tikzset{track node 1/.style={draw, circle, fill = black, minimum size = 2pt, inner sep = 0pt}}
\tikzset{track node 2/.style={draw=black!30!white, circle, fill = black!30!white, minimum size = 2pt, inner sep = 0pt}}
\tikzset{track node 3/.style={draw=black!10!white, circle, fill = black!10!white, minimum size = 2pt, inner sep = 0pt}}
\newcommand{\mynewtheorem}[2]{
	\newaliascnt{#1}{dummy}
	\newtheorem{#1}[#1]{#2}
	\aliascntresetthe{#1}
}
\theoremstyle{plain}
\theoremstyle{definition}
\theoremstyle{remark}
\newcommand{\FPT}{{\sf FPT}\xspace}
\newcommand{\obs}{{\bf obs}}
\providecommand{\customgenericname}{}
\newcommand*\samethanks[1][\value{footnote}]{\footnotemark[#1]}
\title{Block Elimination Distance\thanks{The first author was supported by the Spanish {\it Agencia Estatal de 
Investigacion} under project MTM2017-82166-P. 
The two last authors were supported  by   the ANR projects DEMOGRAPH (ANR-16-CE40-0028), ESIGMA (ANR-17-CE23-0010), ELIT (ANR-20-CE48-0008), and the French-German Collaboration ANR/DFG Project UTMA (ANR-20-CE92-0027).}}
\author{\bigskip Öznur Yaşar Diner\thanks{Computer Engineering Department, Kadir Has University, Istanbul, Turkey. {\tt oznur.yasar@khas.edu.tr}}$^{\ ,}$\thanks{Department of Mathematics, Universitat Politècnica de Catalunya, Barcelona, Spain.}\and Archontia C. Giannopoulou\thanks{Department of Informatics and Telecommunications, National and Kapodistrian University of Athens, Athens, Greece. \texttt{archontia.giannopoulou@gmail.com}} \and
Giannos Stamoulis\samethanks[4]$^{\ ,}$\thanks{LIRMM, Univ Montpellier, Montpellier, France.  \texttt{giannos.stamoulis@lirmm.fr}}
\and
Dimitrios  M. Thilikos\thanks{LIRMM, Univ Montpellier, CNRS, Montpellier, France. \texttt{sedthilk@thilikos.info}}}
\date{}
\begin{document}
\maketitle

\begin{abstract}
\noindent We introduce the parameter of {\sl block elimination distance} as a measure of how close
a graph is to some particular graph class.  Formally, given a graph class ${\cal G}$, the class   
${\cal B}({\cal G})$ contains all graphs whose blocks belong to ${\cal G}$
and the class ${\cal A}({\cal G})$ contains all graphs where the removal of a vertex creates a graph in ${\cal G}$. Given a hereditary graph class ${\cal G}$, we recursively define 
${\cal G}^{(k)}$ so that ${\cal G}^{(0)}={\cal B}({\cal G})$ and, if $k\geq 1$, ${\cal G}^{(k)}={\cal B}({\cal A}({\cal G}^{(k-1)}))$. The {\em block elimination distance} of a graph $G$ to a graph class ${\cal G}$ is the minimum $k$ such that $G\in{\cal G}^{(k)}$ and  can be seen as an analog  of the elimination distance parameter, 
defined in~[{\sl  J. Bulian and  A. Dawar. Algorithmica, 75(2):363–382, 2016}], with the difference that  connectivity is now replaced by biconnectivity. 
We show that, for every  non-trivial hereditary class ${\cal G}$,
the problem of deciding  whether $G\in{\cal G}^{(k)}$ is {\sf NP}-complete.
We focus on the case where ${\cal G}$ is minor-closed and we study the 
minor obstruction set  of  ${\cal G}^{(k)}$ i.e., the minor-minimal graphs not in ${\cal G}^{(k)}$.
We prove that the size of the obstructions of 
${\cal G}^{(k)}$ is upper bounded by some explicit function of $k$ and the maximum size of a minor obstruction of  ${\cal G}$.
This implies that the problem of deciding  whether $G\in{\cal G}^{(k)}$ is {\sl constructively} fixed parameter tractable, when parameterized by $k$.
Our results are based on a structural characterization of the obstructions of 
${\cal B}({\cal G})$, relatively to the obstructions of ${\cal G}$.
Finally, we give two graph operations  that generate members of ${\cal G}^{(k)}$ from members of ${\cal G}^{(k-1)}$ and  we prove that this set of operations is complete for the class ${\cal O}$ of outerplanar graphs. 
This yields the {\sl identification} of all members ${\cal O}\cap{\cal G}^{(k)}$, for every $k\in\mathbb{N}$ and every non-trivial minor-closed graph class ${\cal G}$.
\end{abstract}

\newpage 
\tableofcontents
\newpage 

\section{Introduction}

Graph distance parameters are typically introduced as measures 
of ``how close'' is a graph $G$ to some given graph class ${\cal G}$. 
One of the main motivating factors behind introducing such distance parameters is the following. Let $\mathcal{G}$ be a graph class on which a computational problem $\Pi$ is tractable and let $\mathcal{G}^{(k)}$ be the class of graphs with distance at most $k$ from $\mathcal{G}$, for some notion of distance. Our aim is to exploit the ``small'' distance of the graphs in $\mathcal{G}^{(k)}$ from $\mathcal{G}$ in order to extend the tractability of $\Pi$ in the graph class $\mathcal{G}^{(k)}$.
 This approach on dealing 
with computational problems is known as {\sl parameterization by distance from triviality}~\cite{GuoHN04}. 
Usually, a graph distance measure is defined by minimizing the number of 
modification operations that can transform a graph $G$ to a graph in ${\cal G}$. 

The most classic modification operation is the {\em apex extension} of a graph class ${\cal G}$, defined as 
${\cal A}({\cal G})=\{G\mid \exists v\in V(G)~~ G\setminus v\in {\cal G}\}$
and the associated  parameter, the  {\em vertex-deletion distance}  of $G$ to ${\cal G}$, is defined 
as $\min\{k\mid G\in {\cal A}^{k}({\cal G})\}$.
The vertex-deletion distance has been extensively studied. Other, popular variants of modification operations  involve edge removals/additions/contractions or combinations of them \cite{CrespelleDFG20asurv,FominSM15grap,BodlaenderHL14grap}.

\paragraph{Elimination distance.}
Bulian and Dawar in~\cite{BulianD16,BulianD17},  introduced the {\em elimination distance} of $G$ to a class ${\cal G}$ as follows:
$${\sf ed}_{\cal G}(G)=
\begin{cases}
 0 & G\in {\cal G}\\
 \max\{{\sf ed}_{\cal G}(C) \mid C\in {\sf cc}(G)\}& \mbox{if $G\not\in {\cal G}$ and $G$ is not connected  }\\
 1+\min\{{\sf ed}_{\cal G}(G\setminus v)\mid v\in V(G)\}& \mbox{if $G\not\in {\cal G}$ and $G$ is  connected  }
 \end{cases},$$
 where by $ {\sf cc}(G)$ we denote the connected components of $G$.
 Notice that the definition ${\sf ed}_{\cal G}$, apart from vertex deletions,  
 also involves the {\em connected closure} operation,
defined as ${\cal C}({\cal G})=\{G\mid  \forall C\in {\sf cc}(G),~C\in {\cal G}\}.$
Observe that ${\sf ed}_{\cal G}(G)=0$ iff 
$G\in {\cal G}\cup {\cal C}({\cal G})$, while, for $k>0$,  ${\sf ed}_{\cal G}(G)\leq k$ iff 
$G\in {\cal G}'\cup {\cal C}({\cal G}')$, where ${\cal G}'={\cal A}(\{G\mid {\sf ed}_{\cal G}(G)\leq k-1\})$. Therefore, ${\sf ed}_{\cal G}$ can be seen as 
a {\sl non-deterministic} counterpart  of the vertex-deletion operation where 
the operation ${\cal C}$ acts as {\sl the source of non-determinism}, that is, in each level of the recursion, the vertex deletion operation is applied to 
each of the connected components of the current graph. A motivation  
of Bulian and Dawar in~\cite{BulianD16} for introducing  ${\sf ed}_{\cal G}$ was the study of the {\sc Graph Isomorphism} Problem. Indeed, it is easy to see that there 
are constants $c_{α}$ and $c_{κ}$ such that 
if {\sc Graph Isomorphism} can be solved in $O(n^{c})$  time 
in some graph class ${\cal G}$, then it can be solved in time 
$O(n^{c+c_α})$ (resp. $O(n^{c+c_κ})$) in the graph class ${\cal A}({\cal G})$ (resp. ${\cal C}({\cal G})$) (see ~\cite{HopcroftT72isom,HopcroftT71,CorneilG70}).  This implies that 
 {\sc Graph Isomorphism} can be solved in $n^{O(k)}$
 steps in the class of graphs where ${\sf ed}_{\cal G}$ is bounded by $k$. In~\cite{BulianD16},  Bulian and Dawar 
 improved this implication for the class ${\cal G}_{d}$
of graphs of width at most $d$ and proved that {\sc Graph Isomorphism} can be solved in $f(k)\cdot n^{c_{d}}$ time in the 
class $\{G\mid {\sf ed}_{{\cal G}_{d}}(G)\leq k\}$ (here $c_{d}$ is a constant depending on $d$).  In other words, for every $d$,
{\sc Graph Isomorphism} is fixed parameter tractable (in short {\sf FPT}), 
when parameterized by ${\sf ed}_{{\cal G}_{d}}$.

\paragraph{Computing the elimination distance.} Typically,  the algorithmic results on ${\sf ed}_{\cal G}$ apply for instantiations of ${\cal G}$ that are hereditary, i.e., the removal of a vertex of a graph in ${\cal G}$ results to a graph that is again in ${\cal  G}$. Bulian and Dawar in~\cite{BulianD17} examined the 
case where ${\cal G}$ is minor-closed.
One may observe that containment in ${\cal G}$
is equivalent to the exclusion of the graphs in the minor-obstruction set of ${\cal G}$, that is the set ${\sf obs}(G)$ of the
 minor-minimal graphs not in ${\cal G}$.
 Also the minor-closed property is invariant under the operations ${\cal A}$ and ${\cal C}$, therefore the class $\{G\mid {\sf ed}_{\cal G}(G)\leq k\}$
is also minor-closed.
From the Robertson and Seymour theorem, ${\sf obs}(\{G\mid {\sf ed}_{\cal G}(G)\leq k\})$ is finite, and this implies, using the algorithmic results of~\cite{RobertsonS95b,KawarabayashiKR11thed}, that
for every minor-closed  class ${\cal G}$, deciding  whether ${\sf ed}_{\cal G}(G)\leq k$ is \FPT\ (parameterized by $k$) 
by an algorithm that runs in $f(k)\cdot n^2$ time. While this approach is not constructive in general, Bulian and Dawar in~\cite{BulianD17} proved that there is an algorithm that, with input ${\sf obs}({\cal G})$ and $k$, outputs the set ${\sf obs}(\{G\mid {\sf ed}_{\cal G}(G)\leq k\})$.
This makes the aforementioned 
$f(k)\cdot n^2$-time algorithm constructive in the sense that the function $f$ is computable.
An explicit estimation of this function $f$ 
can be derived from the recent results in~\cite{SauST20anfp,SauST21kapiI,SauST21kapiII}.
The computational complexity of ${\sf ed}_{\cal G}$ was also studied for different instantiations of ${\cal G}$. In~\cite{DLindermayrSV20}  Lindermayr, Siebertz, and Vigny considered the class ${\cal G}_{d}$ of graphs of degree at most $d$. They proved that, given  $k,d$, and a planar graph $G$, deciding  whether ${\sf ed}_{{\cal G}_{d}}(G)\leq k$
is \FPT\ (parameterized by $k$ and $d$) by designing an $f(k,d)\cdot n^{O(1)}$ time algorithm. Also, in~\cite{SridharanPASK21} the same result was proved without the planarity restriction.  Moreover, in~\cite{SridharanPASK21}, 
more general hereditary classes 
where considered: let ${\cal F}$ be some finite set of graphs and let ${\cal G}_{{\cal F}}$ be the class of graphs excluding all graphs in ${\cal F}$ as induced subgraphs. It was proved in \cite{SridharanPASK21}
that for every such ${\cal F}$ the problem that, given some graph $G$ and $k$, deciding  whether ${\sf ed}_{{\cal G}_{\cal F}}(G)\leq k$ is \FPT\ (parameterized by $k$) by designing an $f(k)\cdot n^{c_{d}}$ time algorithm, where $c_{d}$ is a constant depending on $d$ (see also~\cite{AgrawalR20onth} for earlier results).

\paragraph{Block elimination distance.}
We introduce a more general version of elimination distance where the source of non-determinism is {\sl biconnectivity} instead of 
connectivity. The recursive application of the vertex deletion operation is now done on the {\sl blocks} of the current graph instead of its  components.
That way, the
 {\em block elimination distance}
of a graph $G$ to  a graph class ${\cal G}$ is defined as
$${\sf bed}_{\cal G}(G)=
\begin{cases}
 0 & G\in {\cal G}\\
 \max\{{\sf bed}_{\cal G}(B) \mid B\in {\sf bc}(G)\}& \mbox{if $G\not\in {\cal G}$ and $G$ is not biconnected  }\\
 1+\min\{{\sf bed}_{\cal G}(G\setminus v)\mid v\in V(G)\}& \mbox{if $G\not\in {\cal G}$ and $G$ is  biconnected  }
 \end{cases},$$
 where by ${\sf bc}(G)$ we denote the blocks of the graph $G$.
We stress that the ``source of non-determinism'' in  the above definition is the   {\em biconnected closure} operation,
defined as ${\cal B}({\cal G})=\{G\mid  ∀ B\in {\sf bc}(G),~B\in {\cal G}\}.$ 

Notice that the above parameter is more general than ${\sf ed}_{\cal G}$ in the sense that 
it upper bounds ${\sf ed}_{\cal G}$ but it is not upper bounded by any function of ${\sf ed}_{\cal G}$: for instance, if $G$ is a connected graph whose blocks belong to ${\cal G}$, it follows that ${\sf bed}_{\cal G}(G)=0$, while  ${\sf ed}_{\cal G}(G)$ can be arbitrarily big.\footnote{It is easy to see that ${\sf ed}_{\cal G}(G)$  is logarithmically lower-bounded by the maximum number of cut-vertices in a path of $G$.}
Moreover, ${\sf bed}_{\cal G}$, 
can also serve as a measure for the distance to triviality in the same way as ${\sf ed}_{\cal G}$. For instance, there is a constant $c_{β}$ such that if {\sc Graph Isomorphism} can be solved in $O(n^{c})$  time 
in some graph class ${\cal G}$, then it can be solved in time $O(n^{c+c_β})$ in the graph class ${\cal B}({\cal G})$ (using standard techniques, see e.g., ~\cite{HopcroftT72isom,HopcroftT71,CorneilG70}).  This implies that 
 {\sc Graph Isomorphism} can be solved in $n^{O(k)}$
 steps in the class of graphs where ${\sf bed}_{\cal G}$ is bounded by $k$. 
 Clearly, all the problems studied so far on the elimination distance have their counterpart for the block elimination distance and this is a relevant line of research, as the new parameter is more general than its connected counterpart.

\paragraph{Our results.}
As a first step, we prove that if ${\cal G}$ 
is a non-trivial\footnote{A class is {\em non-trivial} if it contains at least one non-empty graph and is not the class of all graphs.} and  hereditary class, then 
deciding  whether ${\sf bed}_{\cal G}(G)\leq k$ is an {\sf NP}-complete problem (\autoref{np_trop}).
For our proof we certify \yes-instances by using an alternative definition of ${\sf bed}_{\cal G}$ that is 
based on an (multi)-embedding of $G$ in a rooted forest (\autoref{an_topote}).

We next focus our study on the case where ${\cal G}$ is minor-closed (and non-trivial).
As the operation ${\cal B}$ maintains minor-closedness, it follows 
 that the class ${\cal G}^{(k)}:=\{G\mid {\sf bed}_{\cal G}(G)\leq k\}$
 is minor-closed for every $k$, therefore 
%
%
%
%
for every minor-closed ${\cal G}$, deciding whether $G\in {\cal G}^{(k)}$
is {\sf FPT} (parameterized by $k$).
Following the research line of~\cite{BulianD17}, we make this result {\sl constructive} by proving that it is possible
to bound the size of the  obstructions of ${\cal G}^{(k)}$
by some explicit function of $k$ and the maximum size of the obstructions of ${\cal G}$. This bound is based on the results of~\cite{AdlerGK08comp,SauST21kapiI} (\autoref{label_enchantement}) and a structural characterization of ${\sf obs}({\cal B}(G)),$ in terms of   ${\sf obs}({\cal G})$,  implying that no obstruction of ${\cal B}(G)$ has size that is more than twice the maximum size of an obstruction of ${\cal G}$ (\autoref{label_sadistically}). 

 In~\autoref{op_rosete}  we take a closer look of the obstructions of  ${\cal G}^{(k)}$.  
We give two graph operations,  called {\sl parallel join} and {\sl triangular gluing},  that generate members of ${\cal G}^{(k)}$ from members of ${\cal G}^{(k-1)}$. 
This yields that the number of obstructions of  ${\cal G}^{(k)}$  is  at least doubly exponential on $k$.
Moreover, we prove that this set of operations is {\sl complete} for the class ${\cal O}$ of outerplanar graphs. This implies the  {\sl complete identification} of ${\cal O}\cap{\cal G}^{(k)}$, for every $k\in\mathbb{N}$ and every non-trivial minor-closed graph class ${\cal G}$. This yields that the number of obstructions of  ${\cal G}^{(k)}$  is  at least doubly exponential on $k$.

The paper concludes in \autoref{label_siberia} with some further observations and open problems.

\section{Definitions and preliminary results}
\label{an_topote}

\paragraph{Sets and integers.}\label{label_orgueilleuse}
We denote by $\mathbb{N}$ the set of non-negative integers.
Given two integers $p$ and $q,$ the set $[p,q]$ refers to the set of every integer $r$ such that $p\leq r\leq q.$
For an integer $p\geq 1,$ we set $[p]=[1,p]$ and $\mathbb{N}_{\geq p}=\mathbb{N}\setminus [0,p-1].$
For a set $S,$ we denote by $2^{S}$ the set of all subsets of $S$ and, given an integer $r\in[|S|],$
we denote by $\binom{S}{r}$ the set of all subsets of $S$ of size $r.$
If ${\cal S}$ is a collection of objects where the operation $\cup$ is defined,
then we denote $\cupall {\cal S}=\bigcup_{X\in {\cal S}}X.$
Given two sets $A,B$ and a function $f: A\to B,$
for every $X\subseteq A$ we use $f(X)$ to denote the set $\{f(x)\mid x\in X\}.$

\paragraph{Basic concepts on graphs.}\label{label_declaraciones}
All graphs considered in this paper are undirected, finite, and without loops or multiple edges.
We use $V(G)$ and $E(G)$ for the sets of vertices and edges of $G$, respectively.
For simplicity, an edge $\{x,y\}$ of $G$ is denoted by $xy$ or $yx$.
We say that $H$ is a {\em subgraph} of $G$ if $V(H)\subseteq V(G)$ and $E(H)\subseteq E(G)$.
For a set of vertices $S\subseteq V(G)$, we denote by $G[S]=(S,E(G)\cap{ S\choose 2})$ the subgraph of $G$ induced by the vertices from $S$. We also define $G\setminus S=G[V(G)\setminus S]$; we write $G\setminus v$ instead of $G\setminus\{v\}$ for a single vertex set. We say that the graph $H$ is an {\em induced subgraph} of a graph $G$ if $H=G[S]$ for some $S\subseteq V(G)$. 
Given $e\in E(G)$, we also denote $G\setminus e=(V(G),E(G)\setminus \{e\})$.
For a vertex $v$, we define the set of its {\em neighbors} in $G$
by $N_G(v)=\{u\mid vu\in E(G)\}$ and  ${\sf deg}_G(v)=|N_G(v)|$ denotes the \emph{degree} of $v$ in $G$. A vertex $v$ of $G$ is called {\em isolated} if ${\sf deg}_{G}(v)=0$. 
Given two graphs $G_{1}$ and $G_{2}$ we denote their disjoint union by $G_{1}+G_{2}$.
A graph $G$ is \emph{connected} if for every two vertices $u$ and $v$, $G$ contains a path whose end-vertices are $u$ and $v$ and it is {\em biconnected} if  for every two vertices $u$ and $v$, $G$ contains a cycle  containing the vertices $u$ and $v$. 
A {\em (bi)connected component} of $G$ is a  subgraph of $G$ that is maximally (bi)connected.
We denote by ${\sf cc}(G)$ the set of all connected components of $G$.
A {\em cut-vertex} of a graph $G$ is a vertex $x\in V(G)$ such that $|{\sf cc}(G)|<|{\sf cc}(G\setminus v)|$.
A {\em bridge} of a graph $G$ is a connected subgraph on two vertices $x,y$
and the edge $e=xy$ such that $|{\sf cc}(G)|<|{\sf cc}(G\setminus e)|$.
A {\em block} of a graph is either an isolated vertex, or a bridge of $G$, or a biconnected component of $G$. We also denote by ${\sf bc}(G)$ the set of all blocks of $G$ and we say that a graph $G$ is a {\em block-graph} if ${\sf bc}(G)=\{G\}$.

We use the term {\em graph class} (or simply {\em class}) for any set of graphs (this set might be finite or infinite). 
We say that a graph class is {\em non-trivial} if it contains at least one non-empty graph and does not contain all graphs.
We say that a class ${\cal G}$ is {\em hereditary}  if every induced subgraph  of a graph in $\mathcal{G}$ belongs also to ${\cal G}$.
Notice that both operations ${\cal A}$ and ${\cal B}$ maintain the property of being non-trivial and hereditary.
We denote by ${\cal E}$ the class of the edgeless graphs.

\paragraph{Some observations.}
In this paper we consider only classes that are non-trivial and hereditary. This implies that 
${\cal G}\subseteq {\cal A}({\cal G})$.  Notice that  this assumption is necessary as 
 $\{K_{1}\}\nsubseteq{\cal A}(\{K_{1}\})=\{K_{2}\}$ ($\{K_{1}\}$ is non-hereditary) and  $\{K_0\}\nsubseteq{\cal A}(\{K_0\})=\{K_{1}\}$ ($\{K_0\}$ is not non-trivial). Also the hereditarity of ${\cal G}$ implies that 
 ${\cal G}\subseteq {\cal B}({\cal G})$ and hereditarity is necessary for this as, for example,  $\{P_3\}\nsubseteq{\cal B}(\{P_3\})=\{K_0\}$.
However, ${\cal G}\subseteq {\cal B}({\cal G})$ also holds  for the two finite classes that are not non-trivial, i.e., ${\cal B}(\{\})=\{K_{0}\}$ and ${\cal B}(\{K_0\})=\{K_{0}\}$. We also exclude the class of all graphs as, in this case,  ${\cal A}$ and ${\cal B}$ do not generate new classes.

Given a $k\in\mathbb{N}$, we define ${\cal G}^{(k)}=\{G\mid {\sf bed}_{\cal G}(G)\leq  k\}.$ Observe that, according to the definition of  ${\sf bed}_{\cal G}$, ${\cal G}^{(0)}={\cal G}\cup {\cal B}({\cal G})$ while, for $k>0$,  ${\cal G}^{(k)}={\cal A}({\cal G}^{(k-1)})\cup {\cal B}({\cal A}({\cal G}^{(k-1)}))$. This, together with  the fact that  ${\cal G}\subseteq {\cal B}({\cal G})$ implies  that 
\vspace{-8mm}

\begin{eqnarray}
{\cal G}^{(k)}= \overbrace{{\cal B}({\cal A}(\cdots {\cal B}({\cal A}}^{\text{$k$ times}}({\cal B}({\cal G})))\cdots)).\label{label_effectiveness}
\end{eqnarray}
Observe also that for every non-trivial and hereditary  class ${\cal G}$, ${\cal B}({\cal G})={\cal B}({\cal B}({\cal G}))$.  This implies that ${\sf bed}_{\cal G}$
and ${\sf bed}_{{\cal B}({\cal G})}$ are the same parameter.

\paragraph{An alternative definition.}
\label{dr_oteros}

A {\em rooted forest} is a pair $(F,R)$ where $F$ is an acyclic graph and 
$R\subseteq V(F)$ such that each connected component of $F$ contains exactly one vertex of $R$, its {\em root}. A vertex $t\in V(F)$ is a {\em leaf} of $F$ if either $t\in R$ and ${\sf deg}_{F}(t)=0$  or $t\not\in R$ and ${\sf deg}_{F}(t)=1$. We use $L(F,R)$ in order to denote the leaves of $(F,R)$.
Given $t,t'\in V(F)$ we say that $t\leq_{F,R}t'$ if there is a path from $t'$ to some root in $R$ that contains $t$. If neither $t\leq_{F,R}t'$ nor $t'\leq_{F,R}t$ then we say that $t$ and $t'$ are {\em incomparable} in $(F,R)$.
A {\em $(F,R)$-antichain} is a non-empty set $C$ 
of pairwise incomparable vertices of $F$. An  $(F,R)$-antichain is {\em non-trivial} if it contains at least two elements.

Given a vertex $t\in V(F)$, we define its {\em descendants} in $(F,R)$
as the set ${\sf d}_{F,R}(t)=\{t'\in V(F)\mid t\leq_{F,R}t'\}.$
The {\em children} of a vertex $q\in V(F)$ in $(F,R)$ are the descendants of $q$ in $(F,R)$ that are adjacent to $q$ in $F.$
The {\em depth} of  a rooted forest  $(F,R)$ is the maximum number of vertices 
in a path 
between a leaf and the root of the connected component of $F$ where this leaf belongs.

Let ${\cal G}$ be a non-trivial hereditary class and let $G$ be a graph.
Let $(F,R,τ)$ be a triple consisting of  a rooted forest $F$ whose root set is $R$ and a function 
$τ: V(G)\to 2^{V(F)}$.
Given a vertex set $S\subseteq V(F)$, we set $τ^{-1}(S)=\{v\in V(G)\mid \tau(v)\cap S\neq\emptyset\}$.  
Also, for every $t\in V(F),$ we define $G_{t}=G[τ^{-1}({\sf d}_{F,R}(t))]$.

We say that a triple $(F,R,τ)$ is a {\em ${\cal G}$-block tree layout} of $G$
if the following hold:
\begin{enumerate}
\item[(1)]  for every $v\in V(G),$ $τ(v)$ is an $(F,R)$-antichain,
\item[(2)] for every $t\in V(T)$, $G_{t}$ is a block-graph,
\item[(3)] if $t\not\in L(F,R)$, then   $|τ^{-1}(\{t\})|=1$ and $G_{t}\not\in{\cal G}$ or ,
\item[(4)] if $t\in L(F,R)$, then $G_{t}\in {\cal G}$ and  
\item[(5)] for every non-trivial $(F,R)$-antichain $C$, the graph $\cupall\{G_{t}\mid t\in C\}$ is not biconnected.
\end{enumerate}
The {\em depth} of the ${\cal G}$-block tree layout $(F,R,τ)$ is equal to the depth of 
the rooted forest $(F,R)$.

%
\begin{lemma}[$\star$]
\label{label_exaggeration}
Let ${\cal G}$ be a non-trivial hereditary class and let $G$ be a graph.
Then the minimum depth of a ${\cal G}$-block tree layout of $G$ is equal to ${\sf bed}_{\cal G}(G)-1$.
\end{lemma}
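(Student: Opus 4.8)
The plan is to prove the equality by its two inequalities, in each case by an induction that exchanges one level of the rooted forest for one step of the recursion defining ${\sf bed}_{\cal G}$. For the upper bound on the minimum depth I would argue by induction on $|V(G)|$, following the three cases of the definition of ${\sf bed}_{\cal G}$. In the base case (every block of $G$ lies in ${\cal G}$, that is, ${\sf bed}_{\cal G}(G)=0$) take $F$ to be a disjoint union of isolated vertices, one vertex $t_B$ --- simultaneously a root and a leaf --- for each block $B$ of $G$, and set $\tau(v)=\{t_B\mid v\in V(B)\}$, so that a cut-vertex of $G$ receives a genuine multi-image; then $G_{t_B}=B$, conditions~(1)--(4) are immediate, and~(5) holds because two distinct blocks of $G$ share at most one vertex, so any union of at least two of the $G_{t_B}$'s is disconnected or has a cut-vertex. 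If $G\notin{\cal G}$ is not biconnected, take the disjoint union of the layouts obtained inductively for the blocks of $G$, identifying the $\tau$-images of each cut-vertex across the relevant components; the depth does not grow, and~(5) again follows from the block-cut structure of $G$. If $G\notin{\cal G}$ is biconnected, pick $v$ with ${\sf bed}_{\cal G}(G)=1+{\sf bed}_{\cal G}(G\setminus v)$, take an optimal layout $(F',R',\tau')$ of $G\setminus v$, add a new vertex $r$ adjacent to every root in $R'$ and declare $r$ the unique root, and put $\tau(v)=\{r\}$ with $\tau=\tau'$ everywhere else; then $G_r=G$ is a biconnected block-graph not in ${\cal G}$, every other $G_t$ is unchanged, $r$ is comparable with every vertex of $F$ and hence lies in no non-trivial antichain, and the depth increases by exactly one, matching the extra $+1$ in the recursion. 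Tracking the depth through the three cases yields the value required by the statement.

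For the lower bound, given an arbitrary ${\cal G}$-block tree layout $(F,R,\tau)$ of $G$ of depth $d$, I would induct on $|V(F)|$. If $F$ is a single vertex, it is a leaf, so $G=G_r\in{\cal G}$ and ${\sf bed}_{\cal G}(G)=0$. If $F$ has a unique root $r$ that is not a leaf, then $\tau^{-1}(\{r\})=\{v_r\}$ by~(3), $G=G_r$ is a block-graph not in ${\cal G}$ by~(2) and~(3), and deleting $r$ from $F$ (its children becoming the new roots) yields a valid layout of $G\setminus v_r$ of depth $d-1$; by induction ${\sf bed}_{\cal G}(G\setminus v_r)\le d-2$, hence ${\sf bed}_{\cal G}(G)=1+\min_u{\sf bed}_{\cal G}(G\setminus u)\le 1+{\sf bed}_{\cal G}(G\setminus v_r)\le d-1$. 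If $F$ has at least two roots, restrict the layout to each connected component $F_{r_i}$ of $F$ (replacing $\tau(v)$ by $\tau(v)\cap V(F_{r_i})$), obtaining a valid layout of $G_{r_i}$ on a strictly smaller forest, and conclude ${\sf bed}_{\cal G}(G)=\max_i{\sf bed}_{\cal G}(G_{r_i})\le d-1$ from the induction hypothesis.

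The step I expect to be the main obstacle is this last one, together with its analogue after deleting a root: one has to show that the pieces $G_{r_i}$ (respectively the graph $G\setminus v_r$) decompose $G$ along its blocks --- that every edge of $G$ is recorded in some $G_t$, that every block of $G$ is contained in a single $G_{r_i}$, and that ${\sf bed}_{\cal G}$ of $G$ therefore equals the maximum of the ${\sf bed}_{\cal G}$'s of the pieces --- which is precisely what condition~(5), applied to the antichain of all roots and to antichains of sibling vertices, is meant to supply; pinning this down, and at the same time verifying that the restricted and truncated triples still satisfy all of conditions~(1)--(5), requires careful bookkeeping around the multi-embedding $\tau$ (a vertex of $G$ may sit at several incomparable vertices of $F$). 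The degenerate graphs $K_0$, $K_1$, and a single bridge also have to be checked by hand.
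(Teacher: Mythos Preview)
Your proposal is correct and follows essentially the same approach as the paper: both directions trade one level of the rooted forest for one step of the ${\sf bed}_{\cal G}$ recursion, using condition~(3) to peel off the apex vertex at a non-leaf root and conditions~(2) and~(5) to recover the block decomposition from the root-level graphs $G_r$. The only cosmetic difference is that the paper inducts on $k$ and bundles the ${\cal B}$- and ${\cal A}$-steps into a single inductive step via the identity ${\cal G}^{(k)}={\cal B}({\cal A}({\cal G}^{(k-1)}))$, whereas you induct on $|V(G)|$ (resp.\ $|V(F)|$) and treat the ``split into blocks'' and ``delete a vertex'' cases separately.
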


\begin{proof}
Assume that  $(F,R,τ)$ is a ${\cal G}$-block tree
of depth $k+1\geq 1$. We use induction on $k$.
Notice first that if $k=0$, then 
$V(F)=L(F,R)=R$. From (2) and (5), ${\sf bc }(G)=\{G_{r}\mid r\in R\}$. From (4), for every $r\in R$, $G_{r}\in{\cal G},$ therefore  $G\in {\cal B}({\cal G})={\cal G}^{(0)}$. 

Suppose now that $k\geq 1$ and consider some $r\in R$. If $r\in L(F,R)$, then, because of  (4), 
$G_{r}\in {\cal G}\subseteq {\cal G}^{(k-1)}\subseteq {\cal A}({\cal G}^{(k-1)})$.
Suppose now that $r\not\in L(F,R)$. Then, from (3),
$|τ^{-1}(\{r\})|=1$ and we define $v_{r}$ so that  $τ^{-1}(\{r\})=\{v_{r}\}$. We also set $G_{r}^{-}=G_{r}\setminus v_r$.
Let $F_{r}=F[{\sf d}_{F,R}(r)]\setminus r$, $R_{r}=N_{F}(r)$,
and $\tau_{r}=\{(v,\tau(v)\cap V(F_{r}))\mid v\in V(G_{r}^{-})\}$
and observe that $(F_{r},R_{r},\tau_{r})$ is a ${\cal G}$-block tree layout  of $G_{r}^{-}$ of depth $k-1$. By the induction hypothesis $G_{r}^{-}\in {\cal G}^{(k-1)}$, therefore $G_{r}\in {\cal A}(G^{(k-1)})$.  Recall now that $R$ is an $(F,R)$-antichain, therefore from (2) and (5), we have that 
${\sf bc}(G)=\{G_{r}\mid r\in R\}$. This together with the fact that 
for all $r\in R$, $G_{r}^{-}\in {\cal G}^{(k-1)}$ imply that $G\in{\cal B}({\cal A}({\cal G}^{(k-1)}))$, therefore, $G\in{\cal G}^{(k)}$.

Suppose now that $G\in {\cal G}^{(k)}$ for some $k\geq 0$.
Again we use induction on $k$. In case $k=0$, observe that 
$G\in {\cal B}(G)$, therefore every block of $G$ belongs to ${\cal G}$.
We consider a rooted forest $(F,R)$ consisting of isolated 
vertices, one, say $R_{B}$, for each block $B$ of $G$.
We also set up a function $τ: V(G)\to 2^{V(F)}$ such that, for each vertex $v$ of $G$, $\tau(v)=\{R_{B}\mid v\in V(B)\}$.
Observe that $(F,R,\tau)$ is a ${\cal G}$-block tree layout of ${\cal G}$ of depth 1.

Suppose now that $k\geq 1$ and let $B\in{\sf bc}(G)$.
As $G\in {\cal G}^{(k)}$, it follows that $B\in {\cal A}({\cal G}^{(k-1)})$, therefore $B$ contains a vertex $a_{B}$
such that $B^{-}=B\setminus a_{B}\in {\cal G}^{(k-1)}$.
From the induction hypothesis, $B^{-}$ has a ${\cal G}$-block tree layout $(F_{B^{-}},R_{B^{-}},τ_{B^{-}})$ of depth $k$.
We use $(F_{B^{-}},R_{B^{-}},τ_{B^{-}})$ for all $B\in{\sf bc}(G)$ in order to construct a  ${\cal G}$-block tree layout $(F,R,\tau)$ of $G$  as follows. $F$ is constructed by first taking the disjoint union of all forests in $\{F_{B^{-}}\mid B\in{\sf bc}(G)\}$ then adding one new root vertex  $r_{B}$ for each $B\in{\sf bc}(G)$
and, finally,  making $r_{B}$ adjacent with all the vertices of $R_{B^{-}}$. We also set $R=\{r_{B}\mid B\in{\sf bc}(G)\}$.
For the construction of $\tau$, if $v\in\{B^{-}\mid B\in{\sf bc}(G)\},$ 
then $τ(v)=\cupall\{τ_{B^{-}}(v)\mid B\in{\sf bc}(G)\}$ and if $v\in\{a_{B}\mid B\in{\sf bc}(G)\}$, then $\tau(v)=\{R_{B^{-}}\mid v\in B\}$.
The result follows, as  $(F,R,\tau)$ has depth $k+1$.
 \end{proof}

\section{{\sf NP}-completeness}
\label{np_trop}
We consider the following family of problems, each defined by some non-trivial and hereditary graph class ${\cal G}$. We say that a class ${\cal G}$ is {\em polynomially decidable}  if there exists an algorithm that, given an $n$-vertex graph $G$, decides whether $G\in {\cal G}$ in polynomial, on $n$, time.

\begin{center}
	\fbox{
	\begin{minipage}{13.5cm}
			\noindent{\sc Block Elimination Distance to ${\cal G}$ ({\sc ${\cal G}$-BED})}

\noindent{\bf Instance:}  A graph $G$ and a non-negative integer $k$.

\noindent{\bf Question:}   Is the block elimination distance of $G$ to ${\cal G}$  at most $k$?		\end{minipage}
	}
\end{center}

\begin{lemma}[$\star$]
\label{label_intermediary}
For every polynomially decidable, non-trivial, and hereditary graph class ${\cal G}$, the problem {\sc ${\cal G}$-BED} is {\sf NP}-complete.
\end{lemma}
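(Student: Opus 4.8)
The plan is to establish both membership in \NP{} and \NP-hardness.

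For membership, I would use the layout characterization of \autoref{label_exaggeration}: a graph $G$ satisfies ${\sf bed}_{\cal G}(G)\le k$ if and only if $G$ admits a ${\cal G}$-block tree layout $(F,R,\tau)$ of depth at most $k+1$. Such a layout is the natural \NP-certificate, so the only things to check are that it can be taken of polynomial size and verified in polynomial time. We may assume $k\le |V(G)|$, since peeling off vertices one at a time reduces any graph to the empty graph, which lies in ${\cal G}$, whence ${\sf bed}_{\cal G}(G)\le |V(G)|$ always (and for $k>|V(G)|$ the instance is trivially positive). Following the inductive construction in the proof of \autoref{label_exaggeration}, the forest $F$ has exactly one vertex per block that arises at each of the at most $k+1$ recursion levels. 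The key observation is that when we pass from a graph $H$ to the blocks of $H\setminus a_H$, no edge is duplicated — each edge of $H$ lies in a unique block — and only cut-vertices are duplicated; hence the total number of vertices over all graphs appearing at any fixed recursion level exceeds $|V(G)|$ by at most $O(|E(G)|)$, so over all levels the layout has $O\big(k\cdot(|V(G)|+|E(G)|)\big)$ nodes. Verifying the layout is polynomial: conditions (1), (2) and (5) of a ${\cal G}$-block tree layout are purely graph-structural, while (3) and (4) reduce to testing membership of the graphs $G_t$ in ${\cal G}$, which is polynomial since ${\cal G}$ is polynomially decidable. This yields ${\cal G}$-BED $\in\NP$.

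For \NP-hardness, I would reduce from a suitable \NP-complete problem, using as a universal gadget a graph $H^{\star}\notin{\cal G}$ that is minimal with respect to the induced-subgraph relation; such a graph exists since ${\cal G}$ is non-trivial, satisfies $|V(H^{\star})|\ge 2$ because $K_0,K_1\in{\cal G}$, and has the crucial property that $H^{\star}\setminus v\in{\cal G}$ for every $v\in V(H^{\star})$. Thus a block of $G$ that is a copy of $H^{\star}$ is ``just barely'' outside ${\cal G}$: it forces exactly one elimination to be spent on it, and one can plant several such blocks in parallel to calibrate the budget. The plan is to build, from an instance with solution-size parameter $\ell$, a graph $G$ that is biconnected — so the top level of the recursion genuinely deletes a vertex — together with an integer $k=k(\ell)$, so that solutions of size $\ell$ correspond to certificates witnessing ${\sf bed}_{\cal G}(G)\le k$. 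Here a natural case distinction arises. If ${\cal G}$ is infinite, then the vertex-deletion problem ``is there $S\subseteq V(H)$ with $|S|\le\ell$ and $H\setminus S\in{\cal G}$?'' is \NP-hard (Lewis--Yannakakis), and it suffices to produce $G$ from $H$ so that every graph encountered while evaluating ${\sf bed}_{\cal G}(G)$ optimally remains biconnected; then the block recursion of ${\sf bed}_{\cal G}$ collapses to ordinary iterated vertex deletion and ${\sf bed}_{\cal G}(G)\le\ell$ iff the vertex-deletion distance of $G$ to ${\cal G}$ is at most $\ell$. If ${\cal G}$ is finite (in particular for ${\cal G}={\cal E}$), vertex deletion to ${\cal G}$ may be polynomial, but then ${\sf bed}_{\cal G}$ becomes a nested, block-respecting feedback-type parameter (for ${\cal G}={\cal E}$, for instance, ${\sf bed}_{\cal E}(G)\le k$ iff every $2$-connected block of $G$ has block-elimination distance at most $k-1$ to ${\cal E}$), and one reduces instead from a \textsc{Feedback Vertex Set}-flavoured problem, again using $H^{\star}$-gadgets to localize the choices inside individual blocks.

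The main obstacle is precisely the mismatch between ${\sf bed}_{\cal G}$ and plain iterated vertex deletion: after a deletion the graph fragments into blocks that each receive the \emph{full} remaining budget, so the reduction must rigidly control the block structure throughout the recursion — either by making $G$ highly connected (so that eliminations stay within one block) or by planting gadget blocks that force the elimination to proceed essentially along a path — and it must do so uniformly for \emph{every} non-trivial hereditary ${\cal G}$, finite ones included (where the ambient deletion problem is easy). Showing both directions of the correspondence — a solution yields a shallow certificate, and, conversely, any shallow certificate decodes to a solution — simultaneously for all such ${\cal G}$ is where the bulk of the work lies.
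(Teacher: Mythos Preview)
Your membership argument is essentially the paper's: certify with a ${\cal G}$-block tree layout of depth $\le k+1$ (\autoref{label_exaggeration}) and use polynomial decidability of ${\cal G}$ to verify conditions (3)--(4). That part is fine.

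The hardness part, however, has a genuine gap. You correctly identify the main obstacle --- after a deletion the graph may split into blocks, each inheriting the full remaining budget --- but you do not provide the construction that overcomes it. The Lewis--Yannakakis route requires you to turn an arbitrary instance $H$ of vertex deletion to ${\cal G}$ into a graph $G$ that stays biconnected throughout an \emph{optimal} elimination sequence of length $k$, so that ${\sf bed}_{\cal G}$ collapses to plain vertex-deletion distance. Your only suggestion, ``make $G$ highly connected'', is not uniform: for instance, adding universal vertices raises connectivity but also changes membership in ${\cal G}$ in ways that depend on ${\cal G}$ (think of ${\cal G}$ containing all cliques), so there is no obvious relationship between the deletion distance of $G$ and that of $H$. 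For finite ${\cal G}$ your description (``a \textsc{Feedback Vertex Set}-flavoured problem'') is even more of a placeholder. You yourself note that ``this is where the bulk of the work lies'' --- and indeed, the proposal stops exactly where the proof should begin.

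The paper takes a different and more robust two-step route. First, it proves \NP-hardness of the \emph{specific} problem {\sc ${\cal E}$-BED} by a direct reduction from {\sc Balanced Complete Bipartite Subgraph}: from a bipartite $G$ it builds a graph $G^\star$ (via a doubling $G^\bullet$, complementation, and the addition of $\xi+1$ universal vertices) and shows that a balanced biclique of size $k$ in $G$ corresponds to a shallow ${\cal E}$-block tree layout of $G^\star$ shaped like a path with a two-pronged fork. Second, it reduces {\sc ${\cal E}$-BED} to {\sc ${\cal G}$-BED} for \emph{every} non-trivial hereditary ${\cal G}$ by an edge-replacement gadget: \autoref{label_caterwauling} furnishes a block-graph $Z\notin{\cal B}({\cal G})$ with $Z\setminus v\in{\cal B}({\cal G})$ for all $v$, and one identifies each edge of the {\sc ${\cal E}$-BED} instance with an edge of a fresh copy of $Z$. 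This second step is what makes the argument uniform in ${\cal G}$ without ever needing to control biconnectivity along an elimination sequence, and it is precisely the idea missing from your proposal.
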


\begin{proof}
Given a graph $G$ and a non-negative integer $k$ and using \autoref{label_exaggeration},  we certify that ${\bf bed}_{\cal G}(G)\leq k$ by 
a  ${\cal G}$-block tree layout $(F,R,\tau)$ of $G$  of depth at most $k+1$.
This, together with the fact that  ${\cal G}$ is polynomially decidable, implies that 
{\sc ${\cal G}$-BED} belongs to ${\sf NP}$.

We next prove that  {\sc ${\cal G}$-BED} is {\sf NP}-hard.
Our first step is to prove that  {\sc ${\cal E}$-BED} is {\sf NP}-hard. Notice that, in this case, 
 conditions (2) and (4) of the definition of a ${\cal E}$-block tree layout 
imply that if $t\in L(F,R)$ then $G_{t}=K_{1}$. This, in turn, implies that 
$\{\tau(v)\mid v\in V(G)\}$ is a partition of $V(F)$.
\medskip

%
%
%

We present a reduction to {\sc ${\cal E}$-BED}  from the following {\sf NP}-hard problem:

\begin{center}
	\fbox{
	\begin{minipage}{13.5cm}
			\noindent{\sc Balanced Complete Bipartite Subgraph} (BCBS)

\noindent{\bf Instance:}  A bipartite graph $G$ with partition $V_{1},V_{2}$
and a positive integer $k$.

\noindent{\bf Question:}  Are there $W_{i}\subseteq V_{i},i\in[2]$, such that $G[W_{1}\cup W_{2}]$ is a complete bipartite graph  and $|W_{1}|=|W_{2}|=k$?
		\end{minipage}
	}
\end{center}

Let $G$ be a bipartite graph with partition $V_1, V_2$.
Let $n=|V(G)|$, $ξ=2n-4k$, and $k'=2n+ξ-2k$. Also keep in mind that $2\xi + 2k +1 = k'+1$.
For each vertex $v\in V(G)$, we consider a new vertex $v'$ and we denote by $V'$ this set of $n$ new vertices (i.e., $V' = \{v'\mid v\in V(G)\}$).
We consider the graph
$$G^\bullet = (V(G)\cup V', E(G)\cup \bigcup_{\{u,v\}\in E(G)}(\{u,v'\} \cup \{u',v\} \cup \{u',v'\})).$$
Then, we consider the graph $G^\star$
obtained by $\overline{G^\bullet}$ 
after adding a set $\hat{V}$ of $ξ+1$ new vertices
and make them adjacent 
with all the vertices in $V(G^\bullet)$.
We set $n^\star=|V(G^\star)|$ and we observe that 
$n^\star=2n+ξ+1$.
Also, for each $i\in[2]$, we set $V_i ' = \{v'\in V'\mid v\in V_i\}$ and $V_i^\star = V_i \cup V_i '$.
In what follows, we prove that $(G,k)$ is a \yes-instance of BCBS iff $(G^\star, k')$ is a \yes-instance of  {\sc ${\cal E}$-BED}.
We begin by proving that if $(G,k)$ is a \yes-instance of BCBS, then $(G^\star, k')$ is a \yes-instance of  {\sc ${\cal E}$-BED}.

%
Suppose that $(G,k)$ is a \yes-instance of BCBS.
Therefore, there exist $W_{i}\subseteq V_i, i\in [2]$ such that $G[W_1\cup W_2]$ is a complete bipartite graph and $|W_i| = k, i\in [2]$.
For each $i\in[2]$, we set $W_i'=\{v'\in V'\mid v\in W_i\}$ and $W_i^\star = W_i \cup W_i '$ and we observe that the graph $G^\bullet [W_1^\star \cup W_2^\star]$ is a complete bipartite graph whose parts are $W_1^\star$ and $W_2^\star$, each of size $2k$.
We now aim to define a triple $(F,R,\tau)$ that certifies that ${\bf bed}_{\cal E}(G^\star)\leq k'$.
To define $F$, let $P$ be an $(r,q)$-path of $2\xi +1$ vertices and $P_1$ (resp. $P_2$) be an $(a_1,\ell_1)$-path (resp. $(a_2, \ell_2)$-path) of $2k$ vertices such that $P$, $P_1$, and $P_2$ are pairwise vertex-disjoint.
We set $F$ to be the graph obtained from $P\cup P_1\cup P_2$ by adding the edges $q a_1$ and $q a_2$ and observe that $F$ is a tree of depth $2\xi + 2k +1 = k'+1$.
%
We now consider the triple $(F,R,τ)$, where $R=\{r\}$ and $τ$ is a function mapping each vertex of $\hat{V} \cup (V_{1}^\star \setminus W_1^\star)\cup (V_{2}^\star \setminus W_2^\star)$
to a unique vertex of $P$ and each vertex of $W_i^\star$ to a unique vertex of $P_{i}$, for $i\in[2]$.
It is easy to verify that $τ$  satisfies the properties (1) to (5) of the definition of ${\cal G}$-block tree layout, where ${\cal G} = {\cal E}$, and therefore, since $(F,R,τ)$ has depth $k'+1$, by \autoref{label_exaggeration},
we have that $(F,R,\tau)$ certifies that ${\bf bed}_{\cal E}(G^\star)\leq k'$.

What remains now is to prove that if ${\bf bed}_{\cal E}(G^\star) \leq k'$, then $(G,k)$ is a \yes-instance of BCBS.
%
Towards this, we argue that the following holds.\medskip

\noindent{\em Claim:} There are sets $B_i\subseteq V_i^\star$ such that $|B_i|\geq 2k-1, i\in[2]$, and there is no edge between vertices of $B_1$ and $B_2$ in $G^\star$.\medskip

\noindent{\em Proof of Claim:}
Assume that the ${\cal E}$-block tree layout $(T,R,τ)$ certifies that  ${\bf bed}_{\cal E}(G^\star)\leq k'$.
Observe that, since $G^\star$ is connected, $T$ is connected and $R$ is a singleton.
Let $r\in V(T)$ such that $R= \{r\}$.
Keep in mind that $T$ has depth at most $k'+1$.
Let $P$ be the $(r,q)$-path in $T$ where only $q$ has degree more than two in $T$.
Let $C=τ^{-1}(V(P))\cap (V_{1}^\star \cup V_{2}^\star)$ and $A_{i}=V_{i}^\star \setminus C$, $i\in [2]$.
Notice that $|A_{1}|+|A_{2}|=n^\star-ξ-|C|-1$, which implies that
\begin{eqnarray}\label{label_klammerausdruckes}
|A_{1}|+|A_{2}|=2n-|C|.
\end{eqnarray}
We set $H$ to be the graph $G^\star \setminus τ^{-1}(V(P))$ and keep in mind that $V(H) = A_1 \cup A_2 \cup (\hat{V}\setminus τ^{-1}(V(P)))$ and $H[A_i], i\in[2]$ is a complete graph.

The fact that $q$ has at least two children in $(T,r)$ implies that $H$ contains a cut-vertex.
Moreover, there exist $H_1, H_2 \in {\sf bc}(H)$ such that for each $i\in[2]$, $H_i$ contains the complete graph $H[A_i]$ as a subgraph.
For each $i\in[2]$, let $T_i$ be the subtree of $T$ induced by the vertices of $τ(V(H_i))$ and $q_i$ be the depth of $T_i$.
Since $H_i$ contains the complete graph $H[A_i]$ as a subgraph, we have that $|A_i|\leq q_i$.
Moreover, the fact that $T$ has depth at most $k' +1$ implies that  $q_{i}\leq k'+1-|V(P)|, i\in[2]$.
Therefore, for each $i\in[2]$, 
\begin{eqnarray}\label{label_procedimiento}
|A_i|\leq k'+1-|V(P)|.
\end{eqnarray}
Also, the fact that $H$ contains a cut-vertex implies that
there is at most one vertex of $\hat{V}$ in $V(H)$.
Thus, $|τ^{-1}(V(P))\cap \hat{V}|\geq \xi$.
We now distinguish two cases, depending whether $\hat{V}\setminus τ^{-1}(V(P))\neq \emptyset$, or not.
\medskip

\noindent{\em Case 1:} $\hat{V}\setminus τ^{-1}(V(P))\neq \emptyset$.
In this case, we have that $|V(P)| = |C|+\xi$ and therefore, by (\ref{label_procedimiento}), $|A_i|\leq 2n-|C|-2k +1, i\in[2]$.
This, together with (\ref{label_klammerausdruckes}), implies that $|A_i|\geq 2k -1, i\in[2]$.
Let $w$ be the (unique) vertex in $\hat{V}\setminus τ^{-1}(V(P))$ and observe that, since $w$ is adjacent to every vertex in $A_1\cup A_2$, $w$ is a cut-vertex of $H$.
This, in turn, implies that there is no edge in $G^\star$ between vertices of $A_1$ and $A_2$.
Thus, in this case, the claim holds for $B_i = A_i, i\in[2]$.\medskip

%

\noindent{\em Case 2:} $\hat{V}\setminus τ^{-1}(V(P))= \emptyset$. Notice that the fact that $\hat{V}\setminus τ^{-1}(V(P))= \emptyset$ implies that $|V(P)| = |C|+\xi+1$.
Therefore, by (\ref{label_procedimiento}), $|A_i|\leq 2n-|C|-2k, i\in[2]$.
This together with (\ref{label_klammerausdruckes}) imply that $|A_i|\geq 2k, i\in[2]$.
Let $z$ be a cut-vertex of $H$ and suppose that $z\in A_1$.
The fact that $z$ is a cut-vertex of $H$ implies that $A_1\setminus z$ and $A_2$ are two subsets of $V_1^\star$ and $V_2^\star$ respectively such that there is no edge, in $G^\star$, between the vertices of $A_1\setminus z$ and $A_2$.
Thus, since $|A_1\setminus \{z\}|\geq 2k-1$ and $|A_2|\geq 2k$, in this case, the claim holds for $B_1= A_1\setminus \{z\}$ and $B_2 =A_2$.\hfill$\diamond$\medskip

Following the Claim, there are sets $B_i\subseteq V_i^\star$ such that
$|B_i|\geq 2k-1, i\in[2]$ and there is no edge between the vertices of $B_1$ and $B_2$ in $G^\star$.
For every $i\in[2]$, since $|B_i|\geq 2k-1$,
there is a set $Q_i$ of at least $k$ vertices such that $Q_i\subseteq V(G)$ or $Q_i\subseteq V'$.
In the former case, we set $W_i = Q_i$, while, in the latter case, we set $W_i= \{v\in V(G)\mid v'\in Q_i\}$.
Therefore, $W_i, i \in[2]$, is a subset of $V(G)$ of size at least $k$.
Since $Q_i \subseteq B_i, i\in[2]$, the fact that there is no edge between the vertices of $B_1$ and $B_2$ in $G^\star$ implies that there is no edge edge between the vertices in $Q_1$ and $Q_2$.
This, in turn, implies that there is no edge in $G^\star$ between $W_1$ and $W_2$,
since otherwise, an edge $uv\in E(G^\star)$ between $W_1$ and $W_2$
would imply the existence of the edges $uv', u'v,$ and $u'v'$ in $G^\star$,
and at least one of them should be between vertices of $Q_1$ and $Q_2$, a contradiction.
Thus, since there is no edge in $G^\star$ between $W_1$ and $W_2$, $W_i, i\in[2]$ induces a complete graph in $Q^\star$, and $G^\star = \overline{G^\bullet}$, it holds that $G[W_1\cup W_2]$ is a complete bipartite graph.
Hence, $(W_1,W_2)$ certifies that $(G,k)$ is a \yes-instance of {\sc BCBS}.

We just proved that {\sc ${\cal E}$-BED} is {\sf NP}-hard.
Our next step is to reduce {\sc ${\cal E}$-BED} to {\sc ${\cal G}$-BED} 
for every non-trivial hereditary class ${\cal G}$. For this consider an instance $(G,k)$
of {\sc ${\cal E}$-BED} and a graph $Z$ as in \autoref{label_caterwauling}.
We construct the graph $G^*$ by considering $|E(G)|$ copies of $Z$
and identify each edge of $G$ with some edge of one of these copies.
Notice that if there is a ${\cal G}$-block tree layout $(F,R,τ)$ of $G^{*}$ of depth at most $k+1$, then there is also one where all vertices of $Z'$ that have not been identified with vertices of $G$ are mapped via $τ$ to subsets of $L(F,R)$.
This implies that  $(G,k)$ is a \yes-instance of {\sc ${\cal E}$-BED} iff 
$(G^*,k)$ is a \yes-instance of {\sc ${\cal G}$-BED}, as required.
 \end{proof}

Notice that
the proof of the above theorem is a (multi) reduction from the problem {\sc Balanced Complete Bipartite Subgraph} (BCBS). It is based on the alternative definition of block elimination distance (\autoref{label_exaggeration}) and has two parts. The first proves the 
{\sf NP}-hardness of {\sc ${\cal E}$-BED}. 
The second is a multi-reduction from  {\sc ${\cal E}$-BED}  to  {\sc ${\cal G}$-BED} 
where the existence of the main gadget is based on the following lemma.

\begin{lemma}[$\star$]
\label{label_caterwauling}
Let ${\cal G}$ be a non-trivial hereditary  class.
Then there exists a graph $Z$ with the following properties: (1) $Z$ is a block graph,
(2) $Z\not\in {\cal B}({\cal G})$ and, (3) $∀ v\in V(Z)$,\ $Z\setminus v\in {\cal B}({\cal G})$.
\end{lemma}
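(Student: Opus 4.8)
The plan is to build $Z$ explicitly from a smallest ``forbidden configuration'' for ${\cal G}$, and to force biconnectivity into the picture. Since ${\cal G}$ is non-trivial and hereditary, there is a graph $H\notin{\cal G}$, and by hereditarity we may take $H$ to be an \emph{induced-minimal} such graph, i.e.\ $H\setminus v\in{\cal G}$ for every $v\in V(H)$. If $H$ happens to be biconnected (or a single edge, or a single vertex), it is already a block graph, so we could try $Z:=H$; the subtle point is that then $Z\notin{\cal B}({\cal G})$ holds because ${\sf bc}(Z)=\{Z\}$ and $Z\notin{\cal G}$, and $Z\setminus v$ is a block graph with every block in ${\cal G}$ — but $Z\setminus v$ need not be a block graph. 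So the real work is to make the \emph{deletions} land back inside ${\cal B}({\cal G})$, i.e.\ property (3), while keeping $Z$ itself a block graph failing (2).

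First I would fix $H\notin{\cal G}$ with $H\setminus v\in{\cal G}$ for all $v\in V(H)$, and also fix a non-empty graph $H_0\in{\cal G}$ (possible since ${\cal G}$ is non-trivial). The construction of $Z$: take $H$ and ``biconnectify'' it by adding a few universal vertices — concretely, let $Z$ be obtained from $H$ by adding two new vertices $a,b$ adjacent to each other and to every vertex of $V(H)$. Then $Z$ is biconnected (any two vertices lie on a common cycle through $a$ and $b$), hence a block graph, giving (1). For (2): since $Z$ is biconnected, ${\sf bc}(Z)=\{Z\}$, so $Z\in{\cal B}({\cal G})$ iff $Z\in{\cal G}$; as $Z\supseteq H$ induced and ${\cal G}$ is hereditary with $H\notin{\cal G}$, we get $Z\notin{\cal G}$, hence $Z\notin{\cal B}({\cal G})$. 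For (3) we examine $Z\setminus v$. If $v\in\{a,b\}$, then $Z\setminus v$ is still biconnected (the remaining universal vertex plus the edges of $H$ keep it 2-connected, provided $H$ has at least one edge — handle the degenerate cases $H=K_1$ or $H$ edgeless separately, where ${\cal E}$-type choices make everything trivial), so ${\sf bc}(Z\setminus v)=\{Z\setminus v\}$, and we would need $Z\setminus a\in{\cal G}$: but $Z\setminus a$ still contains $H$! This shows the naive universal-vertex trick fails, and that is exactly the main obstacle.

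The fix, and the step I expect to be delicate, is to attach the universal vertices not to all of $V(H)$ but in a way that deleting any single vertex destroys the copy of $H$ \emph{and} leaves a block graph with blocks in ${\cal G}$. The clean route: iterate on $|V(H)|$, or better, build $Z$ as a ``book''-like graph. Let $V(H)=\{x_1,\dots,x_m\}$. Define $Z$ to consist of a central edge $ab$, and for each $i\in[m]$ a vertex $x_i$ adjacent to both $a$ and $b$; additionally put the edges of $H$ among the $x_i$'s. Again $Z$ is biconnected, so (1) and (2) go through as above. Now if we delete $x_i$, the graph $Z\setminus x_i$ is still biconnected and contains $H\setminus x_i\in{\cal G}$, but \emph{also} contains the triangles $ax_jb$ — so $Z\setminus x_i\in{\cal G}$ requires ${\cal G}$ to contain this specific graph, which we cannot assume. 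The correct construction therefore replaces ``contains $H$'' reasoning by a \emph{pumping} argument: take $Z$ to be a sufficiently large biconnected graph that is ${\cal B}({\cal G})$-critical, obtained by applying ${\cal A}$ and ${\cal B}$ to a block-graph witness of $H\notin{\cal G}$. Precisely, since ${\cal G}$ is hereditary and non-trivial, the class ${\cal B}({\cal G})$ is non-trivial and hereditary (noted in the excerpt), and ${\cal B}({\cal G})\ne{\cal B}({\cal A}({\cal B}({\cal G})))$ would contradict $\mathcal G\subsetneq\mathcal A(\mathcal G)$; hence there is a graph $W\in{\cal A}({\cal B}({\cal G}))\setminus{\cal B}({\cal G})$, and by choosing $W$ with a minimum number of vertices one gets that $W\setminus v\in{\cal B}({\cal G})$ for every $v$. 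Finally, one shows this minimal $W$ can be taken to be a block graph: if $W$ had a cut-vertex, some block $B$ of $W$ is not in ${\cal B}({\cal G})$ while $B$ is strictly smaller than $W$ or $B=W$, and a short case analysis using minimality of $W$ forces $W$ itself to be biconnected (or a bridge, or a vertex). Setting $Z:=W$ then satisfies (1)--(3) by construction, with (2) being ``$W\notin{\cal B}({\cal G})$'' and (3) being the minimality of $W$. The main obstacle, as flagged, is verifying that a vertex-minimal member of ${\cal A}({\cal B}({\cal G}))\setminus{\cal B}({\cal G})$ is necessarily a block graph; I would prove this by showing that if $W$ is not a block graph, some proper block of it already lies outside ${\cal B}({\cal G})$, contradicting minimality, once one checks the boundary behaviour at cut-vertices against the definitions of ${\cal A}$ and ${\cal B}$.
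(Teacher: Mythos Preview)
Your final approach is correct and is, once the unnecessary baggage is stripped away, exactly the paper's argument. The paper observes that any graph embeds as an induced subgraph of a block graph (add two universal vertices), so there is a block graph $H\notin{\cal G}$ and hence $H\notin{\cal B}({\cal G})$; it then takes $Z$ of minimum size among induced subgraphs of $H$ that are block graphs and not in ${\cal B}({\cal G})$. Properties (1) and (2) are immediate, and for (3) one argues: if $Z\setminus v\notin{\cal B}({\cal G})$ then some block $B$ of $Z\setminus v$ lies outside ${\cal G}$, hence outside ${\cal B}({\cal G})$, and $B$ is a strictly smaller block-graph induced subgraph of $H$, contradicting minimality. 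This is precisely your last paragraph, with two simplifications. First, the operator ${\cal A}$ plays no role: a vertex-minimal graph not in ${\cal B}({\cal G})$ already has all one-vertex deletions in ${\cal B}({\cal G})$ by hereditarity of ${\cal B}({\cal G})$, so your $W$ is nothing but a vertex-minimal graph outside ${\cal B}({\cal G})$, and restricting to ${\cal A}({\cal B}({\cal G}))$ is redundant. Second, your justification for ${\cal A}({\cal B}({\cal G}))\setminus{\cal B}({\cal G})\neq\emptyset$ via ``${\cal B}({\cal G})\ne{\cal B}({\cal A}({\cal B}({\cal G})))$ would contradict ${\cal G}\subsetneq{\cal A}({\cal G})$'' is garbled; all that is needed is that ${\cal B}({\cal G})$ is not the class of all graphs, which follows from non-triviality of ${\cal G}$.

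The first three constructions you try (a minimal $H$ directly, $H$ plus two universal vertices, the ``book'') are all dead ends for exactly the reason you identify: deleting an added vertex leaves the forbidden $H$ intact, so (3) fails. It is worth noting that the paper \emph{does} use the two-universal-vertex trick, but only to exhibit \emph{some} block graph not in ${\cal B}({\cal G})$ as a seed for the minimality argument, not as the final $Z$.
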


\begin{proof}
Notice that every graph can be seen as an induced subgraph of a block-graph (just add two new universal vertices). This, along with the hereditarity and the   non-triviality of ${\cal G}$ implies that there exists a block graph $H$
that does not belong to ${\cal G}$ and, thus, neither belongs to ${\cal B}({\cal G})$.
Among all induced subgraphs of $H$ that are block graphs and not belonging to  ${\cal B}({\cal G})$, let $Z$ be one with minimum number of vertices.
Clearly, $Z$ satisfies the two first properties. Assume towards a contradiction that 
there is some $v\in V(Z)$ such that $Z\setminus v$ is not biconnected and, moreover, 
$Z\setminus v\not\in {\cal B}({\cal G})$.
It follows that at least one, say $B$, of the blocks of $Z\setminus v$ are not in ${\cal G}$, and thus also not in ${\cal B}({\cal G})$.
Notice that $B$ is a proper induced subgraph of $Z$ (and thus of $H$ as well) that is a block graph and does not belong to ${\cal B}({\cal G})$, a contradiction to the minimality of the choice of ${Z}$.
 \end{proof}

We stress that the  proof of the above lemma is not constructive in the sense that it does not give any way to construct $Z$. However, if the non-trivial and hereditary class ${\cal G}$ is decidable, then $Z$ is effectively computable and this makes the proof of \autoref{label_intermediary} constructive.

\section{Elimination distance to minor-free graph classes}
\label{label_enchantement}

\paragraph{Minors and obstructions.}\label{label_jurisdiction}
The result of the contraction of an edge $e=xy$ in a graph $G$ is the graph  obtained from $G$ after contracting $e$, that is the graph obtained from $G\setminus \{x,y\}$ after introducing a new vertex $v_{xy}$ and edges between $v_{xy}$ and $N_{G}(\{x,y\})\setminus\{x,y\}$. It is denoted by $G/e$. If $H$ can be obtained from some subgraph of $G$ after contracting edges, we say that $H$ is a {\em minor} of $G$ and we denote it by $H\leq G$. 
Given a set ${\cal Q}$ of graphs, we denote by ${\sf excl}({\cal Q})$ the class of all graphs excluding every graph in ${\cal Q}$ as a minor
and by ${\obs}({\cal Q})$ the class of all minor-minimal graphs that do not belong to ${\cal Q}$. Clearly, for every class ${\cal G}$, ${\cal G}={\sf excl}({\sf obs}({\cal G}))$. Also, according to Roberson and Seymour theorem, for every minor-closed class ${\cal G}$, ${\sf obs}({\cal G})$ is finite.
We call a class {\em essential} if it is a finite minor-antichain that is non-empty and does not contain the graph $K_{0}$ or the graph $K_{1}$. Notice that ${\cal G}$ is trivial iff 
${\sf obs}({\cal G})$ is essential.
 We call an essential class ${\cal Z}$ {\em biconnected} if all graphs in ${\sf obs}({\cal Z})$ are block-graphs. 
Given that ${\cal Z}$ is an essential graph class, we define $s({\cal Z})=\max\{|V(G)|\mid G\in {\cal Z}\}$.

%

%

It is easy to verify that  the property of being non-trivial and minor-closed is invariant under both operations ${\cal A}$ and ${\cal B}$.
The most simple example of a non-trivial minor-closed class 
is $\mathcal{E}'=\{K_0,K_{1}\}$ where  ${\sf obs}(\mathcal{E}')=\{K_{1}+K_{1}\}$. Another simple example is 
the class of edgeless graphs ${\cal E}$,  where  ${\sf obs}({\cal E})=\{K_{2}\}$.
Notice that ${\cal B}(\mathcal{E}')={\cal E}\neq \mathcal{E}'$ while ${\cal B}({\cal E})={\cal E}$. In this example 
 ${\cal B}(\mathcal{E}')\neq \mathcal{E}'$. The following easy observation clarifies which classes  are 
 invariants under the  operation ${\cal B}$  and  follows from the fact that 
  for every non-block graph $G$, all graphs in 
${\sf bc}(G)$ are proper minors of $G$.

\begin{observation}
\label{label_constitutional_more}
For every non-trivial minor-closed class ${\cal G}$, ${\cal B}({\cal G})={\cal G}$ iff  ${\sf obs}({\cal G})$ is biconnected.
\end{observation}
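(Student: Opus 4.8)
The statement to prove is \autoref{label_constitutional_more}: for every non-trivial minor-closed class $\mathcal{G}$, we have $\mathcal{B}(\mathcal{G})=\mathcal{G}$ if and only if $\mathsf{obs}(\mathcal{G})$ is biconnected, where ``biconnected'' means every obstruction is a block-graph. The plan is to prove the two implications separately, using throughout the key fact noted just before the statement: if $G$ is not a block-graph, then every block $B\in\mathsf{bc}(G)$ is a \emph{proper} minor of $G$ (indeed $B$ is a subgraph of $G$, and since $\mathsf{bc}(G)\neq\{G\}$ it has strictly fewer vertices or edges — more precisely, take any block $B$; since there is at least one other block, there is a vertex outside $V(B)$, so $B$ is a proper subgraph, hence a proper minor).

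First I would prove the ``if'' direction: assume $\mathsf{obs}(\mathcal{G})$ is biconnected, and show $\mathcal{B}(\mathcal{G})\subseteq\mathcal{G}$ (the reverse inclusion $\mathcal{G}\subseteq\mathcal{B}(\mathcal{G})$ always holds by hereditarity/minor-closedness, as already recorded in the excerpt). Let $G\in\mathcal{B}(\mathcal{G})$, so every block of $G$ lies in $\mathcal{G}$. Suppose toward a contradiction $G\notin\mathcal{G}$. Then $G$ contains some obstruction $H\in\mathsf{obs}(\mathcal{G})$ as a minor. By hypothesis $H$ is a block-graph, i.e. $\mathsf{bc}(H)=\{H\}$, so $H$ is biconnected (or a $K_1$ or $K_2$ — but those cases are excluded or trivial since $\mathsf{obs}(\mathcal{G})$ is essential, so $H$ has at least two vertices and, if it is a single block that is not $K_2$, it is $2$-connected; the $K_2$ case I would handle by noting then $\mathcal{G}=\mathcal{E}$ and $\mathcal{B}(\mathcal{E})=\mathcal{E}$ directly). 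The crucial combinatorial claim is: \emph{if a block-graph $H$ is a minor of $G$, then $H$ is a minor of some single block of $G$.} This is because a connected minor of $G$ must be ``realized'' within a single block — the branch sets of a biconnected minor model, being connected and pairwise adjacent in a $2$-connected pattern, cannot straddle a cut-vertex of $G$. I would make this precise by induction on the block-cut tree of $G$: a minor model of a $2$-connected graph in $G$ lives inside one block, since contracting across cut-vertices and deleting cannot produce a $2$-connected graph spanning two blocks. Then that block contains $H\in\mathsf{obs}(\mathcal{G})$ as a minor, so that block is not in $\mathcal{G}$, contradicting $G\in\mathcal{B}(\mathcal{G})$.

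Next I would prove the ``only if'' direction, or rather its contrapositive: assume $\mathsf{obs}(\mathcal{G})$ is \emph{not} biconnected, i.e. some $H\in\mathsf{obs}(\mathcal{G})$ is not a block-graph, and show $\mathcal{B}(\mathcal{G})\neq\mathcal{G}$. Since $H$ is not a block-graph, every block $B\in\mathsf{bc}(H)$ is a proper minor of $H$; as $H$ is a minor-minimal graph not in $\mathcal{G}$, every proper minor of $H$ is in $\mathcal{G}$, so $B\in\mathcal{G}$ for every $B\in\mathsf{bc}(H)$. Hence $H\in\mathcal{B}(\mathcal{G})$. But $H\notin\mathcal{G}$ by definition of $\mathsf{obs}$. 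Therefore $H\in\mathcal{B}(\mathcal{G})\setminus\mathcal{G}$, so $\mathcal{B}(\mathcal{G})\neq\mathcal{G}$.

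The main obstacle is the combinatorial claim in the ``if'' direction — that a biconnected (block-)graph appearing as a minor of $G$ must already appear as a minor of a single block of $G$. I expect to prove it cleanly via the block-cut-tree structure: take a minor model $(V_h)_{h\in V(H)}$ of $H$ in $G$; the union $U=\bigcup_h V_h$ together with the model edges is connected, so it lies in a connected subgraph of $G$, and one argues that if $U$ meets two distinct blocks $B_1,B_2$ of $G$ then the (unique) cut-vertex $c$ separating them would be forced to be ``covered'' by the model in a way that makes $H$ have a cut-vertex — contradicting $2$-connectedness of $H$. Concretely: if some branch set $V_h$ meets both sides of $c$, then $c\in V_h$ and $H\setminus h$ would be disconnected in the model; if no branch set meets both sides, then $c$ separates the branch sets into two nonempty parts with no model edge between them, again contradicting connectivity/$2$-connectedness. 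I would spell this out as a short self-contained sublemma. Everything else is a direct application of minor-minimality and the standard facts $\mathcal{G}\subseteq\mathcal{B}(\mathcal{G})$ and ``blocks of a non-block graph are proper minors'' already established in the excerpt.
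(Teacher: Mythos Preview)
Your proposal is correct. The ``only if'' direction (contrapositive) is exactly what the paper has in mind: an obstruction $H$ that is not a block-graph has all its blocks as proper minors, hence in $\mathcal{G}$, so $H\in\mathcal{B}(\mathcal{G})\setminus\mathcal{G}$.

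For the ``if'' direction you take a slightly longer route than the paper. You argue directly that a block-graph (in particular a biconnected) minor of $G$ must already be a minor of a single block of $G$, and you sketch a clean branch-set/cut-vertex argument for this sublemma. The paper instead leans on the fact, stated just above the observation, that $\mathcal{B}(\mathcal{G})$ is itself minor-closed whenever $\mathcal{G}$ is. With that in hand the contrapositive is immediate: if $G\in\mathcal{B}(\mathcal{G})\setminus\mathcal{G}$, pick $H\in\mathsf{obs}(\mathcal{G})$ with $H\leq G$; minor-closedness of $\mathcal{B}(\mathcal{G})$ gives $H\in\mathcal{B}(\mathcal{G})$, so every block of $H$ lies in $\mathcal{G}$, and since $H\notin\mathcal{G}$ it cannot be a block-graph. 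This avoids your sublemma entirely. That said, your sublemma is precisely the content one needs to verify that $\mathcal{B}(\mathcal{G})$ is minor-closed in the first place, so you are really making explicit what the paper sweeps into the earlier ``easy to verify'' remark; your write-up is more self-contained, the paper's is shorter given what it has already assumed.
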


\begin{lemma}[$\star$]
\label{label_constitutional}
For every non-trivial minor-closed class ${\cal G}$ and every $k\in \mathbb{N}$, if $Z\in {\sf obs}({\cal G}^{(k)})$, 
then (1) $Z$ is biconnected and (2) every vertex of degree $2$ in $Z$ has adjacent neighbors.
\end{lemma}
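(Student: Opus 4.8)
The plan is to derive both parts from the fact that ${\cal G}^{(k)}$ is invariant under the biconnected closure. By~\eqref{label_effectiveness} we have ${\cal G}^{(k)}={\cal B}({\cal X})$ for an appropriate non-trivial hereditary class ${\cal X}$, and since ${\cal B}({\cal B}({\cal Y}))={\cal B}({\cal Y})$ for every non-trivial hereditary ${\cal Y}$, this gives ${\cal B}({\cal G}^{(k)})={\cal G}^{(k)}$. As ${\cal G}^{(k)}$ is also non-trivial and minor-closed, \autoref{label_constitutional_more} yields that every member of ${\sf obs}({\cal G}^{(k)})$ is a block-graph; hence $Z$ is a block-graph, i.e.\ ${\sf bc}(Z)=\{Z\}$. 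Non-triviality of ${\cal G}$ gives $K_0,K_1\in{\cal G}\subseteq{\cal G}^{(k)}$, and for $k\ge 1$ also $K_2\in{\cal A}({\cal G})\subseteq{\cal G}^{(k)}$ (because $K_2\setminus v=K_1\in{\cal G}$); therefore $Z$ is none of $K_0,K_1,K_2$, so its unique block is a biconnected component, which proves~(1). (For $k=0$ with $K_2\notin{\cal G}$, i.e.\ ${\cal G}\subseteq{\cal E}$, the only additional possibility is $Z=K_2$, which is handled directly or by convention.)

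For~(2), assume towards a contradiction that some $v\in V(Z)$ has $N_Z(v)=\{a,b\}$ with $ab\notin E(Z)$; then, since $Z$ is biconnected, $|V(Z)|\ge 4$. Put $Z':=Z/va$. Using $ab\notin E(Z)$ one checks $Z'\cong(Z\setminus v)+ab$, so $Z$ is exactly $Z'$ with the edge $ab$ subdivided once; since subdividing an edge neither creates nor destroys biconnectivity, $Z'$ is again a biconnected block, and, being a proper minor of $Z$, it lies in ${\cal G}^{(k)}$ (as does $Z\setminus v$). It thus suffices to prove $Z\in{\cal G}^{(k)}$, contradicting $Z\in{\sf obs}({\cal G}^{(k)})$. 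By \autoref{label_exaggeration} fix a ${\cal G}$-block tree layout $(F,R,\tau)$ of $Z'$ of minimum depth, which is at most $k+1$ and, as $Z'$ is a biconnected block, has a single root.

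The idea is to turn $(F,R,\tau)$ into a layout of $Z$ of the same depth by ``re-subdividing'' $ab$. The set $\{t\in V(F)\mid\{a,b\}\subseteq V(G_t)\}$ is closed under taking ancestors (since descendant sets shrink going downwards); let $t^\ast$ be one of its deepest nodes. In the generic case one keeps $F$, $R$ and all values $\tau(x)$ with $x\ne v$, attaches a fresh leaf $q$ as a child of $t^\ast$, and sets $\tau(v)=\{q\}$ (if there are several deepest nodes, one attaches such a leaf below each and lets $\tau(v)$ collect them). Then $v$ lies in $G_t$ only for $t=q$ and for ancestors of $t^\ast$, all of which already contained $\{a,b\}$; consequently $G_q=K_1\in{\cal G}$, while every other $G_t$ is either unchanged or equals the old $G_t$ with $ab$ subdivided, and one verifies that conditions (1)–(5) of the layout are preserved, producing the required layout of $Z$. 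Two provisos must be arranged: that no $G_t$ was equal to the single edge $ab$ (the $K_2$-block case), and that attaching $q$ does not push the depth above $k+1$; both are handled via minor-minimality, by splicing in around $t^\ast$ a layout of $Z\setminus v$ (or of $Z\setminus va$, whose blocks are those of $Z\setminus v$ together with a bridge $\{v,b\}$), using $K_1\in{\cal G}$ and, for $k\ge1$, $K_2\in{\cal A}({\cal G}^{(k-1)})$; minimality of the chosen layout of $Z'$ keeps the surgery within depth $k+1$. An equivalent route, organised as an induction on $k$ via ${\cal G}^{(k)}={\cal B}({\cal A}({\cal G}^{(k-1)}))$: a vertex $u$ with ${\sf bed}_{\cal G}(Z'\setminus u)\le k-1$ is transferred to $Z$ — directly when $u\in\{a,b\}$ (then $Z\setminus u$ is $Z'\setminus u$ with a pendant edge, so its blocks are those of $Z'\setminus u$ plus a $K_2$), and through the inductive hypothesis when $u\notin\{a,b\}$ (then $Z\setminus u$ is $Z'\setminus u$ with an edge subdivided) — yielding ${\sf bed}_{\cal G}(Z)\le k$. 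In either approach the hard part is precisely this control of the block structure near $ab$ while keeping the depth bounded by $k+1$; this is also the point at which the case $k=0$ is delicate.
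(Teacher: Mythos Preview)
Your argument for (1) is correct and coincides with the paper's: ${\cal G}^{(k)}={\cal B}({\cal G}^{(k)})$ together with \autoref{label_constitutional_more} forces every obstruction to be a block-graph, hence biconnected.

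For (2) you share the paper's key idea: with $Z'=Z/va$ a biconnected proper minor lying in ${\cal G}^{(k)}$, it suffices to show that subdividing the edge $ab$ of $Z'$ (which recovers $Z$) keeps the graph in ${\cal G}^{(k)}$, and to do this by surgery on a ${\cal G}$-block tree layout. The difference is in how the surgery is carried out. Your uniform move of attaching a fresh child leaf $q$ below the deepest node $t^\ast$ containing $a$ and $b$ runs into more than the two ``provisos'' you list: when $t^\ast$ is itself a leaf, promoting it to an internal node also forces $|\tau^{-1}(\{t^\ast\})|=1$ by condition~(3), which fails since both $a$ and $b$ lie in $\tau^{-1}(\{t^\ast\})$. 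Your suggested repairs (``splicing in around $t^\ast$ a layout of $Z\setminus v$'', or the parallel induction via ${\cal G}^{(k)}={\cal B}({\cal A}({\cal G}^{(k-1)}))$) are only sketched, and the base case $k=0$ that you yourself flag as ``delicate'' is never actually resolved.

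The paper bypasses all of this with a short three-case analysis at the node $t$ where the edge $ab$ first disappears going down the tree: if $t$ is a leaf with $G_t$ biconnected, the subdivision vertex $v$ is placed at $t$ itself (no new node, no depth increase); if $t$ is a leaf with $G_t=K_2$, a \emph{sibling} leaf is added (same depth); if $t$ is internal, a child leaf is added (depth cannot grow since $t$ already has children). This case split handles your depth proviso and the $K_2$-block proviso simultaneously and keeps all layout conditions intact, so no auxiliary splicing or induction is required.
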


\begin{proof}
(1) follows directly from \autoref{label_constitutional_more},  \autoref{label_effectiveness}, and the fact that ${\cal B}({\cal G})={\cal B}({\cal B}({\cal G}))$.

For (2), we consider a biconnected graph $G$ with an edge $e=xy$
and the graph $G^+$ obtained if we remove $e$ from $G$ and 
add a new vertex $v$ adjacent to $x$ and $y$.\smallskip

We claim that if $G$ has a ${\cal G}$-block tree layout of depth $k$, then 
the same holds for $G^+$ as well.
As $G$ is biconnected, we may assume that $(T,\{r\},\tau)$
is a ${\cal G}$-block tree layout  where $T$ is a tree rooted on $r$.
We use the notation $G_{t}:=G[τ^{-1}({\sf d}_{T,\{r\}}(t))]$, $t\in V(T)$.
Let $t\in V(T)$ such that $G_{t}:=G[τ^{-1}({\sf d}_{T,\{r\}}(t))]$
contains the edge $e=xy$ and $e$ is not contained in $G_{t'}$
for some $t'\in {\sf d}_{T,\{r\}}(t)\setminus\{t\}$ (in case $t$ is not a leaf of $T$). Vertex $t$ is unique 
due to condition (2) of ${\cal G}$-block tree layout
and because of the fact that an edge cannot belong to two blocks of a graph.
We update the ${\cal G}$-block tree layout $(T,\{r\},\tau)$ by distinguishing three cases.
If $t$ is a leaf of $T$ and $G_{t}$ is biconnected, then we define $\tau'=\tau\cup\{(v,t)\}$ and $T'=T$.
If $t$ is a leaf of $T$ and  $G_{t}$ is not biconnected, then $G_{t}=(\{x,y\},\{xy\})$ 
and, in this case we define $T'$ by adding a new vertex $t'$ in $T$ and we set $τ'=τ\setminus\{(t,\{x,y\})\}\cup\{(t,\{x,v\}),(t',\{x,v\})\}$. In case $t$ is not a leaf of $T$, then 
one of the  endpoints, say $x$, of $e$ should be mapped, via $\tau$, to $t$.
Then we  define $T'$ by adding a new vertex $t'$ in $T$ adjacent to $t$
and we set $τ'=τ\cup\{(t',\{y,v\})\}$. In any of the above cases $(T',\{r\},\tau')$ is a  ${\cal G}$-block tree layout of $G^+$ of depth $k$. This completes the proof of claim.\smallskip

Suppose now that  there is an obstruction 
$Z$ of ${\cal G}^{(k)}$ that contains some vertex $v$ with two non-adjacent neighbors $x,y$.
As $Z$ is an obstruction, is should be biconnected (by (1)), therefore, from the above claim and \autoref{label_exaggeration}, it follows that the graph $Z'$ obtained by $G$ after contracting the edge $vx$ 
also belongs to ${\cal G}^{(k)}$, a contradiction to the fact that $Z$ is an obstruction.
%
%
 \end{proof}

The next lemma is a direct corollary of \autoref{label_trivialities}. This lemma specifies the structure of the obstructions of the  block closure of every non-trivial minor-closed class and the proof  is postponed  to \autoref{label_sadistically}.   
%
%

\begin{lemma}
\label{label_transportera}
For every essential class ${\cal Z}$, it holds that $s({\sf obs}({\cal B}({\sf excl}({\cal Z}))))\leq 2s  -1$, where $s$ is the maximum number of vertices of a graph in ${\cal Z}$.
\end{lemma}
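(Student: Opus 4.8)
The plan is to characterize the minor obstructions of $\mathcal{B}(\mathrm{excl}(\mathcal{Z}))$ directly in terms of $\mathcal{Z}$, and then read off the size bound. Write $\mathcal{G} = \mathrm{excl}(\mathcal{Z})$, so $\mathrm{obs}(\mathcal{G}) = \mathcal{Z}$, and let $Z^{\bullet} \in \mathrm{obs}(\mathcal{B}(\mathcal{G}))$. By \autoref{label_constitutional_more} (or directly: every block of a non-block graph is a proper minor of it), $Z^{\bullet}$ must be a \emph{block graph} — otherwise one of its blocks would already be a smaller graph outside $\mathcal{B}(\mathcal{G})$. Since $Z^{\bullet}$ is a block graph and lies outside $\mathcal{B}(\mathcal{G})$, the unique block of $Z^{\bullet}$ is not in $\mathcal{G}$, hence contains some $H \in \mathcal{Z}$ as a minor. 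The key structural claim is: one can take $Z^{\bullet}$ to be built from a single graph $H \in \mathcal{Z}$ by a bounded ``block-completion'' operation. Concretely, I expect \autoref{label_trivialities} (referenced but not yet stated in the excerpt) to say something like: every obstruction of $\mathcal{B}(\mathcal{G})$ is obtained from some $H \in \mathrm{obs}(\mathcal{G})$ that is \emph{not} itself a block graph by adding at most $|V(H)|-1$ new vertices to ``biconnect'' it — e.g., iteratively adding, for each cut structure, one apex-type vertex — so that the result becomes a single block but is still minor-minimal outside $\mathcal{B}(\mathcal{G})$.

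Granting that characterization, the size bound follows by counting. If $H \in \mathcal{Z}$ already is a block graph, then $H$ itself is the obstruction contributed (adding nothing), of size $|V(H)| \le s$. If $H$ is not a block graph, it has at least one cut vertex, and the ``block-completion'' needs to be applied; the worst case is when $H$ is as disconnected/tree-like as possible among $s$-vertex graphs, which costs at most $s-1$ extra vertices (for instance, $H$ a matching on $s$ vertices needs $\lceil s/2\rceil$ or so extra vertices, and the absolute worst is bounded by $s-1$). Hence $|V(Z^{\bullet})| \le s + (s-1) = 2s-1$. I would make the extremal count precise by the following reasoning: to turn a graph on $s$ vertices with $c$ connected components and no biconnectivity into a single block, it suffices to add $c$ new vertices each adjacent to everything (or fewer, using a more careful argument), and then argue that adding vertices beyond what is needed to biconnect and beyond what is needed to ``hit'' the cut vertices destroys minor-minimality — any superfluous added vertex could be contracted or deleted. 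Since $c \le s$ and in fact $c \le s$ with the bound tight only in degenerate cases, one gets $|V(Z^{\bullet})| \le 2s - 1$.

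The main obstacle is establishing the structural characterization cleanly — i.e., proving that every obstruction of $\mathcal{B}(\mathcal{G})$ arises from a \emph{single} obstruction $H$ of $\mathcal{G}$ by a \emph{minimal} biconnection, rather than from some more complicated interaction of several members of $\mathcal{Z}$ across different blocks. The delicate point: a block graph $Z^{\bullet}$ outside $\mathcal{B}(\mathcal{G})$ has \emph{one} block $B$ outside $\mathcal{G}$, but a priori $B$ could be much larger than any single obstruction; minimality of $Z^{\bullet}$ must force $B$ to contain exactly one obstruction $H$ as a minor ``efficiently'', with the rest of $Z^{\bullet}$ being only the machinery needed to (i) realize that minor model and (ii) make the whole thing a block. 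One has to show that any vertex or edge of $Z^{\bullet}$ not used by the minor model of $H$ and not needed for biconnectivity can be deleted or contracted without leaving $\mathcal{B}(\mathcal{G})$, contradicting minimality. This is exactly where \autoref{label_trivialities} does the heavy lifting, so in the final write-up I would simply invoke it; the present lemma is then a two-line corollary: take $H \in \mathrm{obs}(\mathcal{G})$ with $|V(H)| \le s$, note the biconnection adds at most $|V(H)| - 1 \le s - 1$ vertices, and conclude $s(\mathrm{obs}(\mathcal{B}(\mathrm{excl}(\mathcal{Z})))) \le 2s - 1$.
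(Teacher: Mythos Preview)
Your overall plan---invoke \autoref{label_trivialities} and read off the vertex bound---is exactly what the paper does; the lemma is stated as a direct corollary of \autoref{label_trivialities}, with no further argument. So the strategy is right.

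Your guess about the \emph{content} of \autoref{label_trivialities}, however, is off, and your ``adding apex vertices to biconnect'' picture is not what is actually proved. The lemma says: for every $G\in\mathrm{obs}(\mathcal{B}(\mathrm{excl}(\mathcal{Z})))$ there exists $H\in\mathcal{Z}$ such that $G$ can be transformed into $H$ by at most $|\mathrm{bc}(H)|-1$ edge \emph{deletions} and at most $|\mathrm{bc}(H)|-1$ edge \emph{contractions}. The counting is then immediate and cleaner than what you wrote: edge deletions do not change the vertex count, and each edge contraction drops it by one, so
\[
|V(G)| \;=\; |V(H)| + (\text{number of contractions}) \;\le\; |V(H)| + |\mathrm{bc}(H)| - 1 \;\le\; s + (s-1) \;=\; 2s-1,
\]
using the elementary fact that $|\mathrm{bc}(H)|\le |V(H)|$ for any graph $H$. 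No case split on whether $H$ is a block graph, and no discussion of apex vertices or ``worst-case matchings,'' is needed. The intuition you describe (pad $H$ to make it biconnected) is roughly the \emph{reverse} direction of what the structural lemma gives, and your speculative bounds on the padding are not what drive the $2s-1$; the parameter that controls the excess is $|\mathrm{bc}(H)|$, not the number of components or cut vertices per se.
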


%
%

An interesting algorithmic consequence of 
\autoref{label_transportera} is the following. The proof is tedious as it recycles standard techniques.
%

\begin{lemma}[$\star$]
\label{label_standardisation}
There is an explicit function $f:\mathbb{N}\to\mathbb{N}$ and an algorithm that, given a finite class ${\cal Z}$, where $s=s({\cal Z})$, a $n$-vertex graph $G$, and an integer $k$, outputs whether ${\sf bed}_{{\sf excl}{(\cal Z)}}(G)\leq k$ in $O(f(s,k)\cdot n^{2})$ time. Moreover, if ${\cal Z}$ contains some planar graph, then the dependence of the running time on $n$ is linear.
\end{lemma}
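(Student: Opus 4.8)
The plan is to reduce the problem to testing minor-containment of the members of the (finite) obstruction set ${\sf obs}({\cal G}^{(k)})$, where ${\cal G}={\sf excl}({\cal Z})$, and to make the procedure explicit by bounding the size of that obstruction set by an explicit function of $s$ and $k$. Recall that ${\cal G}$ is non-trivial and minor-closed, that both ${\cal A}$ and ${\cal B}$ preserve non-triviality and minor-closedness, and that, by~\eqref{label_effectiveness}, the class ${\cal G}^{(k)}$ is obtained from ${\cal G}$ by alternately applying, $k$ times each, the operations ${\cal B}$ and ${\cal A}$. Consequently ${\cal G}^{(k)}$ is minor-closed, so ${\sf obs}({\cal G}^{(k)})$ is finite, and $G\in{\cal G}^{(k)}$ iff no $H\in{\sf obs}({\cal G}^{(k)})$ is a minor of $G$.

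First I would bound $s({\sf obs}({\cal G}^{(k)}))$. The effect of a single application of ${\cal B}$ on the maximum size of an obstruction is controlled by \autoref{label_transportera} (it roughly doubles it), while the effect of a single application of ${\cal A}$ is controlled by the results of~\cite{AdlerGK08comp,SauST21kapiI}, which give an explicit bound on $s({\sf obs}({\cal A}({\cal C})))$ in terms of $s({\sf obs}({\cal C}))$ for every non-trivial minor-closed ${\cal C}$. Composing these $2k$ bounds along~\eqref{label_effectiveness} — and checking at each stage that the intermediate class is again non-trivial and minor-closed, so that the next bound applies — yields an explicit $b=b(s,k)$ with $s({\sf obs}({\cal G}^{(k)}))\le b$. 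Next I would actually compute ${\sf obs}({\cal G}^{(k)})$: since ${\cal Z}$ is given, membership in ${\cal G}={\sf excl}({\cal Z})$ is decidable, and hence so is membership in ${\cal G}^{(k)}$ (by unwinding the recursive definition of ${\sf bed}_{\cal G}$ on the finitely many choices available, or equivalently via the layouts of \autoref{label_exaggeration}); combined with the bound $b$ this is the standard ``decidability $+$ bounded obstructions $\Rightarrow$ computable obstructions'' argument of~\cite{AdlerGK08comp,BulianD17}: enumerate all graphs on at most $b$ vertices and retain exactly the minor-minimal ones outside ${\cal G}^{(k)}$.

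With ${\sf obs}({\cal G}^{(k)})$ in hand, on an arbitrary input $(G,k)$ we test, for each $H\in{\sf obs}({\cal G}^{(k)})$, whether $H\le G$ using the $O(n^{2})$-time minor-containment algorithm of~\cite{RobertsonS95b,KawarabayashiKR11thed}, and output \yes\ iff all tests fail; since $|{\sf obs}({\cal G}^{(k)})|$ and the orders of its members depend only on $s$ and $k$, this runs in $O(f(s,k)\cdot n^{2})$ time. For the planar refinement, suppose some $H_{0}\in{\cal Z}$ is planar; then ${\cal G}\subseteq{\sf excl}(\{H_{0}\})$ has treewidth at most some explicit $w_{0}=w_{0}(s)$ by the grid-minor theorem. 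As the treewidth of a graph equals the maximum treewidth of its blocks, ${\cal B}$ does not increase this bound, and each application of ${\cal A}$ increases it by at most one; hence by~\eqref{label_effectiveness} every graph in ${\cal G}^{(k)}$ has treewidth at most $w_{0}+k$. So on input $(G,k)$ we first decide in linear time whether ${\sf tw}(G)\le w_{0}+k$, outputting \no\ if not, and otherwise run the minor tests against ${\sf obs}({\cal G}^{(k)})$ on a bounded-treewidth graph, which is doable in linear time by a routine dynamic programming over a tree decomposition.

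The main obstacle is the bookkeeping in the size bound: composing the obstruction-size estimates for ${\cal A}$ (from~\cite{AdlerGK08comp,SauST21kapiI}) and for ${\cal B}$ (\autoref{label_transportera}) along the $2k$-fold alternation in~\eqref{label_effectiveness}, keeping every constant explicit, and verifying that each intermediate class remains non-trivial and minor-closed so that the next step of the composition is legitimate. Everything else — the computation of the obstruction set from its size bound and the decidability of ${\sf bed}_{{\sf excl}({\cal Z})}$, the minor tests, and the bounded-treewidth shortcut in the planar case — is a recycling of standard techniques.
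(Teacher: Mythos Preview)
Your proposal is correct and follows essentially the same route as the paper: bound $s({\sf obs}({\cal G}^{(k)}))$ by composing \autoref{label_transportera} for ${\cal B}$ with the explicit apex-obstruction bounds of~\cite{SauST21kapiI} for ${\cal A}$ along~\eqref{label_effectiveness}, compute the obstruction set from decidability plus the size bound, and then run the $O(n^2)$ minor-containment test; in the planar case, bound treewidth by the same ``${\cal B}$ preserves, ${\cal A}$ adds one'' argument and finish in linear time on bounded-treewidth graphs. The only cosmetic difference is that the paper phrases the last step via Courcelle's theorem and an MSO formula for minor-containment rather than ``routine dynamic programming,'' which amounts to the same thing.
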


\begin{proof}
Let $s=s({\cal Z})$, ${\cal G}={\sf excl}({\cal Z})$, and ${\cal Z}^{(k)}={\sf obs}({\cal G}^{(k)})$, for $k\in\mathbb{N}$.
According to the resent result in~\cite{SauST21kapiI}
there is an explicit function $\newfun{sdsfasdfdsf}:\mathbb{N}\to\mathbb{N}$ such that 
 $s({\bf obs}({\cal A}({\cal G}))) ≤ \funref{sdsfasdfdsf}(k,s)$. This, together with \autoref{label_transportera} and \autoref{label_effectiveness}, means that there is an explicit  function $\newfun{sdsfassdfdsf}:\mathbb{N}^{2}\to\mathbb{N}$ such that $s({\sf obs}({\cal G}^{(k)}))\leq \funref{sdsfassdfdsf}(k,s)$.
 The function $\funref{sdsfassdfdsf}$ along with the fact that the problem $\Pi_{\cal G}=\{(G,k)\mid G\in {\cal G}^{(k)}\}$ is decidable (actually, as observed in~\autoref{label_intermediary}, it is in {\sf NP}), implies that  the class ${\cal Z}^{(k)}$ can be constructed by an algorithm whose running time  is  some explicit function, say $\newfun{sdsfassdfsssdsf}:\mathbb{N}^{2}\to\mathbb{N}$, of $k$ and $s$.
Recall now that $G\in {\cal G}^{(k)}$ iff $ ∀Z\in  {\cal Z}^{(k)},$ $Z\nleq G$. Also  because of the algorithmic results in\cite{KawarabayashiKR11thed,RobertsonS95b}, deciding  whether a $z$-vertex graph $Z$ 
is a minor of a $n$-vertex graph $G$ can be done in $O(\newfun{sdsfsassdfsssdsf}(z)\cdot n^{2})$ time where $\funref{sdsfsassdfsssdsf}:\mathbb{N}\to\mathbb{N}$ is some 
explicit function (here $\funref{sdsfsassdfsssdsf}$ is enormous, however, it is indeed explicit -- see~\cite{RobertsonS95b,KawarabayashiW10asho}). This means that, after 
the construction of ${\cal Z}^{(k)}$  one may check whether $G\in {\cal G}^{(k)}$ in $O(\funref{sdsfassdfsssdsf}(k,s)+ \funref{sdsfsassdfsssdsf}(\funref{sdsfassdfdsf}(k,s))\cdot n^{2})$ time.

Suppose now that ${\cal Z}$ contains a planar graph. This, according to~\cite{Chuzhoy15impr}, implies that $\tw({\cal G})= s^{O(1)}$; we use $\tw({\cal G})$ for the maximum treewidth of a graph in ${\cal G}$ (here such a bound will always exist).
It is also easy to see that the treewidth of a non-empty graph is equal to the maximum treewidth of its blocks. This implies that $\tw({\cal  B}({\cal G})) = \tw({\cal G})$. Also, the addition of a vertex does not increase the treewidth of a graph by more than one. This implies that $\tw({\cal A}({\cal G})) \leq   \tw({\cal G})+1$. Given these two observations, and \autoref{label_effectiveness}, we obtain that $\tw({\cal G}^{(k)})=s^{O(1)}+k$. As deciding  whether $\tw(G)\leq q$ can be done in $O(\funref{sdsfasssssssdfdsf}(q)\cdot n)$ steps for some (explicit)
function $\newfun{sdsfasssssssdfdsf}:\mathbb{N}\to\mathbb{N}$ we may assume that $\tw(G)=s^{O(1)}+k$.
Recall that, according to Courcelle's theorem, if ${\cal Q}$ is a class 
for which there is  a formula $φ$ in monadic second order logic where $G\in{\cal Q}$ iff $G\modelsφ$
then there is an explicit function $\newfun{sdssfasssssssdfdsf}:\mathbb{N}^{2}\to\mathbb{N}$ and an algorithm that, given a graph $G$, can check whether $G\in {\cal G}$ in $O(\funref{sdssfasssssssdfdsf}(|φ|,\tw(G))\cdot n)$ time.
Therefore, the second statement of the lemma follows if we give 
a formula $φ_{k}$ such that $G\in {\cal G}^{(k)}$ iff $G\models φ_{k}$.
This follows from the known fact that for every graph $Z$ there is a formula $φ_{Z}$
such that $Z\leq G$ iff $G\modelsφ_{Z}$, therefore,   $G\in {\cal G}^{(k)}$ iff $\forall Z\in  {\cal Z}^{(k)}$, $\neg (G\modelsφ_{Z})$.
%
%
%
 \end{proof}

The next lemma permits us to assume that, in the definition of ${\sf bed}_{\cal G}$, the class ${\cal G}$ can be chosen so that ${\sf obs}({\cal G})$ is biconnected and, moreover,  such a ${\cal G}$ has an explicit obstruction characterization.

\begin{lemma}[$\star$]
\label{label_bouffonneries}
For every  essential class ${\cal Z}$ 
there is a biconnected essential class, in particular the class ${\cal Z'}={\sf obs}({\cal B}({\sf excl}({\cal Z})))$, such that  ${\sf bed}_{{\sf excl}(\cal Z)}$ and ${\sf bed}_{{\sf excl}({\cal Z}')}$ are the same parameter and, moreover, there is an explicit function $\newfun{asdfasdfsdsdfgsdgdfsgdfg}$ such that $s({\cal Z}')\leq \funref{asdfasdfsdsdfgsdgdfsgdfg}(s({\cal Z}))$.
\end{lemma}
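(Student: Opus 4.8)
The statement has three conjuncts: (i) ${\cal Z}'={\sf obs}({\cal B}({\sf excl}({\cal Z})))$ is a biconnected essential class, (ii) ${\sf bed}_{{\sf excl}({\cal Z})}$ and ${\sf bed}_{{\sf excl}({\cal Z}')}$ coincide as parameters, and (iii) a computable size bound $s({\cal Z}')\le \funref{asdfasdfsdsdfgsdgdfsgdfg}(s({\cal Z}))$. My plan is to derive all three essentially for free from the machinery already assembled in the excerpt, with the only genuine work being a short verification that the classes and operations involved stay inside the ``non-trivial, minor-closed'' world where those results apply.

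First I would set ${\cal G}={\sf excl}({\cal Z})$ and note that, since ${\cal Z}$ is essential, ${\cal G}$ is non-trivial and minor-closed; by the remark in \autoref{label_enchantement} the operation ${\cal B}$ preserves non-triviality and minor-closedness, so ${\cal B}({\cal G})$ is again non-trivial and minor-closed, hence ${\cal Z}'={\sf obs}({\cal B}({\cal G}))$ is a well-defined finite minor-antichain. To see that ${\cal Z}'$ is \emph{essential}, I must rule out $K_0,K_1\in{\cal Z}'$: this holds because ${\cal G}\subseteq{\cal B}({\cal G})$ and ${\cal G}$ is non-trivial, so ${\cal B}({\cal G})$ contains a non-empty graph and is not all graphs, i.e.\ ${\cal B}({\cal G})$ is non-trivial, which by the characterization ``${\cal G}'$ trivial iff ${\sf obs}({\cal G}')$ essential'' from \autoref{label_enchantement} gives that ${\cal Z}'$ is essential. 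Finally, ${\cal Z}'$ is \emph{biconnected}: by \autoref{label_constitutional_more} applied to the minor-closed class ${\cal B}({\cal G})$, we have ${\cal B}({\cal B}({\cal G}))={\cal B}({\cal G})$ — which is exactly the identity ``${\cal B}({\cal G})={\cal B}({\cal B}({\cal G}))$'' recorded in \autoref{an_topote} — and \autoref{label_constitutional_more} then yields that ${\sf obs}({\cal B}({\cal G}))={\cal Z}'$ is biconnected. That settles (i).

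For (ii): the same identity ${\cal B}({\cal G})={\cal B}({\cal B}({\cal G}))$, together with the observation in \autoref{an_topote} that this implies ${\sf bed}_{\cal G}={\sf bed}_{{\cal B}({\cal G})}$, gives immediately that ${\sf bed}_{{\sf excl}({\cal Z})}={\sf bed}_{\cal G}={\sf bed}_{{\cal B}({\cal G})}$. Since ${\cal B}({\cal G})$ is minor-closed with ${\sf obs}({\cal B}({\cal G}))={\cal Z}'$, we have ${\cal B}({\cal G})={\sf excl}({\cal Z}')$, hence ${\sf bed}_{{\cal B}({\cal G})}={\sf bed}_{{\sf excl}({\cal Z}')}$, and chaining the equalities proves (ii).

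For (iii): this is precisely \autoref{label_transportera}, which asserts $s({\sf obs}({\cal B}({\sf excl}({\cal Z}))))\le 2s({\cal Z})-1$; so $\funref{asdfasdfsdsdfgsdgdfsgdfg}$ can be taken to be $n\mapsto 2n-1$, which is certainly explicit. The only subtlety worth a sentence is effectivity of producing ${\cal Z}'$ itself: \autoref{label_transportera} bounds its size, and since membership in ${\cal B}({\cal G})$ is decidable (indeed in {\sf NP} by \autoref{label_intermediary}), one can enumerate all block-graphs up to $2s({\cal Z})-1$ vertices and extract the minor-minimal ones not in ${\cal B}({\cal G})$; but strictly the lemma only claims the size bound, so this remark is optional. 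I do not anticipate a real obstacle here — the lemma is a bookkeeping corollary of \autoref{label_constitutional_more}, \autoref{label_transportera}, and the self-idempotence of ${\cal B}$; the only place one could slip is forgetting to check that ${\cal B}({\cal G})$ remains non-trivial so that ${\cal Z}'$ is genuinely essential rather than merely a finite antichain.
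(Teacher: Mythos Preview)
Your proposal is correct and follows essentially the same route as the paper: set ${\cal G}={\sf excl}({\cal Z})$, use the idempotence ${\cal B}({\cal B}({\cal G}))={\cal B}({\cal G})$ together with \autoref{label_constitutional_more} for biconnectedness of ${\cal Z}'$, the same idempotence (the paper cites \eqref{label_effectiveness}, you cite the equivalent observation in \autoref{an_topote}) for ${\sf bed}_{\cal G}={\sf bed}_{{\cal B}({\cal G})}$, and \autoref{label_transportera} for the size bound. You are in fact more careful than the paper's terse proof in that you explicitly verify ${\cal Z}'$ is \emph{essential} and not merely biconnected; one tiny slip is that you quote the characterization from \autoref{label_enchantement} as ``${\cal G}'$ trivial iff ${\sf obs}({\cal G}')$ essential'' and then use it with the opposite polarity---the paper's sentence appears to contain a typo and the correct direction (non-trivial $\Leftrightarrow$ essential) is exactly what your argument uses.
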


\begin{proof}
Let ${\cal G}={\sf excl}({\cal Z})$. From \autoref{label_constitutional_more} and the fact that ${\cal B}({\cal G})={\cal B}({\cal B}({\cal G}))$, ${\sf obs}({\cal B}({\cal G}))$  is  biconnected. The same fact,  together with  \eqref{label_effectiveness},  imply that the parameter ${\sf bed}_{\cal G}$ is the same as ${\sf bed}_{{\cal B}(\cal G)}$.
The bound holds because of \autoref{label_transportera}.
 \end{proof}

\section{Structure of the obstructions for the biconnected closure}
\label{label_sadistically}


\paragraph{Minors.} We start with an alternative definition of the minor relation.
Let $G$ and $H$ be graphs and let $\rho : V(G)\rightarrow V(H)$ be a surjective mapping such that:\vspace{-0mm}
\begin{enumerate}
	\item for every vertex $v\in V(H)$, its codomain $\rho^{-1}(v)$ induces a connected graph $G[\rho^{-1}(v)]$,\vspace{-0mm}
	\item for every edge $\{u,v\}\in E(H)$, the graph $G[\rho^{-1}(u)\cup \rho^{-1}(v)]$ is connected, and\vspace{-0mm}
	\item for every edge $\{u,v\}\in E(G)$, either $\rho(u)=\rho(v)$ or $\{\rho(u), \rho(v)\}\in E(H)$.\vspace{-0mm}
\end{enumerate}
We say that {\em $H$ is a contraction of $G$ (via $\rho$)} and for a vertex $v\in V(H)$ we call the codomain $\rho^{-1}(v)$ the {\em model of $v$} in $G$.
A graph $H$ is a {\em minor of $G$} if there exists a subgraph $M$ of $G$ and a surjective function $\rho: V(M)\to V(H)$ such that $H$ is a contraction of $M,$ via $\rho$.

%
%

%


In this section we prove our next result, which can be seen as the biconnected analog of
\cite[Lemma 5]{BulianD17} where the structure of ${\sf obs}({\cal C}({\sf excl}({\cal Z})))$ is studied. 
The {connected closure} operation in \cite{BulianD17} allows for a shorter less complicated proof, since also the
structure of  graphs in ${\sf obs}({\cal C}({\sf excl}({\cal Z})))$ is simpler. However, in our results, where the deal with the {biconnected closure},
richer structural properties are revealed, resulting also in a more technical proof.

\begin{lemma}[$\star$]\label{label_trivialities}
Let  ${\cal Z}$ be a finite graph class. For every graph $G\in {\sf obs}({\cal B}({\sf excl}({\cal Z})))$
there is  a graph $H\in {\cal Z}$
such that $G$ can be transformed to $H$ after a sequence of at most $|{\sf bc}(H)|-1$ edge deletions and $|{\sf bc}(H)|-1$ edge contractions.
\end{lemma}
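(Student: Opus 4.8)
Let $G\in{\sf obs}({\cal B}({\sf excl}({\cal Z})))$. Since $G\notin{\cal B}({\sf excl}({\cal Z}))$, some block $B_{0}$ of $G$ is not in ${\sf excl}({\cal Z})$, hence $B_{0}$ contains some $H\in{\cal Z}$ as a minor. By minimality of $G$ under the minor relation, deleting or contracting any edge of $G$ lands us back inside ${\cal B}({\sf excl}({\cal Z}))$; in particular $G$ itself must be a block graph (this is exactly \autoref{label_constitutional}(1), obtained from \autoref{label_constitutional_more} applied to the minor-closed class ${\cal B}({\sf excl}({\cal Z}))$), so actually ${\sf bc}(G)=\{G\}$ and $H\leq G$. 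Fix a subgraph $M\subseteq G$ and a surjection $\rho\colon V(M)\to V(H)$ witnessing $H\leq G$, chosen so that $M$ has as few edges as possible and, subject to that, each model $\rho^{-1}(v)$ induces a tree in $M$. The goal is to show that the ``slack'' between $G$ and $H$ — the edges of $G$ not in $M$, plus the edges of $M$ contracted inside models — can be made very small, namely of total size controlled by $|{\sf bc}(H)|-1$ in each of the two categories.

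\textbf{Key steps.} First I would argue that $M=G$: if $M$ were a proper subgraph, pick an edge $e\in E(G)\setminus E(M)$; then $G\setminus e$ still has $H$ as a minor through the same $M,\rho$, but $G\setminus e$ may fail to be a block graph. This is the crux — deleting $e$ can disconnect a block of $G$ into two, and we must check that $H$ still ``fits'' inside one block of $G\setminus e$, contradicting $G\in{\sf obs}$. The structural input here is that $M$ is connected (it surjects onto the connected graph $H$... wait, $H$ need not be connected, so more care is needed: $M$ has one connected ``piece'' per connected component of $H$), and one shows that the models can be rerouted to stay within blocks. Second, similarly, I would try to contract every edge of $G$ lying \emph{between} two distinct models whenever doing so keeps $G$ a valid minor-parent of some member of ${\cal Z}$; the edges that genuinely must survive are, intuitively, the $|E(H)|$ ``skeleton'' edges realizing the edge set of $H$ plus whatever is needed to keep $G$ biconnected/block. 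The accounting that yields precisely $|{\sf bc}(H)|-1$ deletions and $|{\sf bc}(H)|-1$ contractions is the delicate combinatorial heart: one decomposes $H$ along its block tree and shows that each cut vertex of $H$ forces at most one extra ``gluing'' edge in $G$ (a contraction to perform) and at most one extra ``bypass'' edge (a deletion to perform) when reassembling $G$ from the blocks of $H$, while $G$ being a single block forces these gluings to actually occur. Counting cut-vertex incidences in the block-cut tree of $H$ gives the bound $|{\sf bc}(H)|-1$ in each category since a tree on $|{\sf bc}(H)|$ block-nodes has $|{\sf bc}(H)|-1$ edges.

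\textbf{Expected main obstacle.} The hard part will be the rerouting arguments: given a minor model of $H$ inside the block graph $G$ and an edge $e$ whose deletion or contraction we wish to perform, we must produce a \emph{new} model of $H$ (or of another member of ${\cal Z}$) inside a single block of the modified graph, or else conclude that $e$ is one of the $\leq 2(|{\sf bc}(H)|-1)$ ``essential'' edges. Handling contractions is easier (a contraction only coarsens models), so the genuine difficulty is edge deletion: I expect to need a careful case analysis on whether $e$ lies inside a model, between two models in the $H$-skeleton, or outside $M$, and in each case to use biconnectivity of the relevant block of $G$ together with the block structure of $H$ to reroute. A secondary technical point is that $H$ itself need not be biconnected, so ``inside a single block of $G$'' must be matched against ``the whole of $H$ embedded, with $H$'s own block structure'' — this is precisely why the bound is phrased via $|{\sf bc}(H)|$ rather than a constant, and it is why the proof is, as the authors warn, more technical than its connected analog in \cite{BulianD17}. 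I would set up the block-cut tree of $H$ first, pin down which edges of $G$ are forced by each block and each cut vertex of $H$, and only then run the minimality arguments to discard everything else.
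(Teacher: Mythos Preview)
Your proposal has the right scaffolding (fix a minimal model $(M,\rho)$, then count the slack in two buckets) but the first key step is wrong, and this propagates.

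\textbf{The claim $M=G$ is false.} When $e\in E(G)\setminus E(M)$ and you delete it, $G\setminus e$ does still contain $H$ as a minor, but the rescue you propose --- rerouting $H$ so that it lives in a single block of $G\setminus e$ --- cannot succeed in general. The paper's tight example (\autoref{label_descubrieras}) exhibits a biconnected $G$ with two edges outside any minimal model of $H$; those edges are there precisely to glue the blocks of $M$ into one block of $G$, and they are exactly the deletions counted by $|{\sf bc}(H)|-1$. So ``prove $M=G$, then count only contractions'' is a dead end, and indeed your own later paragraph (``one bypass edge per cut vertex'') contradicts it.

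\textbf{What the paper actually does.} The working property is $(\star)$: \emph{no proper minor of $G$ is simultaneously biconnected and contains $H$ as a minor}. This is strictly weaker than ``$H$ fits in one block of every proper minor'', and it is what minimality gives you. From $(\star)$ one first gets $V(M)=V(G)$. Then one argues that every edge of $E(G)\setminus E(M)$ runs between two different blocks of $M$ (else $(\star)$ is violated), and a short induction on $|{\sf bc}(M)|$ shows there are at most $|{\sf bc}(M)|-1\le |{\sf bc}(H)|-1$ such edges --- those are the deletions. For the contractions, the edges to contract are the edges \emph{inside} models (not between models, as you wrote; edges between models realise $E(H)$ and must survive). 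One shows via $(\star)$ that $\rho^{-1}(v)$ is a single vertex whenever $v$ is not a cut-vertex of $H$, and for each cut-vertex $v$ one bounds $|E(G[\rho^{-1}(v)])|\le |{\sf cc}(G\setminus \rho^{-1}(v))|-1$ by a tree-counting argument inside the model. Summing over cut-vertices and using $\sum_{v\in{\sf cv}(H)}(|{\sf blocks}(H,v)|-1)\le |{\sf bc}(H)|-1$ (the block--cut tree identity you alluded to) finishes the contraction count. Your intuition ``one extra contraction per cut vertex'' is too coarse: a single cut vertex of $H$ with many incident blocks can force several contractions, and only the \emph{total} is bounded by $|{\sf bc}(H)|-1$.
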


\begin{proof}
%
%
Let $G\in {\sf obs}({\cal B}({\sf excl}({\cal Z})))$.
We assume that $|V(G)|\geq 4$, since otherwise the lemma holds trivially.
Since $G\in {\sf obs}({\cal B}({\sf excl}({\cal Z})))$, $G$ is biconnected and also, the fact that $G\notin {\cal B}({\sf excl}({\cal Z}))$ implies that there exists a graph $H\in {\cal Z}$ that is a minor of $G$.
Moreover, since $G$ is a minor-minimal biconnected graph with the latter property, it holds that
\[
\mbox{no proper minor of $G$ is biconnected and contains $H$ as a minor.}\tag{$\star$}
\]
Let $M$ be a (vertex-minimal and, subject to this, edge-minimal) subgraph of $G$ such that 
there exists a surjective function $\rho: V(M)\to V(H)$ such that $H$ is a contraction of $M$ via $\rho$.
As $H$ is a minor of $G$, we know that a pair $(M,\rho)$ as above exists.
We begin with the following claim.
\medskip

\noindent{\em Claim 1:} $G$ can be transformed to $M$ after a sequence of at most $|{\sf bc}(H)|-1$ edge removals.\smallskip
\medskip

\noindent{\em Proof of Claim 1:}
We will prove that $V(M)=V(G)$ and $|E(G)\setminus E(M)|\leq |{\sf bc}(M)|-1$. This, together with the fact that 
$|{\sf bc}(M)|\leq |{\sf bc}(H)|$, will imply Claim 1.
To prove that $V(M)=V(G)$ observe that the existence of a vertex $v\in V(G)\setminus V(M)$ implies that an edge  $e\in E(G)$ incident to $v$ can be either contracted or removed from $G$ while maintaining biconnectivity and the fact that it contains $H$ as a minor, a contradiction to ($\star$).
We now set $E:= E(G)\setminus E(M)$ and we prove that $|E|\leq |{\sf bc}(M)| - 1$ by induction on the number of blocks of $M$.
First, notice that ($\star$) implies that every edge in $E$ is between vertices of different blocks of $M$.
This proves the base case where $|{\sf bc}(M)|=1$.
Suppose that $|{\sf bc}(M)|\geq 2$ and let $B$ be a block of $M$ that contains at most one cut-vertex.
By induction hypothesis, the edges in $\tilde{E} := E \cap E(\cupall ({\sf bc}(G)\setminus \{B\}))$ are at most $|{\sf bc}(M)| - 2$ and by ($\star$), there is at most one edge between the vertices of $B$ and $\cupall( {\sf bc}(G)\setminus \{B\})$.
Therefore, there is at most one edge in $E\setminus \tilde{E}$, which implies that $E\leq |{\sf bc}(M)|-1$.
Hence, Claim 1 follows.\hfill$\diamond$\medskip

We now prove the following. This, combined with Claim 1, completes the proof of the lemma.
\medskip

\noindent{\em Claim 2:} $M$ can be transformed to $H$ after a sequence of at most $|{\sf bc}(H)|-1$ edge contractions.
\medskip

\noindent{\em Proof of Claim 2:}
For every $v\in V(H)$, we set $X_v = \rho^{-1} (v)$.
We will prove that $\sum_{v\in V(H)} |E(G[X_v])| \leq |{\sf bc}(H)| -1$, which implies the above Claim.
We start with a series of observations.\smallskip

\noindent{\em Observation 1:} For every vertex $v\in V(H)$, the graph $G[X_v]$ is a tree.
Indeed, notice that edge-minimality of $M$ implies that $M[X_v]$ is a tree and ($\star$) implies that $E(G[X_v])\subseteq E(M[X_v])$.
\medskip

\noindent{\em Observation 2:} For every vertex $v\in V(H)$ and for every edge $xy\in E(G[X_v])$, $G\setminus \{x,y\}$ is disconnected. Indeed, if there was an edge $e=xy\in E(G[X_v])$ such that $G\setminus \{x,y\}$ is connected, then
$G/e$ would be biconnected, a contradiction to ($\star$).\medskip
%

We now prove that for every vertex $v\in V(H)$ that is not a cut-vertex of $H$, it holds that $|X_v|=1$.
Suppose towards a contradiction that $|X_v|\geq 2$.
By Observation 1, $G[X_v]$ is a tree and therefore, since $|X_v|\geq 2$, there exists an edge $e=xy\in E(G[X_v])$.
The fact that $H\setminus v$ is connected implies that $M\setminus X_v$ and, thus, $G\setminus X_v$ are also connected.
Moreover, due to  ($\star$), every leaf of $G[X_v]$ is adjacent to a vertex of $G\setminus X_v$, which implies that $G\setminus\{x,y\}$ is connected, 
a contradiction to Observation 2.

Next, we argue that the following holds.\medskip

\noindent{\em Subclaim:}
For every cut-vertex $v$ of $H$ it holds that $|E(G[X_v])|\leq  |{\sf cc}(G\setminus X_v)| -1$.
\medskip

\noindent{\em Proof of Subclaim:}
Let $v$ be a cut-vertex of $H$ and let ${\cal Q}=\{Q_1,\ldots, Q_w\}$ be the set ${\sf cc}(G\setminus X_v)$.
By Observation 1, $G[X_v]$ is a tree. For simplicity, we denote by $T$ the graph $G[X_v]$.
For every $i\in [w]$, we set $T_i$ to be the maximum size subtree of $T$ whose leaves are vertices of $N_G (Q_i)$.
We say that an edge $e=\{x,y\}\in E(T)$ is {\em small} if there is an $i\in [w]$ such that $V(T_i)= \{x,y\}$.
Also, given a tree $T$, the {\em internal} edges of $T$ are the ones that are not adjacent to one of its leaves.
%
%
%
We observe that
\begin{enumerate}
\item for every $e_1, e_2\in E(T)$, where $|e_1\cap e_2|=1$, there is an $i\in[w]$ such that $e_1,e_2\in E(T_i)$,
\item for every edge $e\in E(T)$, there are $i,j\in [w]$, where $i\neq j$, such that $e\in E(T_i) \cap E(T_j)$, and
\item every edge of $T$ that is either incident to a leaf of $T$ or an internal edge of some $T_i$, is small.
\end{enumerate}

The fact that $G$ is biconnected implies (1) and (2). To see why (3) holds,
%
%
%
let $e=xy$ be an edge of $T$ such that either one of $x,y$ is a leaf of $T$ or $e$ is an internal edge of some $T_i$, $i\in[w]$ and suppose, towards a contradiction, that $e$ is not small, or, equivalently, for every $j\in [w]$, $|V(T_j)|\geq 3$.
Then, for every vertex $w$ in $V(G)\setminus X_v$ there is a path connecting $w$ with a vertex of $T\setminus \{x,y\}$.
If $e$ is an internal edge of some $T_i$, every pair of vertices in $T\setminus \{x,y\}$ is connected by a path in $G\setminus \{x,y\}$, while if $e$ is incident to a leaf $x$ of $T$, we distinguish two cases:
if $y$ has degree two then $T\setminus \{x,y\}$ is connected, while if $y$ has degree at least three, then, by (1), for every pair $e_1,e_2$ of edges of $T$ incident to $y$, there is an $i\in[w]$ such that $e_1,e_2\in E(T_i)$ and therefore every two vertices in $T\setminus \{x,y\}$ are connected by a path in $G\setminus \{x,y\}$.
Therefore, in all cases, it holds that $G\setminus \{x,y\}$ is connected, a contradiction to Observation 2.
\smallskip

We assume that there is a non-leaf vertex $r\in V(T)$, since otherwise, Subclaim is directly derived from (2).
We consider the rooted tree $(T,r)$.
For every $x\in V(T)$, we consider the subtree $T_x= T[{\sf d}_{T,r} (x)]$ of $T$ and we set ${\sf tc} (x) =  |\{i\in[w]\mid T_i \subseteq T_x \mbox{ and $x$ is a leaf of $T_i$}\}|.$
Observe that $|E(T)| = |E(T_r)|$ and ${\sf tc}(r)\leq w-1$, since, by (1), there is an $i\in [w]$ such that $r$ is an internal vertex of $T_i$.
To conclude the proof of the Subclaim, we prove, by induction on the depth of $T$, that for every $x\in V(T)$,  $|E(T_x)|\leq {\sf tc}(x)$.
Due to (3), for every vertex $x\in V(T)$ that is incident to a leaf of $T$, $|E(T_x)|\leq {\sf tc}(x)$.
Let $T_x$ be a minimum subtree of $T$ whose number of edges is more than ${\sf tc}(x)$ and let $\{y_1,\ldots, y_m\}, m\geq 1$ be the children of $x$ in $(T,r)$.
Since $T_x$ is minimal, for every $i\in[m]$, $|E(T_{y_i})|\leq {\sf tc}(y_i)$.
Let $i\in [m]$ such that $y_i x$ is not small.
Due to (1), there is a $j_i\in [w]$ such that $y_i$ is an internal vertex of $T_{j_i}$ and, due to (3),
$y_i x$ is not an internal edge of $T_{j_i}$, i.e., $T_{j_i}\subseteq T_x$ and $x$ is a leaf of $T_{j_i}$.
Thus, for every $i\in [m]$, either $y_i x$ is small, or  there is a $j_i\in [w]$ such that $T_{j_i}\subseteq T_x$ and $x$ is a leaf of $T_{j_i}$.
Moreover, for every $i,i'\in [m]$, if $i\neq i'$, then $j_i \neq j_{i'}$.
This implies that ${\sf tc}(x)-\sum_{i\in[m]} {\sf tc}(y_i) \geq m$.
Thus, $|E(T_x)| = m + \sum_{i\in [m]} |E(T_{y_i})|  \leq m + \sum_{i\in[m]} {\sf tc}(y_i)\leq {\sf tc}(x)$, a contradiction to our initial assumption that $|E(T_x)|>{\sf tc}(x)$.
Subclaim follows.
\medskip

To conclude the proof of Claim 2, for every cut-vertex $v$ of $H$, we set ${\sf blocks}(H,v)$ to be the blocks of $H$ that contain $v$.
Observe that $|{\sf cc}(G\setminus X_v)|\leq |{\sf blocks}(H,v)|$.
We set ${\sf cv}(H)$ to be the set of cut-vertices of $H$ and we notice that $\sum_{v\in {\sf cv}(H)} |E(G[X_v])| \leq \sum_{v\in {\sf cv}(H)}(|{\sf cc}(G\setminus X_v)| -1)\leq  \sum_{v\in {\sf cv}(H)} (|{\sf blocks}(H,v)| - 1).$
The fact that $\sum_{v\in {\sf cv}(H)} (|{\sf blocks}(H,v)| -1)\leq |{\sf bc}(H)| -1$ implies that
$\sum_{v\in {\sf cv}(H)} |E(G[X_v])| \leq |{\sf bc}(H)| -1$.
The latter together with the fact that  for every vertex $v\in V(H)$ that is not a cut-vertex of $H$, it holds that $|X_v|=1$ completes the proof of Claim 2. \hfill$\diamond$\end{proof}

\begin{figure}[ht]
	\begin{center}
		\includegraphics[width=5.8cm]{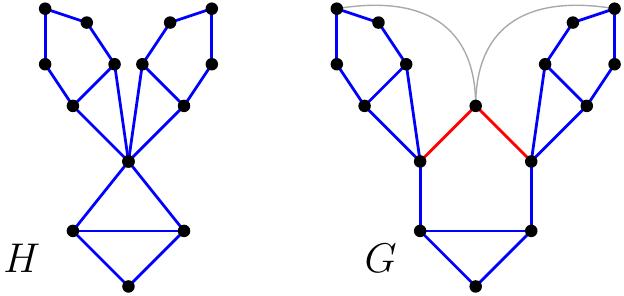}
	\end{center}		\vspace{-0mm}
	\caption{Example of a graph $H$ (on the left) and a graph $G\in {\sf obs}({\cal B}({\sf excl}(H)))$ (on the right) such that $G$ can be transformed to $H$ after exactly $|{\sf bc}(H)|-1$ edge deletions and $|{\sf bc}(H)|-1$ edge contractions.}
\label{label_descubrieras}
\end{figure}

We stress that the bounds on the number of operations in \autoref{label_trivialities} are tight in the sense that, given a graph $H$, there is a graph $G\in {\sf obs}({\cal B}({\sf excl}(\{H\})))$ such that $G$ can be transformed to $H$ after {\sl exactly} $|{\sf bc}(H)|-1$ edge deletions and $|{\sf bc}(H)|-1$ edge contractions.
For example, in \autoref{label_descubrieras}, the graph $H$ on the left has three blocks (i.e., $|{\sf bc}(H)|=3$), the graph $G$ on the right is a graph in  ${\sf obs}({\cal B}({\sf excl}(\{H\})))$, and to transform $G$ to $H$ one has to remove the two grey edges and contract the two red ones.

%

 \section{Outerplanar obstructions for block elimination distance}
 \label{op_rosete}

In this section we study the set ${\sf obs}({\cal G}^{(k)})$ for distinct instantiations of $k$ and ${\cal G}$.
As a warm up,  we prove the following lemma.

\begin{lemma}[$\star$] ${\sf obs}({\cal E}^{(1)})=\{K_{3}\}$  and ${\sf obs}({\cal E}^{(2)})$ consists of the graphs 
${}\atop ^{^{^{\includegraphics[width=4.2cm]{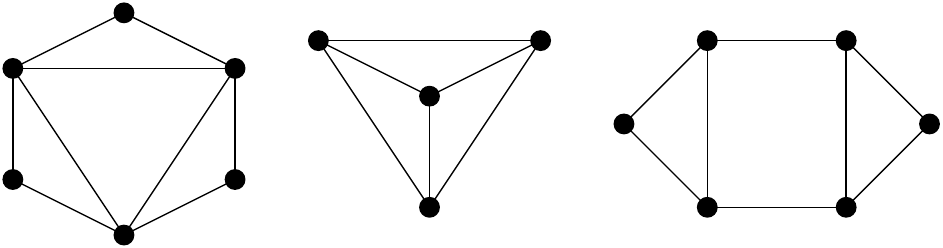}}}}$.
\label{label_unenforceability}
\end{lemma}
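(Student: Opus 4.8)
Plan: I would prove the two statements separately, in both cases first unwinding the definition of ${\cal E}^{(k)}$ via \eqref{label_effectiveness} and then invoking \autoref{label_constitutional}.

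For ${\sf obs}({\cal E}^{(1)})=\{K_3\}$, the point is that ${\cal E}^{(1)}$ is exactly the class of forests. Indeed, an edgeless block must be an isolated vertex, so ${\cal B}({\cal E})={\cal E}$ and hence, by \eqref{label_effectiveness}, ${\cal E}^{(1)}={\cal B}({\cal A}({\cal E}))$. Now ${\cal A}({\cal E})$ is the class of graphs in which all edges are incident to a common vertex (stars together with isolated vertices), and no biconnected graph on at least three vertices is of this form; therefore the blocks of a member of ${\cal B}({\cal A}({\cal E}))$ are isolated vertices and bridges, i.e.\ ${\cal E}^{(1)}$ is the class of forests. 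Since forests are precisely the $K_3$-minor-free graphs and $K_3$ is biconnected with every proper minor a forest, ${\sf obs}({\cal E}^{(1)})=\{K_3\}$ (consistently with \autoref{label_constitutional}: $K_3$ is biconnected and its degree-$2$ vertices lie in a triangle).

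For ${\sf obs}({\cal E}^{(2)})$ I would first identify the class: by \eqref{label_effectiveness} and the previous paragraph, ${\cal E}^{(2)}={\cal B}({\cal A}({\cal E}^{(1)}))={\cal B}({\cal A}(\text{forests}))$, and ${\cal A}(\text{forests})$ is exactly the class of graphs of feedback vertex number at most $1$; hence ${\cal E}^{(2)}$ is the class of graphs all of whose blocks have feedback vertex number at most $1$. In particular, for a biconnected graph $B$ we have $B\in{\cal E}^{(2)}$ iff $B$ has a vertex whose deletion yields a forest. By \autoref{label_constitutional}, every $Z\in{\sf obs}({\cal E}^{(2)})$ is biconnected (so its unique block is $Z$ itself, and $Z\notin{\cal E}^{(2)}$ means $Z$ has feedback vertex number at least $2$), every degree-$2$ vertex of $Z$ has adjacent neighbours, and all proper minors of $Z$ belong to ${\cal E}^{(2)}$. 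Now one runs the case analysis. First, $K_4$ is biconnected of feedback vertex number $2$, while $K_4\setminus e$ (a diamond) and $K_4/e$ ($K_3$) both have feedback vertex number at most $1$, so $K_4\in{\sf obs}({\cal E}^{(2)})$. Conversely, if $Z\in{\sf obs}({\cal E}^{(2)})$ and $Z\ne K_4$, then $Z$ cannot have a $K_4$-minor (else $K_4$ would be a proper minor of $Z$ outside ${\cal E}^{(2)}$), so $Z$ is series--parallel; a series--parallel graph of feedback vertex number at least $2$ contains two vertex-disjoint cycles (this follows from Lovász's description of the graphs with no two vertex-disjoint cycles, all of whose ``dense'' members have $K_4$-minors). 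Contracting an edge of one of these cycles keeps $Z$ biconnected and of feedback vertex number at least $2$, hence outside ${\cal E}^{(2)}$, hence by minimality equal to $Z$; this forces both cycles to be triangles. If $Z$ had a vertex outside these two triangles, then picking (by Menger's theorem in the biconnected graph $Z$) two vertex-disjoint chordless paths between the two triangles, the subgraph formed by the triangles and these two paths already has a minor equal to the graph $Z_1$ obtained from two disjoint triangles by adding two independent connecting edges; minimality and \autoref{label_constitutional}\,(2) (applied to the internal path vertices, which would be degree-$2$ with non-adjacent neighbours) then force this subgraph to be all of $Z$ and the two paths to be single edges, i.e.\ $Z=Z_1$. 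Likewise, more than two connecting edges between the two triangles would produce $Z_1$ as a proper subgraph, and two connecting edges sharing a vertex would make $Z$ fail to be biconnected; so $Z=Z_1$. Finally one verifies directly that $Z_1\in{\sf obs}({\cal E}^{(2)})$: it is biconnected of feedback vertex number $2$, while deleting or contracting any single edge of $Z_1$ leaves a graph each of whose blocks has feedback vertex number at most $1$. Hence ${\sf obs}({\cal E}^{(2)})=\{K_4,Z_1\}$, which is what the figure displays.

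The main obstacle is the third paragraph: ruling out all obstructions other than $K_4$ and $Z_1$, where one has to be careful with the structural input on graphs without two disjoint cycles and with the Menger/contraction argument that ``absorbs'' the graph into $Z_1$. A less self-contained but shorter alternative is to first apply \autoref{label_transportera} with a finite class ${\cal Z}$ satisfying ${\cal B}({\sf excl}({\cal Z}))={\cal E}^{(2)}$ to obtain an a priori bound on the number of vertices of the obstructions of ${\cal E}^{(2)}$, and then finish by a finite check of the biconnected graphs of that size, using \autoref{label_constitutional} to prune the search.
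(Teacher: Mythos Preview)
Your argument for ${\sf obs}({\cal E}^{(1)})=\{K_3\}$ is fine, but your computation of ${\sf obs}({\cal E}^{(2)})$ is wrong: the set has \emph{three} members, not two. Besides $K_4$ and the parallel join of two copies of $K_3$ (your $Z_1$), the \emph{triangular gluing} of three copies of $K_3$---the $3$-sun, i.e.\ a triangle $abc$ together with three extra vertices $w_1,w_2,w_3$ and edges $aw_1,bw_1,bw_2,cw_2,cw_3,aw_3$---is also an obstruction. Indeed, \autoref{label_forsakenness} applied with $G_1=G_2=G_3=K_3\in{\sf obs}({\cal E}^{(1)})$ and $k=1$ already places this graph in ${\sf obs}({\cal E}^{(2)})$, and it is outerplanar, hence $K_4$-minor-free and incomparable with both $K_4$ and your $Z_1$.

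The error is the sentence ``a series--parallel graph of feedback vertex number at least $2$ contains two vertex-disjoint cycles (this follows from Lov\'asz's description \ldots)''. This implication is false, and the $3$-sun is a direct counterexample: it is outerplanar (hence series--parallel), it has feedback vertex number exactly $2$ (deleting any single vertex leaves a cycle, while deleting any two of $a,b,c$ leaves a forest), yet every cycle of the $3$-sun uses at least two of $a,b,c$, so no two cycles can be vertex-disjoint. Lov\'asz's characterisation assumes minimum degree at least $3$; once you suppress the degree-$2$ vertices $w_i$ of the $3$-sun you land on the multigraph $K_3$ with all edges doubled, which is not covered by the simple-graph statement you invoke. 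Consequently the entire case analysis built on ``$Z$ contains two vertex-disjoint triangles'' collapses. The paper's proof avoids this by using Dirac's theorem to locate two non-adjacent degree-$2$ vertices (each in a triangle by \autoref{label_constitutional}.(2)) and then splitting on whether those two triangles share a vertex; the shared-vertex branch is precisely what produces the $3$-sun.
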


\begin{proof}
The fact that ${\sf obs}({\cal E}^{(1)})=\{K_{3}\}$ follows because each 
block of an acyclic graph is a $K_{2}$.
Let us name by $Z_1,Z_2$, and $Z_3$  the three graphs in second part of the statement  from left to right. 
It is easy to verify, by inspection, that $\{Z_1,Z_2,Z_3\}\subseteq {\sf obs}({\cal E}^{(2)})$.
We assume, towards a contradiction,
that $G\in {\sf obs}({\cal E}^{(2)})\setminus \{Z_1,Z_2,Z_3\}$.
%
%
%
Since $G$ is biconnected (\autoref{label_constitutional}.(1)) and $Z_{2}=K_{4}\nleq G$, then
by Dirac's Theorem \cite{Dirac60inabs}, there exist at least two non-adjacent vertices $x$ and $y$ of degree two in $G$. From \autoref{label_constitutional}.(2), each of $x$ and $y$
should belong to a triangle, say $T_{x}$ and $T_{y}$, and $T_{x}$, $T_{y}$ cannot be disjoint as, otherwise, because of its biconnectivity,
$G$ would contain $Z_{3}$ as a minor. Let $w$ be a common neighbor of $x$ and $y$.
%
%
%
%
%
%
%
%
If $G\setminus w$ contains a cycle $C$, then $C$ should intersect both  $T_{x}$ and $T_{y}$, otherwise, again by the biconnectivity, this would imply that  $Z_{3}\leq G$. But then $T_{1}\cup T_{2}\cup C$ contains $Z_{1}$ as a minor, a contradiction. We conclude that $G\setminus w$ is acyclic, therefore it belongs to ${\cal E}^{(2)}$, again a contradiction to the fact that $G\in {\sf obs}({\cal E}^{(2)})\setminus \{Z_1,Z_2,Z_3\}$.
%
%
%
 \end{proof}

\noindent Our objective is to generate obstructions of ${\cal G}^{(k+1)}$ using obstructions of ${\cal G}^{(k)}$. 
For this, we  define the following two operations. (See also \autoref{label_minuet}.)

\begin{itemize}
\item {\sl Parallel join}: Let $G_{1}$ and $G_{2}$ be graphs and let $v_{1}^{i},v_{2}^{i}\in V(G_{i})$, $i\in [2]$.
We denote by
$||(G_{1},v_{1}^{1},v_{2}^{1},G_{2},v_{1}^{2},v_{2}^{2})$ the graph obtained from the disjoint 
union of $G_{1}$ and $G_{2}$ after we add the edges $\{v_{i}^{1},v_{i}^{2}\}, i\in[2]$ and we call it the {\em parallel join} of $G_{1}$ and $G_{2}$ on $(v^{1}_{1},v^{1}_{2})$ and $(v^{2}_{1},v^{2}_{2})$.
\item {\em Triangular gluing}: Let $G_{1},G_{2},$ and $G_{3}$ be graphs
and let $v_{1}^{i},v_{2}^{i}\in V(G_{i})$, $i\in[3]$.
We denote by $\triangle(G_{1},v_{1}^{1},v_{2}^{1},G_{2},v_{1}^{2},v_{2}^{2},G_{3},v_{1}^{3},v_{2}^{3})$ the graph obtained from the disjoint union of $G_{1}$, $G_{2}$, and 
$G_{3}$ after we identify the pairs $v_{2}^{1}$ and $v_{1}^{2}$,
$v_{2}^{2}$ and $v_{1}^{3}$, and $v_{2}^{3}$ and $v_{1}^{1}$. We call this graph the {\em triangular gluing} of $G_{1}$, $G_{2}$, and $G_{3}$ on $(v^{1}_{1},v^{1}_{2})$, 
$(v^{2}_{1},v^{2}_{2})$, and $(v^{3}_{1},v^{3}_{2})$.
\end{itemize}

\begin{figure}[!h]
\centering
\begin{subfigure}{0.4\textwidth}
\centering
\begin{tikzpicture}[scale=0.8]
\begin{scope}[xshift=-1.5cm]
\node[v:main] (l1) at (35:1){};
\node[v:ghost] () at (35:0.6){$v^{1}_{1}$};
\node[v:main] (l2) at (-35:1){};
\node[v:ghost] () at (-35:0.6){$v^{1}_{2}$};
\end{scope}

\begin{scope}[xshift=1.5cm]
\node[v:main] (r1) at (145:1){};
\node[v:ghost] at (145:0.6){$v^{2}_{1}$};
\node[v:main] (r2) at (-145:1){};
\node[v:ghost] at (-145:0.6){$v^{2}_{2}$};
\end{scope}

\node () at (-1.9,0.3){$G_{1}$};

\node () at (1.9,0.3){$G_{2}$};

\begin{pgfonlayer}{background}
\draw[e:main] (l1) to (r1){};
\draw[e:main] (l2) to (r2){};

\draw[thick,fill=blue!40] (1.5,0) circle (1cm);
\draw[thick,fill=blue!40] (-1.5,0) circle (1cm);

\end{pgfonlayer}

\end{tikzpicture}
\end{subfigure}
\begin{subfigure}{0.4\textwidth}
\centering
\begin{tikzpicture}[scale=0.8]
\pgfdeclarelayer{background}
\pgfdeclarelayer{foreground}
\pgfsetlayers{background,main,foreground}

\node[v:main] () at (0,2.5){};
\node[v:main] () at (-1.5,0){};
\node[v:main] () at (1.5,0){};

\node[v:ghost] () at (-0.75,1.3){$G_{1}$};
\node[v:ghost] () at (0.75,1.3){$G_{3}$};
\node[v:ghost] () at (0,0){$G_{2}$};

\node[v:ghost] () at (2.45,0){$v^{2}_{2}=v^{3}_{1}$};
\node[v:ghost] () at (0,2.9){$v^{1}_{1}=v^{3}_{2}$};
\node[v:ghost] () at (-2.45,0){$v^{1}_{2}=v^{2}_{1}$};
\begin{pgfonlayer}{background}

\draw[e:main, bend left,fill=blue!40] (-1.5,0) to (0,2.5) to (-1.5,0){};
\draw[e:main, bend left,fill=blue!40] (1.5,0) to (0,2.5) to (1.5,0){};

\draw[e:main, bend left,fill=blue!40] (-1.5,0) to (1.5,0) to (-1.5,0){};

\end{pgfonlayer}
\end{tikzpicture}
\end{subfigure}
\vspace{-2mm}
\caption{On the left side we see the graph resulting from the parallel join of the graphs $G_{1}$ and $G_{2}$ on $(v^{1}_{1},v^{1}_{2})$ and $(v^{2}_{1},v^{2}_{2})$. 
On the right side we see the graph resulting from the triangular gluing of the graphs $G_{1}$, $G_{2}$, and $G_{3}$ on $(v^{1}_{1},v^{1}_{2})$, $(v^{2}_{1},v^{2}_{2})$, and $(v^{3}_{1},v^{3}_{2})$.}
\label{label_minuet}
\end{figure}
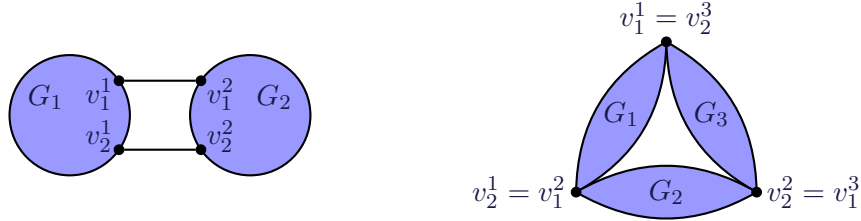

By the above constructions we can make the following observation.
\begin{observation}
Let $G_{1}$, $G_{2}$, and $G_{3}$ be graphs and $v_{1}^{i},v_{2}^{i}\in V(G_{i})$, $i\in[3]$. If $G_{1}$, $G_{2}$, and $G_{3}$ are biconnected then so are the graphs 
$||(G_{1},v_{1}^{1},v_{2}^{1},G_{2},v_{1}^{2},v_{2}^{2})$ and $\triangle(G_{1},v_{1}^{1},v_{2}^{1},G_{2},v_{1}^{2},v_{2}^{2},G_{3},v_{1}^{3},v_{2}^{3})$.
\end{observation}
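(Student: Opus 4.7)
The plan is to use the characterization that a connected graph on at least two vertices is biconnected if and only if it has no cut-vertex. Concretely, for each of the two constructions I would argue that for every vertex $v$ of the constructed graph $H$, the graph $H\setminus v$ remains connected; combined with the obvious connectedness of $H$ this yields biconnectivity. Connectedness of $H$ itself is immediate: in the parallel join the two new edges tie $G_{1}$ and $G_{2}$ together, and in the triangular gluing the three graphs share common vertices pairwise.

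For the parallel join $H=\, ||(G_{1},v_{1}^{1},v_{2}^{1},G_{2},v_{1}^{2},v_{2}^{2})$, take any $v\in V(H)$, say $v\in V(G_{1})$. Biconnectivity of $G_{1}$ gives that $G_{1}\setminus v$ is connected, and $G_{2}$ is untouched. Of the two new edges $\{v_{1}^{1},v_{1}^{2}\}$ and $\{v_{2}^{1},v_{2}^{2}\}$, at most one has an endpoint in $\{v_{1}^{1},v_{2}^{1}\}$ that equals $v$, so at least one of them survives and joins $G_{1}\setminus v$ to $G_{2}$. Therefore $H\setminus v$ is connected. The case $v\in V(G_{2})$ is symmetric.

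For the triangular gluing $H=\triangle(G_{1},v_{1}^{1},v_{2}^{1},G_{2},v_{1}^{2},v_{2}^{2},G_{3},v_{1}^{3},v_{2}^{3})$, write $a:=v_{2}^{1}=v_{1}^{2}$, $b:=v_{2}^{2}=v_{1}^{3}$, and $c:=v_{2}^{3}=v_{1}^{1}$; note that each $G_{i}$ contains exactly two of $\{a,b,c\}$. If $v$ is an ``internal'' vertex of some $G_{i}$, i.e.\ $v\notin\{a,b,c\}$, then $G_{i}\setminus v$ is still connected and still contains its two gluing vertices, while the other two $G_{j}$ are intact; the three pieces are then glued into a connected graph through $a,b,c$. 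If instead $v\in\{a,b,c\}$, say $v=a$, then $G_{1}\setminus a$ and $G_{2}\setminus a$ are connected (they contain $c$ and $b$ respectively by biconnectivity of $G_{1},G_{2}$), and $G_{3}$ is intact, biconnected, and contains both $b$ and $c$; hence $G_{3}$ provides a path linking $G_{1}\setminus a$ to $G_{2}\setminus a$, making $H\setminus v$ connected. The cases $v=b$ and $v=c$ are analogous by cyclically rotating the roles of $G_{1},G_{2},G_{3}$.

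The whole argument is essentially just bookkeeping; the only two ``structural'' remarks to isolate are (i) in a parallel join at least one of the two new edges survives the removal of any single vertex, and (ii) in a triangular gluing the $G_{i}$ not containing the deleted vertex of $\{a,b,c\}$ is always available to route a path between the remaining two gluing vertices. Once these are noted, the case analysis goes through directly, so I do not expect any genuine obstacle.
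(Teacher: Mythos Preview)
Your argument is correct and is exactly the kind of routine verification the authors had in mind: in the paper this statement is left as an unproved observation, justified only by the sentence ``By the above constructions we can make the following observation.'' Your cut-vertex case analysis supplies the missing details; the one tacit assumption you rely on, namely that $v_{1}^{i}\neq v_{2}^{i}$ in each $G_{i}$, is also implicitly assumed by the paper (and is necessary, since otherwise the observation fails).
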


\begin{lemma}[$\star$]
\label{label_conflictivitat}
Let ${\cal G}$ be a non-trivial and minor-closed class and $k\in\mathbb{N}$.
If $\mathsf{bed}_{\mathcal{G}}(G_{i})\geq k+1$, $i\in [2]$,  $v_{1}^{i},v_{2}^{i}\in V(G_{i}), i\in[2]$, and the graph $G=||(G_{1},v_{1}^{1},v_{2}^{1},G_{2},v_{1}^{2},v_{2}^{2})$ is biconnected, then $\mathsf{bed}_{\mathcal{G}}(G)\geq k+2$.
Moreover, under the assumption that  either ${\cal G}\neq{\cal E}$ or $k\geq 1$, the following holds:   
if $G_{1},G_{2}\in {\sf obs}({\cal G}^{(k)})$ and  $v_{1}^{i},v_{2}^{i}\in V(G_{i}), i\in[2]$, then $G\in {\sf obs}({\cal G}^{(k+1)})$.
\end{lemma}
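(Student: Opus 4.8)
The first part of the statement asks us to lower-bound $\mathsf{bed}_{\mathcal{G}}(G)$ for the parallel join $G=||(G_1,v_1^1,v_2^1,G_2,v_1^2,v_2^2)$; the natural tool is \autoref{label_exaggeration}, which translates $\mathsf{bed}$-bounds into the existence of a $\mathcal{G}$-block tree layout of a prescribed depth. So I would assume for contradiction that $\mathsf{bed}_{\mathcal{G}}(G)\leq k+1$, i.e.\ $G$ has a $\mathcal{G}$-block tree layout $(F,R,\tau)$ of depth at most $k+2$. Since $G$ is biconnected by hypothesis, $G$ is its own unique block, so $F$ is a tree with a single root $r$, and $|\tau^{-1}(\{r\})|=1$ — there is a single ``top'' vertex $a=\tau^{-1}(\{r\})$ (I'd want to check the edge-case where $r$ is a leaf, which forces $G\in\mathcal{G}$ and is easily excluded since $\mathcal{G}\subseteq\mathcal{G}^{(k)}$ and $\mathsf{bed}_{\mathcal{G}}(G_i)\geq k+1\geq 1$ already rules out $\mathcal{G}=\mathcal{G}^{(0)}$-membership-type triviality — more carefully, $\mathsf{bed}_{\mathcal{G}}(G_i)\geq 1$ forces $G_i\notin\mathcal{G}$). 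Now $a$ lies in at most one of $G_1,G_2$, say $a\notin V(G_2)$, so $G_2$ is an induced subgraph of $G\setminus a$. The subtree layout hanging below $r$ restricts to a $\mathcal{G}$-block tree layout of $G\setminus a$ of depth at most $k+1$, hence by induction/\autoref{label_exaggeration} we would get $\mathsf{bed}_{\mathcal{G}}(G\setminus a)\leq k$, and then (using that $\mathsf{bed}$ is monotone under taking subgraphs, which follows from minor-closedness of each $\mathcal{G}^{(j)}$) $\mathsf{bed}_{\mathcal{G}}(G_2)\leq k$, contradicting $\mathsf{bed}_{\mathcal{G}}(G_2)\geq k+1$. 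Symmetry handles $a\in V(G_2)$. This gives $\mathsf{bed}_{\mathcal{G}}(G)\geq k+2$.

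For the upper bound $\mathsf{bed}_{\mathcal{G}}(G)\leq k+2$ (needed for the ``moreover'' part), I would build a layout explicitly. Since $G_i\in\mathsf{obs}(\mathcal{G}^{(k)})$, every proper minor of $G_i$ — in particular $G_i\setminus v$ for any $v$ — lies in $\mathcal{G}^{(k)}$, so $G_i\setminus v_1^i\in\mathcal{G}^{(k)}$ and has a $\mathcal{G}$-block tree layout of depth $\leq k+1$. I take the two new ``parallel'' edges $\{v_1^1,v_1^2\}$ and $\{v_2^1,v_2^2\}$: deleting $v_1^1$ (say) from $G$ leaves a graph whose blocks are the blocks of $G_1\setminus v_1^1$, the blocks of $G_2$, and the bridge $\{v_2^1,v_2^2\}$ — wait, I must be careful: I instead delete a vertex so that the remainder decomposes nicely. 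The cleanest choice: the whole graph $G$ is biconnected, pick root vertex $= v_1^1$; then $G\setminus v_1^1$ has $G_1\setminus v_1^1$ (layout depth $\leq k+1$), all of $G_2$ together with the pendant edge to $v_2^1$, glued at cut vertices. For $G_2$ I again use that $G_2\setminus v_1^2\in\mathcal{G}^{(k)}$. Assembling a forest by taking these sublayouts, adding the pendant structures at leaves, and adding a single new root for the block $G$, I get total depth $k+2$; by \autoref{label_exaggeration}, $\mathsf{bed}_{\mathcal{G}}(G)\leq k+2$. Combined with the lower bound, $\mathsf{bed}_{\mathcal{G}}(G)=k+2$, i.e.\ $G\in\mathcal{G}^{(k+1)}$... no: $\mathsf{bed}_{\mathcal{G}}(G)=k+2$ means $G\notin\mathcal{G}^{(k+1)}$. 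Hmm — so I actually need $\mathsf{bed}_{\mathcal{G}}(G)\leq k+2$, which is $G\in\mathcal{G}^{(k+2)}$, and the lower bound $\geq k+2$ says $G\notin\mathcal{G}^{(k+1)}$; that is consistent with $G\in\mathsf{obs}(\mathcal{G}^{(k+1)})$ provided every proper minor of $G$ is in $\mathcal{G}^{(k+1)}$. So the real content of the ``moreover'' is: (a) $\mathsf{bed}_{\mathcal{G}}(G)\leq k+2$ (done above) and (b) minimality.

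For minimality I would take an arbitrary proper minor $G'<G$ and show $\mathsf{bed}_{\mathcal{G}}(G')\leq k+1$. Any such $G'$ is obtained by deleting/contracting at least one edge or deleting a vertex. I'd split on whether the ``touched'' element lies inside $G_1$, inside $G_2$, or is one of the two join edges. If it touches $G_1$, then $G'$ is a minor of $||(G_1',\ldots,G_2,\ldots)$ where $G_1'<G_1$ is a proper minor, so $G_1'\in\mathcal{G}^{(k-1)}$ (using $G_1\in\mathsf{obs}(\mathcal{G}^{(k)})$), hence $\mathsf{bed}_{\mathcal{G}}(G_1')\leq k-1$; then a layout argument analogous to the upper-bound construction gives depth $\leq k+1$ for $G'$ (root at a $v_1$, two children subtrees of depths $\leq k$ and $\leq k+1$). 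If the touched edge is a join edge, then $G'$ is (a minor of) the graph where $G_1$ and $G_2$ are glued along at most one vertex or disjoint, which is even easier: its blocks are blocks of $G_1$ and of $G_2$, and since deleting a suitable vertex of each $G_i$ lands in $\mathcal{G}^{(k-1)}$... here I must be slightly careful that $\mathsf{bed}$ of a 1-cut gluing is the max of the parts, giving $\leq k+1$. The hypothesis ``${\cal G}\neq{\cal E}$ or $k\geq 1$'' is exactly what I would expect to need to rule out a degenerate small case (e.g.\ when $\mathcal{G}=\mathcal{E}$, $k=0$: then $G_i\in\mathsf{obs}(\mathcal{E}^{(0)})=\{K_3\}$ and the parallel join of two triangles is already handled separately, or the claim would fail because a single triangle already has $\mathsf{bed}_{\mathcal{E}}=1$).

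\textbf{Main obstacle.} The delicate point is the bookkeeping in the layout-merging arguments: showing that when you glue two $\mathcal{G}$-block tree layouts under a fresh root (and attach the leftover pendant edges and the deleted ``apex'' vertices at the right leaves), conditions (1)–(5) of the $\mathcal{G}$-block tree layout definition are all preserved — in particular condition (5), that every non-trivial antichain spans a non-biconnected subgraph, which is where the fact that $G_1$ and $G_2$ meet only through the two join edges (forming a ``theta''-like structure whose removal of a $v_1^i$ destroys biconnectivity) is really used. Keeping track of which vertex of $G$ is the ``apex'' at the root, and verifying the non-biconnectivity conditions when one contracts a join edge (which can merge $v_1^1$ with $v_1^2$ and change the block structure), is the part that needs genuine care rather than routine symbol-pushing; I expect the contracted-join-edge subcase of minimality to be the fiddliest.
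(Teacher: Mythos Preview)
Your overall strategy is sound, but you are making this much harder than it needs to be by routing everything through the block-tree-layout characterization of \autoref{label_exaggeration}. The paper works directly from the recursive definition of $\mathsf{bed}_{\mathcal G}$: since $G$ is biconnected, $\mathsf{bed}_{\mathcal G}(G)\le k+1$ would give a vertex $v$ with $\mathsf{bed}_{\mathcal G}(G\setminus v)\le k$; as $v$ lies in at most one of $G_1,G_2$, the other $G_i$ is an induced subgraph of $G\setminus v$, so $\mathsf{bed}_{\mathcal G}(G_i)\le k$, a contradiction. That is the whole first part --- no layouts, no antichain conditions to verify. For minimality the paper likewise just deletes a single well-chosen vertex of $G'=G\setminus e$ or $G'=G/e$ and reads off that every block of the result is a subgraph of some $G_i$ minus a vertex or minus $e$, each with $\mathsf{bed}_{\mathcal G}\le k$, so $\mathsf{bed}_{\mathcal G}(G')\le k+1$. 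Your planned explicit construction showing $\mathsf{bed}_{\mathcal G}(G)\le k+2$ is unnecessary: once you know $G\notin\mathcal G^{(k+1)}$ and every proper minor is in $\mathcal G^{(k+1)}$, you are done.

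There are two concrete slips to fix. First, when an edge inside $G_1$ is modified, a proper minor $G_1'$ of $G_1\in\mathsf{obs}(\mathcal G^{(k)})$ lands in $\mathcal G^{(k)}$, \emph{not} $\mathcal G^{(k-1)}$; fortunately the argument still closes with the correct index. Second, your treatment of the join-edge case (``$G_1$ and $G_2$ glued along at most one vertex or disjoint'') covers only deletion, not contraction: when $e=v_1^1v_1^2$ is contracted, the other join edge $v_2^1v_2^2$ is still present and $G/e$ is typically biconnected, so you must again delete a vertex. The right vertex is the merged one $w$: then $G/e\setminus w$ decomposes into blocks of $G_1\setminus v_1^1$, blocks of $G_2\setminus v_1^2$, and the bridge $v_2^1v_2^2$. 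The first two have $\mathsf{bed}_{\mathcal G}\le k$ by the obstruction hypothesis, and the bridge $K_2$ has $\mathsf{bed}_{\mathcal G}\le k$ precisely because of the assumption ``$\mathcal G\neq\mathcal E$ or $k\ge 1$'' (which guarantees $K_2\in\mathcal G^{(k)}$). This, rather than your triangle example, is where that hypothesis is actually used; your identification of the ``fiddliest'' subcase was correct, but the resolution is a one-line vertex deletion, not a layout-merging argument.
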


\begin{proof}[Proof of \autoref{label_conflictivitat}]
Let $G_{1}$ and $G_{2}$ be two graphs such that $\mathsf{bed}_{\mathcal{G}}(G_{i})\geq k+1$, $i\in [2]$, and $v_{1}^{i},v_{2}^{i}\in V(G_{i}), i\in[2]$. 
We denote by $G$ the graph $||(G_{1},v_{1}^{1},v_{2}^{1},G_{2},v_{1}^{2},v_{2}^{2})$. First, we prove that 
$\mathsf{bed}_{\mathcal{G}}(G)\geq k+2$. Indeed, assume to the contrary, that 
$\mathsf{bed}_{\mathcal{G}}(G)\leq k+1$. 
Hence, there exists a vertex $v\in V(G)$ such that $k+1\geq 1+\mathsf{bed}_{\mathcal{G}}(G\setminus v)$ 
and therefore $\mathsf{bed}_{\mathcal{G}}(G\setminus v)\leq k$. Notice that $v$ is either a vertex of $V(G_{1})$ or a vertex of $V(G_{2})$. 
Without loss of generality, let $v\in V(G_{2})$. Then $G_{1}\subseteq G\setminus v$ and thus, 
$\mathsf{bed}_{\mathcal{G}}(G_{1})\leq \mathsf{bed}_{\mathcal{G}}(G\setminus v)\leq k$. 
A contradiction to the hypothesis that $\mathsf{bed}_{\mathcal{G}}(G_{1})\geq k+1$.

We now prove that, if $G_{1},G_{2}\in {\sf obs}({\cal G}^{(k)})$ then $G\in {\sf obs}({\cal G}^{(k+1)})$.
From~\autoref{label_constitutional}.(1) and the first part of the lemma it follows that $\mathsf{bed}_{\mathcal{G}}(G)\geq k+2$. Therefore, in order to prove that $G\in \mathsf{obs}(\mathcal{G}^{(k+1)})$
it is enough to prove that, any vertex/edge deletion or edge contraction on $G$ decreases the parameter by 1. 
In particular, notice that it suffices to show that any edge deletion or edge contraction decreases the parameter by 1.
Let $e\in E(G)$. Then either $e\in E(G_{1})$ or $e\in E(G_{2})$ or $e=\{v_{i}^{1},v_{i}^{2}\}$ for some $i\in[2]$. 
Let us first consider the case where $e\in E(G_{1})$ (the case where $e\in E(G_{2})$ is symmetrical). 
We will prove that $\mathsf{bed}_{\mathcal{G}}(G\setminus e)\leq k+1$ and $\mathsf{bed}_{\mathcal{G}}(G/e)\leq k+1$.
Let $G'=G\setminus e$ and let $v=v_{1}^{2}\in V(G_{2})$. Then $\mathsf{bed}_{\mathcal{G}}(G')\leq 1+\mathsf{bed}_{\mathcal{G}}(G'\setminus v)$. 
Let $H$ be a biconnected component of $G'\setminus v$. Notice that either $H\subseteq G_{2}\setminus v$ or $H\subseteq G_{1}\setminus e$. 
Since $G_{1}$ and $G_{2}$ belong to $\mathsf{obs}(\mathcal{G}^{(k)})$, $\mathsf{bed}_{\mathcal{G}}(G_{1}\setminus e)\leq k$ and $\mathsf{bed}_{\mathcal{G}}(G_{2}\setminus v)\leq k$. 
Therefore $\mathsf{bed}_{\mathcal{G}}(H)\leq k$. 
This implies that $\mathsf{bed}_{\mathcal{G}}(G'\setminus v)\leq k$ and thus $\mathsf{bed}_{\mathcal{G}}(G')\leq k+1$. 
Let now $G'=G/e$ and observe that the proof above also argues that $\mathsf{bed}_{\mathcal{G}}(G')\leq k+1$, after replacing $G_{1}\setminus e$ by $G_{1}/e$.

Finally, we consider the case where $e=v_{1}^{1}v_{1}^{2}$ (the case where $e=v_{2}^{1}v_{2}^{2}$ is symmetrical).
Let $G'=G\setminus e$. Notice that $G'$ is not biconnected and, moreover, its blocks are the graphs $G_{1}$ and $G_{2}$, and the bridge $B$ consisting of the edge $v_{1}^{1}v_{1}^{2}$. Recall that $G_{1},G_{2}\in\mathsf{obs}(\mathcal{G}^{(k)})$ and thus $\mathsf{bed}_{\mathcal{G}}(G_{i})=k+1$, $i\in [2]$.
Therefore, $\mathsf{bed}_{\mathcal{G}}(G')=\max\{\mathsf{bed}_{\mathcal{G}}(G_{1}),\mathsf{bed}_{\mathcal{G}}(G_{2}), {\sf bed}_{\cal G}(B)\}=k+1$.
Let now $G'=G/e$ and let $v=v_{2}^{1}$. As before, notice that the graph $G'\setminus v$ is not biconnected. Moreover, if $H$ is a block of $G'\setminus v$ then either $H=G_{2}$ or $H\subseteq G_{1}\setminus v_{2}^{1}$. Again, we obtain that $\mathsf{bed}_{\mathcal{G}}(H)\leq k+1$. This concludes the proof of the lemma.
 \end{proof}

\begin{lemma}[$\star$]
\label{label_forsakenness}
Let ${\cal G}$ be a non-trivial and minor-closed class and $k\in\mathbb{N}$. 
If $\mathsf{bed}_{\mathcal{G}}(G_{i})\geq k+1$, $i\in [3]$,  $v_{1}^{i},v_{2}^{i}\in V(G_{i}), i\in[3]$, and the graph $G=\triangle(G_{1},v_{1}^{1},v_{2}^{1},G_{2},v_{1}^{2},v_{2}^{2},G_{3},v_{1}^{3},v_{2}^{3})$ is biconnected,  then $\mathsf{bed}_{\mathcal{G}}(G)\geq k+2$.
Moreover, if  $G_{1},G_{2},G_{3}\in {\sf obs}({\cal G}^{(k)})$ and  $v_{1}^{i},v_{2}^{i}\in V(G_{i}), i\in[3]$, then 
$G  \in {\sf obs}({\cal G}^{(k+1)})$.
\end{lemma}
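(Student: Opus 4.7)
The proof will closely parallel that of Lemma~\ref{label_conflictivitat}, adapted to the triangular structure. For the first statement I would proceed by contradiction: suppose $\mathsf{bed}_{\mathcal{G}}(G) \leq k+1$. Since $G$ is biconnected and contains each $G_i$ as a subgraph (so in particular $G \notin \mathcal{G}$ as $\mathsf{bed}_{\mathcal{G}}(G_i)\geq k+1\geq 1$), the recursive definition of $\mathsf{bed}_{\mathcal{G}}$ yields a vertex $v \in V(G)$ with $\mathsf{bed}_{\mathcal{G}}(G \setminus v) \leq k$. The key combinatorial observation is that in the triangular gluing, only the three identification vertices $v_2^1=v_1^2$, $v_2^2=v_1^3$, $v_2^3=v_1^1$ belong to two of the $G_i$'s, while every other vertex belongs to exactly one. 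Hence there exists $i \in [3]$ with $V(G_i) \cap \{v\} = \emptyset$, which implies $G_i \subseteq G \setminus v$. Since $\mathcal{G}^{(k)}$ is minor-closed (hence $\mathsf{bed}_{\mathcal{G}}$ is monotone under taking subgraphs), we would obtain $\mathsf{bed}_{\mathcal{G}}(G_i) \leq k$, contradicting the assumption $\mathsf{bed}_{\mathcal{G}}(G_i)\geq k+1$.

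For the second statement, assume $G_1, G_2, G_3 \in \mathsf{obs}(\mathcal{G}^{(k)})$. By the first part $\mathsf{bed}_{\mathcal{G}}(G) \geq k+2$, so it suffices to check that every single-edge deletion or contraction yields a graph in $\mathcal{G}^{(k+1)}$; this follows by minor-closedness of $\mathcal{G}^{(k+1)}$ and also covers single-vertex deletions. Every edge of $G$ lies in some $E(G_i)$, so without loss of generality take $e \in E(G_1)$, and let $v = v_2^2 = v_1^3$ be the identification vertex shared by $G_2$ and $G_3$ but not present in $G_1$. I would then establish the following structural claim about $G' := G \setminus e$: the graph $G' \setminus v$ is obtained by gluing $G_1 \setminus e$, $G_2 \setminus v$, and $G_3 \setminus v$ along the two remaining identification vertices $v_2^1=v_1^2$ and $v_2^3=v_1^1$ in a path-like (tree-like) manner; that is, removing $v$ breaks the triangular cycle among the three pieces. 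Consequently every block of $G' \setminus v$ is contained in one of $G_1 \setminus e$, $G_2 \setminus v$, $G_3 \setminus v$. Using the obstruction hypothesis on each $G_i$, each of these three graphs satisfies $\mathsf{bed}_{\mathcal{G}}\leq k$, and by minor-monotonicity so does every block of $G' \setminus v$. The block-wise definition of $\mathsf{bed}_{\mathcal{G}}$ then gives $\mathsf{bed}_{\mathcal{G}}(G' \setminus v) \leq k$, hence $\mathsf{bed}_{\mathcal{G}}(G \setminus e) \leq k+1$. The argument for $G/e$ is identical after substituting $G_1/e$ for $G_1 \setminus e$, since $\mathsf{bed}_{\mathcal{G}}(G_1/e)\leq k$ as well by the obstruction property of $G_1$.

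The main obstacle, as in Lemma~\ref{label_conflictivitat}, is the structural analysis of $G' \setminus v$: one needs to verify carefully that deleting the identification vertex \emph{opposite} to the piece containing $e$ dissolves the cyclic arrangement of the three $G_i$'s into a tree-like gluing at two cut-vertices, so that the blocks of the resulting graph are exactly unions of blocks of the three individual pieces. Once this geometric point is nailed down, the remainder is a routine application of the obstruction hypothesis and the block-additive behaviour of $\mathsf{bed}_{\mathcal{G}}$. The only genuine new ingredient compared with the parallel join case is the correct choice of the vertex $v$, which must be the identification vertex avoided by the graph containing the modified edge.
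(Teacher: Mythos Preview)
Your proposal is correct and follows essentially the same route as the paper. The only difference is that the paper, in the second part, distinguishes two cases according to whether $G'=G\setminus e$ is biconnected (if not, each block of $G'$ already sits inside some $G_i$; if so, one removes the opposite identification vertex and argues on blocks of $G'\setminus v$), whereas you go directly to $G'\setminus v$ in all cases. Your uniform treatment is legitimate because $G'\setminus v\in\mathcal{G}^{(k)}$ implies $G'\in\mathcal{A}(\mathcal{G}^{(k)})\subseteq\mathcal{B}(\mathcal{A}(\mathcal{G}^{(k)}))=\mathcal{G}^{(k+1)}$ regardless of whether $G'$ is biconnected, so the case split in the paper is not strictly necessary. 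One small point you leave implicit: to invoke the first part in the second statement you need $G$ biconnected, which follows from the observation (stated just before Lemma~\ref{label_conflictivitat}) that the triangular gluing of biconnected graphs is biconnected, combined with \autoref{label_constitutional}(1) applied to the $G_i$.
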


\begin{proof}[Proof of \autoref{label_forsakenness}]
Let $G_{1}$, $G_{2}$, and $G_{3}$ be three graphs such that 
$\mathsf{bed}_{\mathcal{G}}(G_{i})\geq k+1$, $i\in [3]$.
We denote by $G$ the graph $\triangle(G_{1},v_{1}^{1},v_{2}^{1},G_{2},v_{1}^{2},v_{2}^{2},G_{3},v_{1}^{3},v_{2}^{3})$ and prove that 
$\mathsf{bed}_{\mathcal{G}}(G)\geq k+2$.
Indeed, assume to the contrary, that 
$\mathsf{bed}_{\mathcal{G}}(G)\leq k+1$. 
Hence, there exists a vertex $v\in V(G)$ such that $k+1\geq 1+\mathsf{bed}_{\mathcal{G}}(G\setminus v)$ 
and therefore $\mathsf{bed}_{\mathcal{G}}(G\setminus v)\leq k$. Notice that there exists $i\in [3]$ such that $v\notin V(G_{i})$. 
Without loss of generality, let $v\notin V(G_{2})$. Then $G_{1}\subseteq G\setminus v$ and thus, 
$\mathsf{bed}_{\mathcal{G}}(G_{1})\leq \mathsf{bed}_{\mathcal{G}}(G\setminus v)\leq k$. 
A contradiction to the hypothesis that $\mathsf{bed}_{\mathcal{G}}(G_{1})\geq k+1$.

We now prove that if  $G_{1},G_{2},G_{3}\in {\sf obs}({\cal G}^{(k)})$ then 
$G  \in {\sf obs}({\cal G}^{(k+1)})$.
The first part of the lemma, combined with~\autoref{label_constitutional}.(1), proves that $\mathsf{bed}_{\mathcal{G}}(G)\geq k+2$. Hence, it is enough to prove that
 any vertex/edge deletion or edge contraction on $G$ decreases the parameter by 1. 
In particular, notice that it suffices to show that any edge deletion or edge contraction decreases the parameter by 1.
Let $e\in E(G)$. Then $e\in E(G_{i})$ for some $i\in [3]$. Without loss of generality, we assume that $e\in E(G_{3})$.  Recall that the vertices $v_{2}^{1}\in V(G_{1})$ 
and $v_{1}^{2}\in V(G_{2})$ have been identified in $G$. We denote them by $v$.

Let $G'=G\setminus e$. We will prove that $\mathsf{bed}_{\mathcal{G}}(G')\leq k+1$. 
We distinguish two cases according to whether $G'$ is biconnected. 
Let us first assume that $G'$ is not biconnected. It follows that if $H$ is a block of $G'$ then either 
$H\subseteq G_{i}$ for some $i\in [2]$, or $H\subseteq G_{3}\setminus e\subseteq G_{3}$.
Since $G_{i}\in \mathsf{obs}(\mathcal{G}^{(k)})$, $i\in [3]$, it holds that $\mathsf{bed}_{\mathcal{G}}(G_{i})=k+1$ and hence $\mathsf{bed}_{\mathcal{G}}(H)\leq k+1$, for every block $H$ of $G'$. 
Moreover, from the definition, $\mathsf{bed}_{\mathcal{G}}(G')=\max\{\mathsf{bed}_{\mathcal{G}}(H) \mid H~\text{is a block of}~G'\}$. 
Thus, $\mathsf{bed}_{\mathcal{G}}(G')\leq k+1$.

Let us now assume that $G'$ is biconnected. Then, $\mathsf{bed}_{\mathcal{G}}(G')\leq 1+\mathsf{bed}_{\mathcal{G}}(G'\setminus v)$.
Observe that $G'\setminus v$ is not biconnected. Moreover, if $H$ is a block of $G'\setminus v$ then either $H\subseteq G_{i}\setminus v$, for some $i\in [2]$, or 
$H\subseteq G_{3}\setminus e$. Observe that since $G_{i}\in \mathsf{obs}(\mathcal{G}^{(k)})$, $i\in [3]$ it holds that $\mathsf{bed}_{\mathcal{G}}(G_{i}\setminus v)\leq k$, $i\in [2]$,
and $\mathsf{bed}_{\mathcal{G}}(G_{3}\setminus e)\leq k$. Thus, $\mathsf{bed}_{\mathcal{G}}(H)\leq k$, for any block $H$ of $G'\setminus v$. Therefore,
$\mathsf{bed}_{\mathcal{G}}(G')\leq 1+\mathsf{bed}_{\mathcal{G}}(G'\setminus v)\leq 1+\max\{\mathsf{bed}_{\mathcal{G}}(H)\mid H~\text{is a block of}~G'\setminus v\}
\leq 1+k$. This concludes the proof that the removal of any edge from $G$ decreases the parameter by 1.

To conclude, observe that the above arguments hold for the case that we consider edge contractions instead of deletions, 
if we replace $G'\setminus e$ by $G'/e$ and $G_{3}\setminus e$ by $G_{3}/e$. 
 \end{proof}
 
\autoref{label_conflictivitat} and \autoref{label_forsakenness} imply that the set $\bigcup_{i\geq 0}\mathsf{obs}(\mathcal{G}^{(i)})$
is closed under the parallel join and the triangular gluing operations.

The following is also a consequence of \autoref{label_conflictivitat} and \autoref{label_forsakenness}.
\begin{lemma}\label{label_asturias}
Let $\mathcal{G}$ be a non-trivial and minor-closed graph class and $k\in \mathbb{N}$. Let also $G$, $G_{1}$, $G_{2}$, and $G_{3}$ be graphs and 
$v^{i}_{1},v^{i}_{2}\in V(G_{i})$, $i\in [3]$. If $G=\triangle(G_{1},v_{1}^{1},v_{2}^{1},G_{2},v_{1}^{2},v_{2}^{2},G_{3},v_{1}^{3},v_{2}^{3})$ (or
$G=||(G_{1},v_{1}^{1},v_{2}^{1},G_{2},v_{1}^{2},v_{2}^{2})$) and $G\in \mathsf{obs}(\mathcal{G}^{(k+1)})$ then $G_{i}\in \mathsf{obs}(\mathcal{G}^{(k)})$ for all $i\in [3]$ (or $i\in [2]$). 
\end{lemma}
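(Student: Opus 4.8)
The statement is the ``converse'' direction of \autoref{label_conflictivitat} and \autoref{label_forsakenness}: if a parallel join or a triangular gluing of graphs $G_1,G_2$ (resp. $G_1,G_2,G_3$) lands exactly in $\mathsf{obs}(\mathcal{G}^{(k+1)})$, then each $G_i$ must already be an obstruction of $\mathcal{G}^{(k)}$. I will treat the triangular gluing case; the parallel join case is entirely analogous (with $[2]$ in place of $[3]$) and I would say so explicitly at the end. The plan is to argue by contradiction in two sub-steps for each fixed $i$, say $i=3$: first show $\mathsf{bed}_{\mathcal{G}}(G_i)\le k$ is impossible, then show $\mathsf{bed}_{\mathcal{G}}(G_i)\ge k+2$ is impossible; together these force $\mathsf{bed}_{\mathcal{G}}(G_i)=k+1$, and a minor-minimality check then upgrades this to $G_i\in\mathsf{obs}(\mathcal{G}^{(k)})$.

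\textbf{Key steps.} For the lower bound $\mathsf{bed}_{\mathcal{G}}(G_i)\ge k+1$: suppose some $G_i$ has $\mathsf{bed}_{\mathcal{G}}(G_i)\le k$. Each $G_i$ is a subgraph of $G$ (the gluing only identifies vertices across different parts, so $G_i\subseteq G$ as a subgraph up to the identified cut-vertices), hence a minor of $G$; more to the point, I would directly invoke the fact established in the proofs of \autoref{label_conflictivitat}/\autoref{label_forsakenness} that $G_i$ itself is a block of $G\setminus v$ for a suitable cut-vertex $v$ of $G$ (one of the three identification points), which forces $\mathsf{bed}_{\mathcal{G}}(G)\le k+1$, contradicting $G\in\mathsf{obs}(\mathcal{G}^{(k+1)})$. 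For the upper bound $\mathsf{bed}_{\mathcal{G}}(G_i)\le k+1$: suppose $\mathsf{bed}_{\mathcal{G}}(G_i)\ge k+2$ for some $i$, say $i=3$. Since $G_3$ is obtained from $G$ by a sequence of vertex deletions (deleting all vertices of $V(G_1)\cup V(G_2)$ that are not identified with vertices of $G_3$) and possibly merging — actually $G_3$ is an induced subgraph of $G$ — we have $\mathsf{bed}_{\mathcal{G}}(G_3)\le\mathsf{bed}_{\mathcal{G}}(G)=k+1$ (using that $\mathsf{bed}$ is monotone under taking subgraphs/minors and that $G\in\mathcal{G}^{(k+1)}$ since it is an obstruction of it, hence $\mathsf{bed}_{\mathcal{G}}(G)=k+1$). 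This is the contradiction. Combining, $\mathsf{bed}_{\mathcal{G}}(G_i)=k+1$ for every $i$.

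\textbf{From $\mathsf{bed}_{\mathcal{G}}(G_i)=k+1$ to $G_i\in\mathsf{obs}(\mathcal{G}^{(k)})$.} It remains to check minor-minimality: for every edge deletion or contraction (and vertex deletion, which is dominated) applied to $G_i$, the resulting graph lies in $\mathcal{G}^{(k)}$. Here I would go the other way: take any proper minor $G_i'$ of $G_i$ obtained by a single edge operation, plug it back into the gluing in place of $G_i$ to obtain a proper minor $G'$ of $G$. Since $G\in\mathsf{obs}(\mathcal{G}^{(k+1)})$, we get $G'\in\mathcal{G}^{(k+1)}$, i.e. $\mathsf{bed}_{\mathcal{G}}(G')\le k+1$. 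Now I reuse the block-decomposition argument from the proofs of \autoref{label_conflictivitat}/\autoref{label_forsakenness}: since $\mathsf{bed}_{\mathcal{G}}(G_j)=k+1$ is \emph{exactly} $k+1$ for the unmodified parts $G_j$ ($j\ne i$), and $G'$ has $\mathsf{bed}_{\mathcal{G}}(G')\le k+1$ while containing the $G_j$'s essentially as blocks of $G'\setminus v$ for the appropriate cut-vertex $v$, the only way the bound $k+1$ can hold is if $\mathsf{bed}_{\mathcal{G}}(G_i')\le k$; here I use that $G_i'$ (or a block thereof) appears as a block of $G'\setminus v$ alongside the full-strength $G_j$'s, so its $\mathsf{bed}$ cannot exceed $k$ without pushing $\mathsf{bed}_{\mathcal{G}}(G')$ to $k+2$. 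Hence $G_i'\in\mathcal{G}^{(k)}$, establishing minimality and therefore $G_i\in\mathsf{obs}(\mathcal{G}^{(k)})$.

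\textbf{Main obstacle.} The delicate point is the last step: disentangling which block of $G'\setminus v$ (after the edge modification inside $G_i$) actually carries the modified copy $G_i'$, and checking that the modification cannot ``escape'' the block structure — e.g. an edge contraction inside $G_i$ incident to an identified vertex might merge parts of $G_i'$ with $G_j$. I expect to handle this by case analysis mirroring the case split in the proof of \autoref{label_forsakenness} (edge in $G_i$ not incident to an identification vertex vs. incident to one; $G'$ biconnected vs. not), at each point reading off the block decomposition of $G'$ or $G'\setminus v$ and observing that each block is contained in one of $G_1,G_2,G_3$ or their single-edge modifications. Once the blocks are correctly identified, the numerical bookkeeping ($\mathsf{bed}$ of a graph equals the max over its blocks; vertex deletion drops $\mathsf{bed}$ by at most one) is routine and identical to what already appears in the preceding two lemmas.
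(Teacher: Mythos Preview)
Your upper–bound step contains a slip: you write ``$G\in\mathcal G^{(k+1)}$ since it is an obstruction of it'', but an obstruction of $\mathcal G^{(k+1)}$ is by definition \emph{not} in $\mathcal G^{(k+1)}$; in fact $\mathsf{bed}_{\mathcal G}(G)=k+2$. The correct argument for $\mathsf{bed}_{\mathcal G}(G_i)\le k+1$ is that each $G_i$ is a \emph{proper} minor of $G$, hence lies in $\mathcal G^{(k+1)}$.

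The real problem is your lower–bound step. You argue that if $\mathsf{bed}_{\mathcal G}(G_i)\le k$ then, since $G_i$ is a block of $G\setminus v$ for the opposite identification vertex $v$, one gets $\mathsf{bed}_{\mathcal G}(G)\le k+1$. This does not follow: $\mathsf{bed}_{\mathcal G}(G\setminus v)$ is the \emph{maximum} over all blocks, and the other blocks (coming from $G_j\setminus v$, $j\neq i$) may well have $\mathsf{bed}_{\mathcal G}$ equal to $k+1$. In fact the lower bound cannot be salvaged, because the lemma as literally stated is false. Take $\mathcal G=\mathcal E$ and $k=1$: the graph $\|(K_3,K_3)\in\mathsf{obs}(\mathcal E^{(2)})$ can also be written as a triangular gluing $\triangle(K_3,\,H,\,K_2)$, where $H$ is $K_3$ with a pendant edge (choose the identification vertices to be the two vertices of one $K_3$ together with one endpoint of a connecting edge). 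Here $K_2\notin\mathsf{obs}(\mathcal E^{(1)})=\{K_3\}$ and $H\notin\mathsf{obs}(\mathcal E^{(1)})$ either.

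What the paper actually uses (in the proof of \autoref{label_desencantado}) is the weaker statement in which one already knows $\mathsf{bed}_{\mathcal G}(G_i)\ge k+1$ for all $i$ before invoking the lemma; under that extra hypothesis your step~(iii) is exactly right and does follow from the first parts of \autoref{label_conflictivitat} and \autoref{label_forsakenness}: if $G'=\triangle(G_1,G_2,G_3')$ is a proper minor of $G$ (hence $\mathsf{bed}_{\mathcal G}(G')\le k+1$) and is biconnected, then the witnessing vertex $v$ with $\mathsf{bed}_{\mathcal G}(G'\setminus v)\le k$ must lie in $V(G_1)\cap V(G_2)$ (otherwise some $G_j$ with $\mathsf{bed}_{\mathcal G}(G_j)=k+1$ would sit inside $G'\setminus v$), whence $G_3'\subseteq G'\setminus v$ and $\mathsf{bed}_{\mathcal G}(G_3')\le k$. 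So your minor-minimality argument is sound; it is only the unconditional lower bound that fails, and with it the lemma in its stated generality.
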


We denote by ${\cal O}$  the 
class of all outerplanar graphs.
We claim that ${\cal O}\cap \bigcup_{i\geq 1}\mathsf{obs}(\mathcal{G}^{(i)})$ is complete 
under these two operations. In particular we prove the following:

\begin{lemma}[$\star$]
\label{label_desencantado}Let ${\cal G}$ be a non-trivial and minor-closed class.
For every $k\in\mathbb{N}_{\geq 1}$ and for  every graph $G\in\mathsf{obs}(\mathcal{G}^{(k+1)})\cap {\cal O}$,
there are 
\begin{itemize}
\item either two graphs $G_{1}$ and $G_{2}$  of ${\sf obs}({\cal G}^{(k)})\cap {\cal O}$
and $v_{1}^{i},v_{2}^{i}\in V(G_{i})$, $i\in[2]$,
such that $G=||(G_{1},v_{1}^{1},v_{2}^{1},G_{2},v_{1}^{2},v_{2}^{2})$ or 

\item  three graphs $G_{1}, G_{2}$ and $G_{3}$
 of ${\sf obs}({\cal G}^{(k)})\cap {\cal O}$ and $v_{1}^{i},v_{2}^{i}\in V(G_{i})$, $i\in[3]$,
such that $G=\triangle (G_{1},v_{1}^{1},v_{2}^{1},G_{2},v_{1}^{2},v_{2}^{2},G_{2},v_{1}^{3},v_{2}^{3})$.
\end{itemize} 
\end{lemma}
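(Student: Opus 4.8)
The plan is to start from an outerplanar obstruction $G \in \mathsf{obs}(\mathcal{G}^{(k+1)}) \cap \mathcal{O}$ and exploit its structure. By \autoref{label_constitutional}, $G$ is biconnected and every degree-$2$ vertex of $G$ has adjacent neighbours. Since $G$ is biconnected and outerplanar, it has a unique Hamiltonian cycle $C$ bounding its outer face, and all other edges are non-crossing chords. First I would observe that $G$ is not a single cycle (a cycle $C_n$ has $\mathsf{bed}_{\mathcal{G}} \le 1$ for every non-trivial $\mathcal{G}$, except in the trivial corner cases excluded by the hypothesis $k \ge 1$), so $G$ has at least one chord. Pick a chord $e = xy$ that is ``innermost'' in the sense that one of the two regions it bounds together with an arc of $C$ contains no other chord — i.e.\ $e$ cuts off a cycle. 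Actually, the cleaner route is to pick a chord and look at the two ``sides'': each chord of $C$ splits $G$ into two outerplanar graphs glued along $e$.

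The key structural dichotomy I would set up is the following. Consider the weak dual tree $\mathcal{T}$ of the outerplanar embedding of $G$ (one node per inner face, adjacency across chords). Since $G$ is not just a cycle, $\mathcal{T}$ has at least two nodes. I want to find either (a) a chord $e$ whose removal/contraction witnesses a parallel-join decomposition, or (b) a triangular face (triangle bounded by two chords and an edge of $C$, or by three chords) that witnesses a triangular-gluing decomposition. Concretely: take a leaf face $F$ of $\mathcal{T}$; it is bounded by exactly one chord $e = xy$ and a path $P$ of $C$. If $P$ is a single edge, then $F$ is a triangle and $x,y$ together with the third vertex $z$ give a triangle; cutting at the three ``corners'' should express $G$ as a triangular gluing $\triangle(G_1,\dots,G_2,\dots,G_3,\dots)$ where $G_3$ is essentially trivial — but a trivial third graph is not an obstruction, so this case needs care and likely forces a genuinely different argument: one should instead argue that the triangle-leaf can be absorbed, reducing to a smaller obstruction. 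If $P$ has length $\ge 2$, the chord $e$ separates $G$ into two outerplanar pieces $G_1', G_2'$ sharing the edge $e$; I would then build the parallel-join decomposition by taking $G_1 = G_1' \setminus e$ plus the two endpoints with a fresh labelling and similarly $G_2$, so that $G = ||(G_1, x, y, G_2, x', y')$ after renaming the shared vertices — but this does not literally match the parallel-join definition (which takes disjoint unions and adds a matching of size $2$), so I would have to set it up as: $G$ minus the chord $e$ is a graph with a $2$-cut $\{x,y\}$, and the two ``bridges'' over this $2$-cut, each augmented by the edge $xy$, are the two factors.

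The heart of the proof is then to verify the obstruction property of the factors. Suppose $G = ||(G_1,\dots,G_2,\dots)$ with $G$ biconnected. By \autoref{label_asturias} (the converse direction of \autoref{label_conflictivitat} and \autoref{label_forsakenness}), if $G \in \mathsf{obs}(\mathcal{G}^{(k+1)})$ then automatically $G_i \in \mathsf{obs}(\mathcal{G}^{(k)})$ for each $i$ — so I do not need to reprove that; I only need to produce \emph{some} decomposition of $G$ of one of the two allowed forms into biconnected pieces. Thus the real content is purely structural: \emph{every biconnected outerplanar graph that is not a cycle is either a parallel join of two biconnected outerplanar graphs or a triangular gluing of three biconnected outerplanar graphs.} I would prove this as a standalone claim by induction on the number of chords, using the weak dual tree: a leaf face whose boundary chord $e$ has the rest of $G$ on the other side gives a parallel join if the leaf face has $\ge 2$ edges on $C$ (split off a ``fan''/``cycle'' as $G_2$ and the rest as $G_1$), and if the leaf face is a triangle, either it glues three pieces around a triangle or, if that fails, one merges it with the adjacent face and applies induction — but since the lemma only needs one valid decomposition of $G$ itself (not a full recursion), I expect the triangle case to reduce to the parallel-join case by choosing a different chord (an internal node of $\mathcal{T}$ of degree $\ge 2$ always exists when $\mathcal{T}$ has $\ge 3$ nodes, and when $\mathcal{T}$ has exactly $2$ nodes $G$ is exactly two cycles sharing an edge, which is a parallel join of two cycles).

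The main obstacle I anticipate is exactly the triangular-gluing case and making sure the three factors are genuine obstructions and genuinely outerplanar: I must rule out ``degenerate'' gluings where one factor is a single edge $K_2$ (which is not an obstruction for $k \ge 1$). The resolution I would aim for: a triangular gluing shows up precisely when $G$ has a triangle $T = \{a,b,c\}$ all of whose edges are chords or $C$-edges and $G - E(T)$ has exactly three connected pieces attached at $a,b,c$ — this is the outerplanar incarnation of having three ``branches'' around a central triangular face. When no such triangle exists, $\mathcal{T}$ has a bridge across which one gets a clean $2$-cut, i.e.\ a parallel join. I would formalize ``$G$ has a triangular face incident to three nontrivial parts'' versus ``$G$ has a separating pair realizing a parallel split'' via a case analysis on whether the weak dual tree has a vertex of degree $\ge 3$ all of whose three branches are nontrivial, and push the routine verifications (biconnectivity of the factors follows from the Observation after \autoref{label_minuet}; outerplanarity is inherited since subgraphs of outerplanar graphs are outerplanar; the obstruction membership of the factors is \autoref{label_asturias}) through without detailed computation.
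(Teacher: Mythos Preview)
Your strategy rests on two pillars: (i) a purely structural claim that every biconnected outerplanar non-cycle decomposes as a parallel join or triangular gluing, and (ii) a black-box appeal to \autoref{label_asturias} to certify that the factors of \emph{any} such decomposition lie in $\mathsf{obs}(\mathcal{G}^{(k)})$. Pillar (ii) fails, and this is a genuine gap.

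Here is a concrete counterexample. Take $\mathcal{G}=\mathcal{E}$ and $k=1$. The graph $G=||(K_3,K_3)$ (two triangles on $\{a,b,c\}$ and $\{d,e,f\}$ joined by the matching edges $ad,be$) lies in $\mathsf{obs}(\mathcal{E}^{(2)})$ by \autoref{label_conflictivitat}. Its outer cycle is $a\,d\,f\,e\,b\,c\,a$ with chords $ab$ and $de$. Choosing the three corner vertices $a,d,e$ exhibits $G$ as a triangular gluing $\triangle(G_1,G_2,G_3)$ with $G_1=K_2$ (the edge $ad$), $G_2=K_3$ on $\{d,f,e\}$, and $G_3$ the triangle $abc$ with a pendant $e$ at $b$. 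If \autoref{label_asturias} held for every decomposition, this would force $K_2\in\mathsf{obs}(\mathcal{E}^{(1)})=\{K_3\}$, which is false. So \autoref{label_asturias} cannot be applied to an arbitrary decomposition; it is only valid (and only used by the paper) once one already knows that each factor has $\mathsf{bed}_{\mathcal{G}}=k+1$. Your ``purely structural'' claim therefore does not suffice: the same obstruction admits bad decompositions alongside the good one, and nothing in your argument selects the good one.

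This is exactly the work the paper's proof does and that your plan skips. The paper orients the edges of the weak dual tree according to the $\mathsf{bed}_{\mathcal{G}}$-values of the two sides of each chord, shows that a bidirectional edge must exist (Claims~1--3), and then analyses the block structure on either side of the corresponding chord (Claims~4--7). In each case it \emph{first} locates pieces with $\mathsf{bed}_{\mathcal{G}}=k+1$, \emph{then} contracts the rest to obtain a minor $\Gamma'$ of $G$ that is a parallel join or triangular gluing of those pieces, invokes \autoref{label_conflictivitat}/\autoref{label_forsakenness} to get $\mathsf{bed}_{\mathcal{G}}(\Gamma')\ge k+2$, and concludes $\Gamma'=\Gamma$ by minor-minimality. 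Only at that point, with the factors already known to have the right parameter value, does \autoref{label_asturias} finish the job. The orientation argument is not an optional nicety; it is the mechanism that singles out a decomposition whose factors are guaranteed to be obstructions. Your plan needs an analogue of this step.
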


Before we begin the proof of \autoref{label_desencantado} we need a series of  definitions.

Let $A$ be a subset of the plane $\mathbb{R}^2$. We define ${\bf int}(A)$ to be the interior of $A$, $\mathbf{cl}(A)$ its closure and ${\bf bd}(A) = \mathbf{cl}(A) \setminus {\bf int}(A)$ its border.
Given a plane graph $\Gamma$ (that is a graph embedded in $\mathbb{R}^2$), we denote its {\em faces} by $F(\Gamma)$, that is, $F(\Gamma)$
is the set of the connected components of $\mathbb{R}^{2}\setminus \Gamma$  (in the operation $\mathbb{R}^{2}\setminus \Gamma$ we treat 
$\Gamma$ as the set of points of $\mathbb{R}^2$ corresponding to its vertices and its edges). Observe that $\mathbb{R}^{2}\setminus \Gamma$ contains exactly one unbounded face, 
which we call {\em outer} face and denote it by $f_{o}$. All other faces are called {\em inner} faces.
For every $f \in F(\Gamma)$ we denote by 
$B_{\Gamma}(f)$ the graph induced by the vertices and edges of $\Gamma$ whose embeddings are subsets of ${\bf bd}(f)$ and we call it the {\em boundary} of $f$.

Let $\Gamma$ be a fixed outerplanar embedding of an outerplanar graph $G$. Thus, all vertices of $G$ belong to $B_{\Gamma}(f_{o})$.
Let $\Gamma^{*}$ be the graph obtained from $\Gamma$ in the following way. Its vertex set is
the set 
$$V(\Gamma^{*})=\{v_{f}\mid f \in F(\Gamma)\setminus f_{o}\}\cup\{v_{e}\mid e\in E(B_{\Gamma}(f_{o}))\}.$$
That is, $\Gamma^{*}$ contains a vertex for every inner face of $\Gamma$ and a vertex for every edge of $\Gamma$ that belongs to the graph induced by the boundary of its outer face.
Moreover, its edge set is 
$$E(\Gamma^{*})=\{v_{f_{1}}v_{f_{2}}\mid f_{1}\neq f_{2}\text{~and~}E(B_{\Gamma}(f_{1}))\cap E(B_{\Gamma}(f_{2}))\neq\emptyset\}\cup \{v_{f}v_{e}\mid  e\in E(B_{\Gamma}(f))\},$$
that is, two vertices are connected by an edge if one of the two following holds: Either both vertices correspond to distinct inner faces whose boundary graphs share an edge or one of the vertices corresponds to an inner face that shares an edge with the outer face and the other vertex corresponds to that edge.
We call an edge of $\Gamma^{*}$ that contains a vertex $v_{e}$, for some $e\in E(\Gamma)$ (and in particular $e\in B_{\Gamma}(f_{o})$), {\em marginal}. Otherwise, we call it {\em internal}. 
Finally, we call $\Gamma^{*}$ the {\em weak dual} of $\Gamma$.
The parameter $\mathsf{bed}_{\mathcal{G}}$ on embedded graphs $\Gamma$ is defined as the parameter $\mathsf{bed}_{\mathcal{G}}$ on the underlying combinatorial graph $G$.

The following observation is folklore and we skip its proof.
\begin{observation}\label{label_entredosaguas}
If $\Gamma$ is an outerplanar embedding of a graph then $\Gamma^{*}$ is a tree. Moreover, all of its leaves belong to marginal edges and each marginal edge contains a leaf of $T$.
\end{observation}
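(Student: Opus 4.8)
The plan is first to make explicit the standing hypothesis under which the statement is true, namely that $\Gamma$ is a $2$-connected outerplanar embedding; this is the case that matters for \autoref{label_desencantado}, since there the host graph is an obstruction and hence biconnected by \autoref{label_constitutional}.(1). (Without $2$-connectivity $\Gamma^*$ need not even be connected --- e.g.\ two triangles sharing a vertex.) Under this hypothesis I would record the standard planar facts it entails: all vertices of $G$ lie on the outer cycle $C$, which is a Hamiltonian cycle bounding the closed disk $D$ enclosed by $\Gamma$; the boundary $B_\Gamma(f)$ of every inner face $f$ is a cycle of length at least three; any two distinct inner faces share at most one edge of $\Gamma$ on their boundaries, and such a shared edge is necessarily a chord, i.e.\ not an edge of $B_\Gamma(f_o)$; and every edge of $B_\Gamma(f_o)$ lies on the boundary of exactly one inner face. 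Writing $n=|V(G)|$ and $m=|E(G)|$, Euler's formula then yields $m-n+1$ inner faces, so that $C$ consists of $n$ outer edges and $G$ has $m-n$ chords.

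Next I would count the vertices and edges of $\Gamma^*$. From the above, $|V(\Gamma^*)|=(m-n+1)+n=m+1$. For the edges: each outer edge $e$ borders a unique inner face $f$ and so contributes the unique marginal edge $v_ev_f$, giving $n$ marginal edges; and each chord is shared by exactly two inner faces while two faces share at most one chord, so the chords are in bijection with the internal edges of $\Gamma^*$, giving $m-n$ of them. Hence $|E(\Gamma^*)|=m=|V(\Gamma^*)|-1$, so it suffices to prove that $\Gamma^*$ is connected.

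For connectivity I would induct on the number of chords of $G$. If $G=C$ has no chords, there is one inner face $f$ and $\Gamma^*$ is the star centred at $v_f$ with leaves $\{v_e\mid e\in E(C)\}$. Otherwise pick a chord $c$, with incident inner faces $f_1\neq f_2$, and set $\Gamma'=\Gamma\setminus c$: this is again a $2$-connected outerplanar embedding (the outer Hamiltonian cycle is unchanged and no chord becomes an outer edge), in which $f_1$ and $f_2$ merge into one inner face $f$ while the outer edges and the other chords are untouched. Comparing the definitions, $\Gamma^*$ is obtained from $(\Gamma')^*$ by splitting $v_f$ into the two adjacent vertices $v_{f_1}$ and $v_{f_2}$, distributing the former neighbours of $v_f$ between them according to which of $B_\Gamma(f_1)$, $B_\Gamma(f_2)$ the corresponding edge of $G$ belongs to. Such a vertex split preserves connectivity and adds exactly one vertex and one edge, so it carries a tree to a tree; by the induction hypothesis $(\Gamma')^*$ is a tree, hence so is $\Gamma^*$. (Alternatively one can argue connectivity topologically: the closures of the inner faces cover the disk $D$, any two of them meeting in the interior of $D$ do so along a chord, so the face-adjacency graph cannot split the interior of $D$ into two open pieces; and every $v_e$ is adjacent to the inner face it borders.)

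It remains to check the two statements about leaves, which need no induction. For an outer edge $e$, $v_e$ is adjacent in $\Gamma^*$ only to the unique inner face bordering $e$, so $\deg_{\Gamma^*}(v_e)=1$; hence $v_e$ is a leaf, and since its unique incident edge $v_ev_f$ is marginal, every marginal edge contains a leaf. Conversely, the neighbours of $v_f$ are one $v_e$ for each outer edge on $B_\Gamma(f)$ and one $v_{f'}$ for each chord on $B_\Gamma(f)$, so $\deg_{\Gamma^*}(v_f)=|E(B_\Gamma(f))|\ge 3$; thus no vertex $v_f$ is a leaf, every leaf of $\Gamma^*$ is some $v_e$, and it belongs to the marginal edge $v_ev_f$. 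The only genuinely delicate ingredients are the elementary planar facts of the first paragraph --- especially the tacit use of $2$-connectivity --- and the vertex-split (or topological) argument establishing connectivity; the rest is bookkeeping with Euler's formula.
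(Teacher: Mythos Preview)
Your proof is correct. The paper does not prove this observation at all --- it labels it folklore and skips the proof --- so there is nothing to compare against; you have simply supplied what the paper omits, and in doing so you correctly flag the tacit $2$-connectivity hypothesis (which the paper relies on implicitly, invoking the observation only after noting via \autoref{label_constitutional}.(1) that the host graph is biconnected).
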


Let $e=v_{f_{1}}v_{f_{2}}$ be an internal edge of $\Gamma^{*}$. Let $e_{f_{1},f_{2}}$ denote the edge in $E(B_{\Gamma}(f_{1}))\cap E(B_{\Gamma}(f_{2}))$. 
By construction, $e_{f_{1},f_{2}}\notin E(B_{\Gamma}(f_{o}))$.
This implies that the endpoints of $e_{f_{1},f_{2}}$ form a separator of $\Gamma$. Let $\Gamma_{e,f_{1}}',\Gamma_{e,f_{2}}'$ be the connected components of $\Gamma\setminus e_{f_{1},f_{2}}$ 
such that $\Gamma_{e,f_{i}}'\cap B_{\Gamma}(f_{i})\neq\emptyset$ (here, we interpret $e_{f_{1},f_{2}}$
as the vertex set containing the endpoints of the edge $e_{f_{1},f_{2}}$).
We denote by $\Gamma_{e,f_{i}}$ the embedded graph induced by $V(\Gamma_{f_{i}}')\cup e_{f_{1},f_{2}}$ (where, again, we interpret $e_{f_{1},f_{2}}$
as the vertex set containing the endpoints of the edge $e_{f_{1},f_{2}}$).
\smallskip

We now proceed with the proof of \autoref{label_desencantado}.

\begin{proof}[Proof of \autoref{label_desencantado}]
Let $G$ be an outerplanar graph such that $G\in \mathsf{obs}(\mathcal{G}^{(k+1)})$. Since $G\in \mathsf{obs}(\mathcal{G}^{(k+1)})$, from~\autoref{label_constitutional}.(1), $G$ is biconnected and thus has a unique 
outerplanar embedding on the plane. We denote its unique embedding by $\Gamma$.
Let $\Gamma^{*}$ be the weak dual of $\Gamma$. From \autoref{label_entredosaguas}, $\Gamma^{*}$ is a tree. 
We orient the edges of $\Gamma^{*}$ in the following way. The marginal edges are oriented away from their incident leaf.
Let $e=v_{f_{1}}v_{f_{2}}$ be an internal edge of $\Gamma^{*}$. If $\mathsf{bed}_{\mathcal{G}}(\Gamma_{f_{1}})>\mathsf{bed}_{\mathcal{G}}(\Gamma_{f_{2}})$, 
we orient the edge towards $v_{f_{1}}$. 
Symmetrically, if $\mathsf{bed}_{\mathcal{G}}(\Gamma_{f_{2}})>\mathsf{bed}_{\mathcal{G}}(\Gamma_{f_{1}})$, we orient the edge towards $v_{f_{2}}$. We call these edges unidirectional.
Otherwise, we orient the edge in both directions and call it bidirectional.

We will use the oriented tree to prove that $G$ can be decomposed in one of the two ways stated in the lemma. Towards this, we prove the following claims.\medskip

\noindent{\em Claim 1:} For every internal edge $e=v_{f_{1}}v_{f_{2}}$, it holds that $\mathsf{bed}_{\mathcal{G}}(\Gamma_{e,f_{i}})\leq k+1$, $i\in [2]$. Moreover,
if $e$ is bidirectional then $\mathsf{bed}_{\mathcal{G}}(\Gamma_{e,f_{1}})=\mathsf{bed}_{\mathcal{G}}(\Gamma_{e,f_{2}})= k+1$
and if $e$ is unidirectional oriented from $v_{f_{1}}$ to $v_{f_{2}}$ then $\mathsf{bed}_{\mathcal{G}}(\Gamma_{e,f_{1}})\leq k$.\medskip

\noindent{\em Proof of Claim 1:} Indeed, both statements follow from the facts that $\Gamma_{f_{i}}$ is a proper subgraph of $G$ and $G\in\mathsf{obs}(\mathcal{G}^{(k+1)})$.
The first statement is straightforward. For the second statement, let us assume first that $\mathsf{bed}_{\mathcal{G}}(\Gamma_{e,f_{1}})=\mathsf{bed}_{\mathcal{G}}(\Gamma_{e,f_{2}})\leq k$.
Let also $uv$ be the common edge of the graphs $\Gamma_{e,f_{1}}$ and $\Gamma_{e,f_{2}}$.
Observe then that if $H$ is a block of the graph $G\setminus u$ then $H$ is a subgraph of one of the two graphs $\Gamma_{e,f_{i}}\setminus u$, $i\in [2]$. This implies that
$\mathsf{bed}_{\mathcal{G}}(H)\leq \mathsf{bed}_{\mathcal{G}}(\Gamma_{e,f_{i}}\setminus u) \leq \mathsf{bed}_{\mathcal{G}}(\Gamma_{e,f_{i}}) \leq k$, for some $i\in [2]$.
Hence, by definition, we get that $\mathsf{bed}_{\mathcal{G}}(G)\leq k+1$, a contradiction to the hypothesis that $G\in\mathsf{obs}(\mathcal{G}^{(k+1)})$.\hfill$\diamond$\medskip

\noindent{\em Claim 2:} There does not exists a vertex of $\Gamma^{*}$ incident to two distinct edges, such that both of them are oriented away from it and
at least one of them is unidirectional.\medskip

\noindent{\em Proof of Claim 2:} Indeed, let us assume that such a vertex exists and let $e_{1}$ and $e_{2}$ be two edges
oriented away from it and without loss of generality let $e_{1}$ be the edge that is unidirectional. 
Notice that the assumed vertex is an internal vertex of the tree $\Gamma^{*}$. We will denote it by $v_{f}$. Moreover, by definition of the orientations, 
the two distinct endpoints of $e_{1}$ and $e_{2}$ are also internal vertices of the tree. We denote them by $v_{f_{1}}$ and $v_{f_{2}}$, respectively. 
Notice that $\Gamma_{e_{2},f_{2}}$ is a subgraph of $\Gamma_{e_{1},f}$ and that  $\Gamma_{e_{1},f_{1}}$ is a subgraph of $\Gamma_{e_{2},f}$. Therefore,
$\mathsf{bed}_{\mathcal{G}}(\Gamma_{e_{2},f_{2}})\leq \mathsf{bed}_{\mathcal{G}}(\Gamma_{e_{1},f})$ and $\mathsf{bed}_{\mathcal{G}}(\Gamma_{e_{1},f_{1}}) \leq \mathsf{bed}_{\mathcal{G}}(\Gamma_{e_{2},f})$. Moreover, $\mathsf{bed}_{\mathcal{G}}(\Gamma_{e_{1},f}) < \mathsf{bed}_{\mathcal{G}}(\Gamma_{e_{1},f_{1}})$, since $e_{1}$ is uniquely oriented towards $f_{1}$.
This implies that $\mathsf{bed}_{\mathcal{G}}(\Gamma_{e_{2},f_{2}})< \mathsf{bed}_{\mathcal{G}}(\Gamma_{e_{2},f})$, a contradiction to the assumption that $e_{2}$ is oriented towards $f_{2}$.
This completes the proof of the claim.\hfill$\diamond$\medskip

\noindent{\em Claim 3:} There exists a bidirectional edge in $\Gamma^{*}$ (by construction, this edge is internal). \medskip

\noindent{\em Proof of Claim 3}: Assume, towards a contradiction, that all edges of $\Gamma^{*}$ have a unique direction. Then Claim 2 implies that there exists a unique vertex in 
$\Gamma^{*}$ that is a sink, that is, all edges are oriented towards it. It follows that this vertex is internal. Let us denote it by $v_{f}$. Let us denote by $e_{i}$ denote the internal edges incident 
to $v_{f}$ and $v_{f_{i}}$ denote their other endpoints, $i\in [x]$, where by $x$ we denote the number of internal edges incident to $v_{f}$. Finally, let $u \in B_{\Gamma^{*}}(f)$, 
and notice that if $H$ is a block of $\Gamma^{*}\setminus u$ then $H\subseteq \Gamma_{e_{i},f_{i}}\setminus u\subseteq \Gamma_{e_{i},f_{i}}$ for some $i\in [x]$.
From Claim 1, we obtain that $\mathsf{bed}_{\mathcal{G}}(H)\leq \mathsf{bed}_{\mathcal{G}}(\Gamma_{e_{i},f_{i}})\leq k$, for some $i\in [x]$. From the definition of 
$\mathsf{bed}_{\mathcal{G}}$ we obtain that $\mathsf{bed}_{\mathcal{G}}(\Gamma)\leq k+1$, a contradiction to the hypothesis that $G\in\mathsf{obs}(\mathcal{G}^{(k+1)})$.\hfill$\diamond$\medskip

\begin{figure}[!h]
	\centering
	\begin{subfigure}{0.3\textwidth}
	\centering
\begin{tikzpicture}[scale=0.5]

\pgfdeclarelayer{background}
\pgfdeclarelayer{foreground}
\pgfsetlayers{background,main,foreground}

\node[v:main] () at (0,6){};
\node[v:ghost] () at (0,6.4){$v$};

\node[v:main] () at (0,0){};
\node[v:ghost] () at (0,-0.4){$u$};

\node[v:main] () at (-2,5){};
\node[v:ghost] () at (-2.4,5.4){$z_{1}$};

\node[v:main] () at (2,5){}; 
\node[v:ghost] () at (2.55,5.2){$w_{1}$};

\node[v:main] () at (-3,3){};
\node[v:ghost] () at (-3.5,3){$z_{2}$};

\node[v:main] () at (3,3){};
\node[v:ghost] () at (3.6,3){$w_{2}$};

\node[v:main] () at (-2,1){};
\node[v:ghost] () at (-2.35,0.65){$z_{p}$};

\node[v:main] () at (2,1){}; 
\node[v:ghost] () at (2.45,0.65){$w_{q}$};

\node[v:ghost] () at (-1.3,3){$\Gamma_{e,f_{1}}$};
\node[v:ghost] () at (1.3,3){$\Gamma_{e,f_{2}}$};

\begin{pgfonlayer}{background}

\draw[e:main,bend left,fill=blue!40] (0,6) to (-2,5) to (0,6){};

\draw[e:main,bend left,fill=blue!40] (0,6) to (2,5) to (0,6){};

\draw[e:main, bend left,fill=blue!40] (-2,5) to (-3,3) to (-2,5){};

\draw[e:main, bend left,fill=blue!40] (2,5) to (3,3) to (2,5){};

\draw[e:main,dashed,bend left] (3,3) to (2,1){}; 

\draw[e:main,dashed,bend right] (-3,3) to (-2,1){}; 

\draw[e:main,bend left,fill=blue!40] (0,0) to (-2,1) to (0,0){};

\draw[e:main,bend left,fill=blue!40] (0,0) to (2,1) to (0,0){};

\draw[e:main] (0,0) to (0,6){};

\end{pgfonlayer}
\end{tikzpicture}
\end{subfigure}
\begin{subfigure}{0.3\textwidth}
\centering
\begin{tikzpicture}[scale=0.5]

\pgfdeclarelayer{background}
\pgfdeclarelayer{foreground}
\pgfsetlayers{background,main,foreground}

\node[v:main] () at (0,6){};
\node[v:ghost] () at (0,6.4){$v$};

\node[v:main] () at (0,0){};
\node[v:ghost] () at (0,-0.4){$u$};


\node[v:main] () at (2,5){}; 
\node[v:ghost] () at (2.55,5.2){$w_{1}$};

\node[v:main] () at (-2,5){};
\node[v:main] () at (-2,1){};

\node[v:main] () at (3,3){};
\node[v:ghost] () at (3.6,3){$w_{2}$};


\node[v:main] () at (2,1){}; 
\node[v:ghost] () at (2.45,0.65){$w_{q}$};

\node[v:ghost] () at (1.3,3){$\Gamma_{e,f_{2}}$};

\begin{pgfonlayer}{background}

\draw[e:main] (-2,5) to (0,6){};

\draw[e:main,bend left,fill=blue!40] (0,6) to (2,5) to (0,6){};


\draw[e:main,bend left,fill=blue!40] (-2,1) to (-2,5) to (-2,1){};

\draw[e:main, bend left,fill=blue!40] (2,5) to (3,3) to (2,5){};

\draw[e:main,dashed,bend left] (3,3) to (2,1){}; 

%

\draw[e:main] (-2,1) to (0,0){};
\draw[e:main,bend left,fill=blue!40] (0,0) to (2,1) to (0,0){};

\draw[e:main] (0,0) to (0,6){};

\end{pgfonlayer}
\end{tikzpicture}
\end{subfigure}
\begin{subfigure}{0.3\textwidth}
\centering
\begin{tikzpicture}[scale=0.5]

\pgfdeclarelayer{background}
\pgfdeclarelayer{foreground}
\pgfsetlayers{background,main,foreground}

\node[v:main] () at (0,6){};
\node[v:ghost] () at (0,6.4){$v$};

\node[v:main] () at (0,0){};
\node[v:ghost] () at (0,-0.4){$u$};


\node[v:main] () at (2,5){}; 
\node[v:ghost] () at (2.55,5.2){$w_{1}$};


\node[v:main]() at (-3,3){};

\node[v:main] () at (3,3){};
\node[v:ghost] () at (3.6,3){$w_{2}$};


\node[v:main] () at (2,1){}; 
\node[v:ghost] () at (2.45,0.65){$w_{q}$};

\node[v:ghost] () at (1.3,3){$\Gamma_{e,f_{2}}$};

\begin{pgfonlayer}{background}


\draw[e:main,bend left,fill=blue!40] (0,6) to (2,5) to (0,6){};


\draw[e:main, bend left,fill=blue!40] (2,5) to (3,3) to (2,5){};

\draw[e:main,dashed,bend left] (3,3) to (2,1){}; 

%
\draw[e:main, bend left,fill=blue!40] (0,0) to (-3,3) to (0,0){};

\draw[e:main, bend left,fill=blue!40] (0,6) to (-3,3) to (0,6){};

\draw[e:main,bend left,fill=blue!40] (0,0) to (2,1) to (0,0){};

\draw[e:main] (0,0) to (0,6){};

\end{pgfonlayer}
\end{tikzpicture}
\end{subfigure}
\caption{In the left figure, the edge $e'=\{u,v\}$, the subgraphs $\Gamma_{e,f_{1}}$ and $\Gamma_{e,f_{2}}$ and the cut-vertices are depicted. In the central figure we see the form of the obstruction in Claims 5 and 6. In the right figure we see the form of the obstruction in Claim 7.}
\label{label_bouree}
\end{figure}
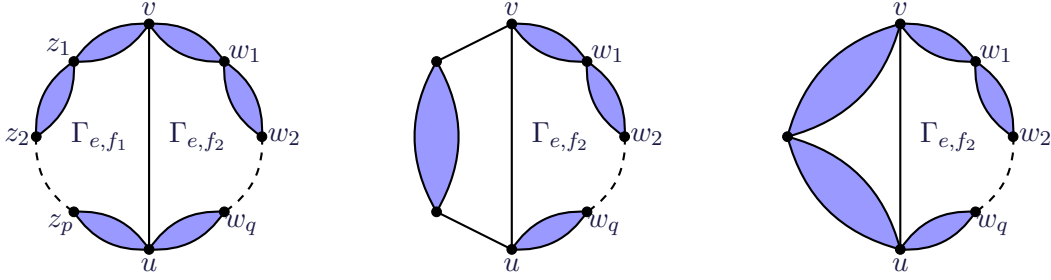

We consider an internal bidirectional edge, say $e=\{v_{f_{1}},v_{f_{2}}\}$. Let $e'=\{u,v\}$ be the edge in $E(B_{\Gamma}(f_{1}))\cap E(B_{\Gamma}(f_{2}))$. 
Observe that $e'$ belongs to the outer face of the graph $\Gamma_{e,f_{i}}$ and hence, $\Gamma_{e,f_{i}}\setminus e'$ {is not}, $i\in [2]$.
Recall that $\Gamma_{e,f_{1}}$ is outerplanar and let $z_{1},z_{2},\dots, z_{p}$, $p\geq 1$, denote the cut-vertices of $\Gamma_{e,f_{1}}\setminus e'$ according to the order they appear on the outer face of $\Gamma_{e,f_{1}}\setminus e'$ when traversing it from $v$ to $u$. Let also $\Gamma^{1}_{i}$, $i\in [p+1]$, denote the blocks that contain the vertices $z_{i-1}$ and $z_{i}$, where $z_{0}=v$ and $z_{p+1}=u$.
Similarly, let also $w_{1},w_{2},w_{q}$, $q\geq 1$ denote the cut-vertices of $\Gamma_{e,f_{2}}\setminus e'$ according to the order they appear on the outer face of $\Gamma_{e,f_{2}}\setminus \{u,v\}$ when traversing it from $v$ to $u$. Let also $\Gamma^{2}_{i}$, $i\in [q+1]$, 
denote the blocks that contain the vertices $w_{i-1}$ and $w_{i}$, where $w_{0}=v$ and $w_{p+1}=u$. The blocks $\Gamma^{1}_{i},\Gamma^{2}_{j}$, $i\in [p+1]$, $j\in [q+1]$ that do not contain $v$ or $u$ are called \emph{free}. (See \autoref{label_bouree})\medskip

\noindent{\em Claim 4}: It holds that $\mathsf{bed}_{\mathcal{G}}(H)=k+1$ for some $H\in \{\Gamma^{1}_{i},\Gamma^{2}_{j}\mid i\in [p+1], j\in [q+1]\}$.\medskip

\noindent{\em Proof of Claim 4:} Towards a contradiction assume that $\mathsf{bed}_{\mathcal{G}}(H) \leq k$ for every $H\in \{\Gamma^{1}_{i},\Gamma^{2}_{j}\mid i\in [p+1], j\in [q+1]\}$. Notice that every block $B$ of $\Gamma\setminus u$ is a subgraph of some block 
$\Gamma^{1}_{i}$, $i\in [p+1]$, or $\Gamma^{2}_{j}$, $j\in [q+1]$, and therefore, $\mathsf{bed}_{\mathcal{G}}(B)\leq k$. 
Then, from the definition of $\mathsf{bed}_{\mathcal{G}}$, $\mathsf{bed}_{\mathcal{G}}(\Gamma)\leq k+1$, a contradiction to the hypothesis
that $\Gamma \in \mathsf{obs}(\mathcal{G}^{(k+1)})$.\hfill$\diamond$\medskip

\noindent{\em Claim 5:} If there exists a free block $H\in \{\Gamma^{1}_{i},\Gamma^{2}_{j}\mid i\in [p+1], j\in [q+1]\}$ such that $\mathsf{bed}_{\mathcal{G}}(H)=k+1$ then the Lemma holds.\medskip

\noindent{\em Proof of Claim 5:} Indeed without loss of generality let $H=\Gamma^{1}_{i_{0}}$ for some $i_{0}\in [p+1]$.
Let us consider the graph obtained by contracting all vertices of $\Gamma_{e,f_{1}}$ except for $V(\Gamma^{1}_{i_{0}})$ and $\{u,v\}$. Observe then that the resulting graph $\Gamma'$ can be expressed as the parallel join of $\Gamma^{1}_{i_{0}}$ and $\mathsf{bed}_{\mathcal{G}}(\Gamma_{e,f_{2}})$
in the following way: $\Gamma'=||(\Gamma^{1}_{i_{0}},z_{i_{0}-1},z_{i_{0}},\Gamma_{e,f_{2}},v,u)$. From Claim 1,
since $e$ is bidirectional, we obtain that $\mathsf{bed}_{\mathcal{G}}(\Gamma_{e,f_{2}})=k+1$. Moreover, since $\mathsf{bed}_{\mathcal{G}}(\Gamma^{1}_{i_{0}})=k+1$, from \autoref{label_conflictivitat}, we obtain
that $\mathsf{bed}_{\mathcal{G}}(\Gamma')=k+2$. As $\Gamma \in \mathbf{obs}(\mathcal{G}^{(k+1)})$ and $\Gamma'$ is a minor of $\Gamma$, it follows that $\Gamma'=\Gamma$. From \autoref{label_asturias}, $\Gamma^{1}_{i_{0}}$, $\Gamma_{e,f_{2}}\in \mathbf{obs}(\mathcal{G}^{(k)})$ and
this indeed proves the statement of the Lemma.\hfill$\diamond$ \medskip

Therefore, from now on we will assume that if $H\in \{\Gamma^{1}_{i},\Gamma^{2}_{j}\mid i\in [p+1], j\in [q+1]\}$ and $\mathsf{bed}_{\mathcal{G}}(H)=k+1$ then $H$ is not free, that is, $H$ contains $u$ or $v$.\medskip

\noindent{\em Claim 6:} If all blocks $H\in \{\Gamma^{1}_{i},\Gamma^{2}_{j}\mid i\in [p+1], j\in [q+1]\}$ for which $\mathsf{bed}_{\mathcal{G}}(H)=k+1$ contain $v$, then the Lemma holds. Symmetrically, if all blocks $H\in \{\Gamma^{1}_{i},\Gamma^{2}_{j}\mid i\in [p+1], j\in [q+1]\}$ for which $\mathsf{bed}_{\mathcal{G}}(H)=k+1$ contain $u$, then the Lemma holds.\medskip

\noindent{\em Proof of Claim 6:} Observe that the only blocks that contain $v$ are $\Gamma^{1}_{1}$ and $\Gamma^{2}_{1}$. Moreover, for any other block $B$ of $\Gamma\setminus u$, it holds that 
$\mathsf{bed}_{\mathcal{G}}(B)\leq k$. Since $\Gamma \in\mathsf{obs}(\mathcal{G}^{(k+1)})$, it follows that $\mathsf{bed}_{\mathcal{G}}(\Gamma \setminus u)=k+1$. From the above discussion
and the definition of $\mathsf{bed}_{\mathcal{G}}$, we obtain that there exists a block $D$ of $\Gamma \setminus u$ for which $\mathsf{bed}_{\mathcal{G}}(D)=k+1$ and 
$D\subseteq \Gamma^{1}_{1}\setminus v$ or $D\subseteq\Gamma^{2}_{1}\setminus v$. Without loss of generality let us assume that $D\subseteq \Gamma^{1}_{1}\setminus v$. (The case where 
$D\subseteq\Gamma^{2}_{1}\setminus v$ is symmetrical). Let  $v_{1}$ denote the neighbor of  $v$ in $\Gamma_{e_{f_{1}}}$ that is also a neighbor of $v$ in the graph $B_{\Gamma}(f_{o})$. 
Observe also that $u\notin V(D)$. Let $\Gamma'$ be the graph obtained after we contract all blocks of $\Gamma_{e,{f_{1}}}$ except for the block $\Gamma^{1}_{1}$ to the edge $z_{1}u$
and remove all edges that contain $v$ in $E(\Gamma^{1}_{1})$ apart from the edge $v_{1}v$. Then $\Gamma'$ can be expressed as the parallel join of the graphs $\Gamma^{1}_{1}\setminus v$
and $\Gamma_{e,f_{2}}$ in the following way: $\Gamma'=||(\Gamma^{1}_{1},v_{1},z_{1},\Gamma_{e,f_{2}},v,u)$. From \autoref{label_conflictivitat}, we obtain that 
$\mathsf{bed}_{\mathcal{G}}(\Gamma')=k+2$. As $\Gamma'$ is a minor of $\Gamma$ and $\Gamma\in \mathsf{obs}(\mathcal{G}^{(k+1)})$, it follows that $\Gamma'=\Gamma$. Moreover,
 from \autoref{label_asturias}, $\Gamma^{1}_{1}\setminus v$ and $\Gamma_{e,f_{2}}$ belong to $\mathsf{obs}(\mathcal{G}^{(k)})$. Then indeed the Lemma holds and this concludes the proof of the claim.\hfill$\diamond$\medskip

\noindent{\em Claim 7:} If there exist two blocks $H,H'\in \{\Gamma^{1}_{i},\Gamma^{2}_{j}\mid i\in [p+1], j\in [q+1]\}$ such that $H$ contains $v$, $H'$ contains $u$, and $\mathsf{bed_{\mathcal{G}}}(H)=\mathsf{bed}_{\mathcal{G}}(H')=k+1$  then the Lemma holds. \medskip

\noindent{\em Proof of Claim 7:} We first examine the case where $H=\Gamma^{1}_{1}$ and $H'=\Gamma^{1}_{p+1}$. (The case where $H=\Gamma^{2}_{1}$ and $H'=\Gamma^{2}_{q+1}$ is symmetrical.)
Let $\Gamma'$ be the graph obtained from $\Gamma$ after contracting the vertices of all blocks $\Gamma^{1}_{2},\dots, \Gamma^{1}_{p}$ into a new vertex $y$. Observe that $\Gamma'$ can be expressed as triangular gluing of $\Gamma^{1}_{1}$, $\Gamma^{1}_{p+1}$, and $\Gamma_{e,f_{2}}$
in the following way: $\Gamma'=\triangle(\Gamma^{1}_{1},v,y,\Gamma^{1}_{p+1},y,u,\Gamma_{e,f_{2}},u,v)$. From Claim 1, we obtain that $\mathsf{bed}_{\mathcal{G}}(\Gamma_{e,f_{2}})=k+1$. Hence, from \autoref{label_forsakenness}, it follows that $\mathsf{bed}_{\mathcal{G}}(\Gamma')≥k+2$.
As $\Gamma \in \mathbf{obs}(\mathcal{G}^{(k+1)})$ and $\Gamma'$ is a minor of $\Gamma$, it follows that $\Gamma'=\Gamma$. From \autoref{label_asturias}, we obtain that the graphs $\Gamma^{1}_{1}$, $\Gamma^{1}_{p+1}$, and $\Gamma_{e,f_{2}}$ belong to $\mathbf{obs}(\mathcal{G}^{(k)})$. In this
case we have proven the assertion of the Lemma.

We now examine the case where $H=\Gamma^{1}_{1}$ and $H'=\Gamma^{2}_{q+1}$. (The case where $H=\Gamma^{2}_{1}$ and $H'=\Gamma^{1}_{p+1}$ is symmetrical.) Let $\Gamma'$ be the graph obtained from $\Gamma$ after contracting the edges of all blocks 
$\Gamma^{1}_{2},\dots, \Gamma^{1}_{p+1}$ into the single edge $z_{1}u$ and the edges of all blocks $\Gamma^{2}_{1},\dots, \Gamma^{1}_{q}$ except $w_{q}v$ and finally removing the edge $e'=uv$. Observe that $\Gamma'$ can be expressed as the parallel join
of $\Gamma^{1}_{1}$ and $\Gamma^{2}_{q+1}$ in the following way: $\Gamma'=||(\Gamma^{1}_{1},v,z_{1},\Gamma^{2}_{q+1},w_{q},u)$. Observe that $\Gamma'$ is a proper minor of $\Gamma$ and from \autoref{label_conflictivitat}, $\mathsf{bed}_{\mathcal{G}}(\Gamma')≥k+2$. This is a contradiction
to the hypothesis that $\Gamma\in \mathsf{obs}(\mathcal{G}^{(k+1)})$. This concludes the proof of the claim.\hfill$\diamond$\medskip

The above claims complete the proof of the Lemma.
\end{proof}

\autoref{label_conflictivitat}, \autoref{label_forsakenness}, and \autoref{label_desencantado} the following.

\begin{theorem}
\label{dg_oplo}
For every non-trivial minor-closed class ${\cal G}$ and every $k\in \mathbb{N}$, 
every outerplanar graph in ${\sf obs}({\cal G}^{(k+1)})$ can be generated by applying 
either the parallel join or the triangular gluing operation to outerplanar graphs of ${\sf obs}({\cal G}^{(k)})$ in a way that preserves outerplanarity.
\end{theorem}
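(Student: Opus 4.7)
The plan is to observe that Theorem~\ref{dg_oplo} follows as an immediate consequence of the three preceding results, \autoref{label_conflictivitat}, \autoref{label_forsakenness}, and \autoref{label_desencantado}, and so the proof reduces to chaining them together.

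First I would take an arbitrary outerplanar graph $G \in \mathsf{obs}(\mathcal{G}^{(k+1)})$ and apply \autoref{label_desencantado} to obtain either two graphs $G_1, G_2 \in \mathsf{obs}(\mathcal{G}^{(k)}) \cap \mathcal{O}$ with attachment pairs giving $G = \mathord{||}(G_1, v_1^1, v_2^1, G_2, v_1^2, v_2^2)$, or three graphs $G_1, G_2, G_3 \in \mathsf{obs}(\mathcal{G}^{(k)}) \cap \mathcal{O}$ with attachment pairs giving $G = \triangle(G_1, v_1^1, v_2^1, G_2, v_1^2, v_2^2, G_3, v_1^3, v_2^3)$. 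This is precisely the generation scheme asserted by the theorem. For $k=0$ a separate small check is needed to confirm that the base level also lies in the generation scheme; since \autoref{label_desencantado} is stated for $k \in \mathbb{N}_{\geq 1}$, the $k=0$ instance of the theorem is essentially trivial because $\mathsf{obs}(\mathcal{G}^{(1)})$ is generated on top of $\mathsf{obs}(\mathcal{G}^{(0)})$ by the very same argument once we have at least one outerplanar obstruction to start from.

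Second, I would invoke \autoref{label_conflictivitat} and \autoref{label_forsakenness} to note that the converse also holds: any parallel join or triangular gluing applied to members of $\mathsf{obs}(\mathcal{G}^{(k)})$ does produce a member of $\mathsf{obs}(\mathcal{G}^{(k+1)})$, so the scheme is sound as well as complete on the outerplanar side. Preservation of outerplanarity is automatic, since $G$ is biconnected by \autoref{label_constitutional}(1) and outerplanar, and therefore admits a unique outerplanar embedding $\Gamma$; the decomposition provided by \autoref{label_desencantado} cuts $\Gamma$ along an edge $e'=uv$ shared by two inner faces (or identifies along such edges), so each piece $G_i$ inherits an outerplanar embedding from $\Gamma$, and the parallel join or triangular gluing reassembles these pieces in the plane producing back the outerplanar graph $G$.

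I do not foresee any substantive obstacle: the entire structural content — the existence of a bidirectional edge in the weak dual, the case analysis over which free or attached blocks attain depth $k+1$, and the identification of the attachment vertices $v_j^i$ on the outer face — has already been carried out inside \autoref{label_desencantado}. The theorem is thus a clean packaging of the three lemmas, and the only care to take in writing the proof is to invoke \autoref{label_desencantado} for the decomposition and to cite \autoref{label_conflictivitat} and \autoref{label_forsakenness} for the soundness of the operations, noting in passing that outerplanarity of the output is witnessed by the embedding $\Gamma$ inherited from $G$.
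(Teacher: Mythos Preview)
Your proposal is correct and matches the paper's treatment: the paper states the theorem as a direct consequence of \autoref{label_conflictivitat}, \autoref{label_forsakenness}, and \autoref{label_desencantado}, with no separate proof, and you have identified exactly this derivation. One small remark: the theorem as stated asserts only the \emph{completeness} direction (every outerplanar obstruction arises this way), so strictly speaking only \autoref{label_desencantado} is needed; your invocation of \autoref{label_conflictivitat} and \autoref{label_forsakenness} for soundness is extra but harmless, and your handling of outerplanarity preservation via the inherited embedding is fine.
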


As ${\bf obs}({\cal G}^{(0)})={\sf obs}({\cal B}({\cal G}))$, \autoref{label_trivialities} and \autoref{dg_oplo} give  a complete characterization of ${\cal O}\cap{\cal G}^{(k)}$, for every 
$k\in\mathbb{N}$ and every non-trivial minor-closed graph class ${\cal G}$. It is easy to verify that for every ${\cal G}$,   there are at least two obstructions in ${\sf obs}{({\cal G}^{(3)})}$ that are generated by the triangular gluing operation.
Moreover, as the operation of trianglular gluing three graphs from a set of $q$  graphs results to $q^{2}+{q\choose 3}\geq q^{2}$ new graphs, our results imply that, for $k\geq 3$,  $|{\sf obs}{({\cal G}^{(k)})}|\geq |{\sf obs}{({\cal G}^{(k-1)})}|^2$.
It follows  that, for every non-trivial minor-closed class $\mathcal{G}$, ${\sf obs}{({\cal G}^{(k)})}$ contains doubly  exponentially many graphs. 

\section{A conjecture on the  universal obstructions}
\label{label_siberia}

Recently, Huynh et al. in~\cite{huynh2020excluding}  defined the parameter ${\sf td}_{2}$ as follows. A {\em biconnected centered coloring}
of a graph $G$ is a vertex coloring of $G$ such that for every connected subgraph $H$ of $G$
that is a block graph,  some color is assigned to {\em exactly one} vertex of $H$. 
Given a non-empty graph $G$,  ${\sf td}_{2}(G)$  is defined as the minimum number of colors
in a biconnected centered coloring of $G$. Using the alternative definition of \autoref{dr_oteros},
it can easily be verified that, for every non-empty graph $G$, ${\sf td}_{2}(G)={\sf bed}_{\cal E}(G)+1$. 
We define the {\em $t$-ladder} as the  $(2\times t)$-grid
(i.e., the Cartesian product of $K_{2}$ and a path on $t$-vertices) and we denote it by $L_{t}$.
It is easy to check that ${\sf td}_{2}(L_{t})={\rm Ω}(\log(t))$. One of the main results of~\cite{huynh2020excluding}
was that there is a function $f:\mathbb{N}\to\mathbb{N}$
such that  every graph excluding a $t$-ladder belongs to ${\cal E}^{(f(t))}$.
This implies that the $t$-ladder $L_{t}$ is a {\sl universal minor} obstruction    
for ${\sf bed}_{\cal E}$. This motivates us to make a conjecture on how the results of~\cite{huynh2020excluding}
should be extended for every non-trivial minor-closed class  ${\cal G}$: 
Given a positive $t$, we define ${\cal L}_{{\cal G},t}$ as the class containing every graph that can be 
constructed by first taking the disjoint union of two paths $P_i,i\in[2]$, with vertices $v_{1}^{i},\ldots,v_{t}^i$ (ordered the way they appear in $P_i$) and $t$ graphs $G_{1},\ldots,G_{t}$ from $\obs({\cal B}({\cal G}))$ and then, for $i\in[t]$,  identify $v_{i}^{1}$ and $v_{i}^{2}$
with two different vertices in $G_{i}$. It is easy to check that if $G\in {\cal L}_{{\cal G},t}$, then ${\sf bed}_{\cal G}(G)={\rm Ω}(\log t)$.
We conjecture that ${\cal L}_{{\cal G},t}$ is a  {\sl universal minor obstruction}
for ${\sf bed}_{\cal G}$, i.e., there is a  function $f:\mathbb{N}\to\mathbb{N}$
such that  every graph excluding all graphs in  ${\cal L}_{{\cal G},t}$ as a minor, has block elimination distance 
to ${\cal G}$ bounded by $f(t)$, i.e., 
${\sf excl}({\cal L}_{{\cal G},t})\subseteq {\cal G}^{(f(t))}$. 
Notice that the two operations of \autoref{dg_oplo} imply that, when restricted to outerplanar graphs, this conjecture is correct for $f(t)=O(t)$. However we do not believe that the linear upper bound is maintained in the general case.
\medskip

\noindent {\bf Acknowledgements:} Öznur Yaşar Diner is grateful to the 
members of the research group \mbox{\sf GAPCOMB} for hosting a research stay at 
Universitat Politècnica de Catalunya.

%
%

\end{document}